\newcommand\footnoteref[1]{\protected@xdef\@thefnmark{\ref{#1}}\@footnotemark}
\newcommand{\ket}[1]{\ensuremath{\left\lvert{#1}\right\rangle}}
\DeclareMathOperator*{\argmax}{arg\,max}
\DeclareMathOperator*{\argmin}{arg\,min}
\newcommand{\Vv}{\mathcal{V}}
\newcommand{\CC}{\mathbb{C}}
\newcommand{\NN}{\mathbb{N}}
\newcommand{\RR}{\mathbb{R}}
\newcommand{\ZZ}{\mathbb{Z}}
\newtheorem{dfn}{Definition}[section]
\newtheorem{lem}[dfn]{Lemma}
\newtheorem{ntn}[dfn]{Definition}
\newtheorem{thm}[dfn]{Theorem}
\newtheorem{cor}{Corollary}[dfn]
\newtheorem{alg}[dfn]{Algorithm}
\begin{document}

\title{Heisenberg-limited quantum phase estimation of multiple eigenvalues with few control qubits}

\author{A. Dutkiewicz}
\affiliation{Instituut-Lorentz, Universiteit Leiden, 2300 RA Leiden, The Netherlands}
\author{B.M. Terhal}
 \affiliation{QuTech, Delft University of Technology, P.O. Box 5046, 2600 GA Delft, The Netherlands and JARA Institute for Quantum Information, Forschungszentrum Juelich, D-52425 Juelich, Germany}
\author{T. O'Brien}
\affiliation{Google Quantum AI, 80636 Munich, Germany}
\affiliation{Instituut-Lorentz, Universiteit Leiden, 2300 RA Leiden, The Netherlands}

\begin{abstract}
    Quantum phase estimation is a cornerstone in quantum algorithm design, allowing for the inference of eigenvalues of exponentially-large sparse matrices.
    The maximum rate at which these eigenvalues may be learned, --known as the Heisenberg limit--, is constrained by bounds on the circuit complexity required to simulate an arbitrary Hamiltonian.
    Single-control qubit variants of quantum phase estimation that do not require coherence between experiments have garnered interest in recent years due to lower circuit depth and minimal qubit overhead.
    In this work we show that these methods can achieve the Heisenberg limit, {\em also} when one is unable to prepare eigenstates of the system. Given a quantum subroutine which provides samples of a `phase function' $g(k)=\sum_j A_j e^{i \phi_j k}$ with unknown eigenphases $\phi_j$ and overlaps $A_j$ at quantum cost $O(k)$, we show how to estimate the phases $\{\phi_j\}$ with (root-mean-square) error $\delta$ for total quantum cost $T=O(\delta^{-1})$. Our scheme combines the idea of Heisenberg-limited multi-order quantum phase estimation for a single eigenvalue phase \cite{Higgins09Demonstrating,Kimmel15Robust} with subroutines with so-called dense quantum phase estimation which uses classical processing via time-series analysis for the QEEP problem \cite{Somma19Quantum} or the matrix pencil method. For our algorithm which adaptively fixes the choice for $k$ in $g(k)$ we prove Heisenberg-limited scaling when we use the time-series/QEEP subroutine. We present numerical evidence that using the matrix pencil technique the algorithm can achieve Heisenberg-limited scaling as well.
\end{abstract}

\maketitle

\tableofcontents

\section{Introduction}
For quantum computers to overcome the $50^+$ year head start in research and development enjoyed by their classical competition, quantum algorithms must eke out every inch of quantum speedup over their classical counterparts. Quantum phase estimation (QPE) of a unitary operator $U$, --a BQP-complete problem~\cite{Wocjan06Several}--, plays a central or support role in many promising quantum applications~\cite{Shor95Polynomial,Harrow09Quantum,Whitfield10Simulation}.

However, not all flavours of quantum phase estimation are equally powerful.
To perform QPE with accuracy (root-mean-square) error $\delta$, initial implementations of quantum phase estimation~\cite{Nielsen00Quantum, cleve+} used a $O(\log(\delta^{-1}))$-qubit control register, and required computation time scaling as $T=O(\delta^{-2})$, known as the Sampling Noise Limit, when contributions from outlying (unlikely) data are taken into account~\cite{Higgins09Demonstrating}.
Much work has been undertaken over the succeeding years to improve QPE estimation rates to the theoretically optimal Heisenberg limit $T=O(\delta^{-1})$~\cite{giovannetti2006quantum, vandam2007optimal,berry2009perform} and reduce the control overhead.
 

The requirement to perform applications of $U$ conditional on a large entangled control register is technically challenging and has strong coherence requirements.
It has been known for a long time \cite{griffiths} that the control register in QPE can be replaced by a single qubit using classical feedback and re-preparation of the control qubit, also known as iterative QPE~\cite{Kitaev95Quantum}.
For estimating the phase of a single eigenstate, --assuming the preparation of this eigenstate--, the sampling noise limit can thus simply be achieved using a single control qubit. 
In \cite{Higgins09Demonstrating} iterative-QPE was extended to achieve the Heisenberg limit $T=O(\delta^{-1})$. This Heisenberg limit can be shown, via Cramer-Rao bounds~\cite{Higgins09Demonstrating}, to be a lower bound on the cost of phase estimation, assuming one cannot fast-forward the unitary $U$ \cite{Berry07Efficient}. This type of estimation have additionally demonstrated a relative robustness to error~\cite{Kimmel15Robust,Wiebe16Efficient}, which is of interest to NISQ applications. Other analyses of QPE use maximum-likelihood~\cite{Svore13Faster}, or Bayesian~\cite{Wiebe16Efficient,Berg20Efficient} inference. 

The requirement to prepare eigenstates of the unitary $U$ is not possible for most applications.
It is well known that the `textbook' QPE algorithm succeeds for any initial state, i.e. the output is always an accurate estimate of one of the eigenphases of $U$~\cite{Nielsen00Quantum}.
However, the performance of few-ancilla QPE on starting states that are not eigenstates has only been examined recently.
In~\cite{Obrien19Quantum} it was demonstrated numerically that one may infer single eigenvalues from mixed or superposed initial eigenstates using standard classical signal processing techniques~\cite{Rife74Single}.
Under some additional constraints on the system, it was recently found that these techniques could be performed in the absence of any control qubits or the need to apply controlled unitary operations~\cite{Lu20Algorithms,Obrien20Error}, a further significant saving. Due to the need to `densely sample' the phase function $g(k)=\sum_j A_j e^{i \phi_j k}$, and a lack of optimization of the classical post-processing techniques, Ref.~\cite{Obrien19Quantum} only achieved sampling-noise-limited scaling, but not Heisenberg-limited scaling. By dense sampling we mean that we draw samples from $g(k)$ with $k$ a sequence of integers, $k=0, 1, \ldots, K$ (as opposed to, say, choosing $k=2^d$, i.e. exponentially increasing which is used in textbook QPE and iterative QPE). By a clever adjustment of the quantum phase estimation problem to target estimation of the spectral function, Eq.~\eqref{eq:spectral_function}, of the input state, Ref.~\cite{Somma19Quantum} was able to prove rigorous results, with bounds that were subsequently improved in Ref.~\cite{Roggero20Spectral}. Still, these results fall short of reaching ``the Heisenberg limit for the problem of estimating multiple phases", however that should be defined.

In this work, we demonstrate single-control qubit quantum phase estimation at a so-called Heisenberg limit. We extend the methods used in Refs.~\cite{Higgins09Demonstrating,Kimmel15Robust} that obtain Heisenberg-limited scaling for single eigenphases to the multiple-phase setting by the use of a multi-order scheme and phase matching subroutines between different orders. We show that to make this phase matching unambiguous requires the sampling scheme to be adaptive, i.e. the next choice for $k$ of $g(k)$ depends on the current phase estimates. At each order the multi-order algorithm requires input from a dense phase estimation method: for a given order $k$ we use samples from $g(k{\sf k})$ with ${\sf k}=0, 1,\ldots, K$.
As we require the freedom to choose $k$ a real number, to be applicable to a completely general $U$ we must invoke the quantum singular value transformation of Ref.~\cite{Gilyen19QSVT}, which requires $O(1)$ additional control bits.
Using the time-series or QEEP analysis of Ref.~\cite{Somma19Quantum} as classical processing subroutine, we are able to obtain a rigorous proof of Heisenberg-limited scaling of our multi-order scheme.
Using the matrix pencil method analysed in Ref.~\cite{Obrien19Quantum}, as such a dense subroutine, we are able to show numerical results consistent with the Heisenberg limit, with a performance improvement over the time-series analysis results.

In essence, our paper is concerned with what choices of $k$ in $g(k)$ and what classical processing are needed to enable Heisenberg-limited scaling, i.e. scaling which minimizes the total number of applications $T$ of (controlled) $U$ (which we refer to as the quantum cost) given a targeted error $\delta$ with which to estimate multiple eigenvalue phases of $U$ present in some input state $\ket{\Psi}$. It can thus be viewed as purely solving a problem of classical signal processing. This does not mean that such questions are trivial: for example, the question of how to estimate phases if one is allowed to only get single samples from $g(k)$ for a set of randomly chosen $k$ relates to the dihedral hidden subgroup problem in quantum information theory \cite{regev}. 
We note that other work, based on a Monte Carlo extension of \cite{Somma19Quantum}, achieving Heisenberg scaling (up to polylog factors) was recently presented in \cite{LT:heisenberg}.
Another recent work also demonstrated numerical evidence for Heisenberg-limited phase estimation using Bayesian methods~\cite{Gebhart20Heisenberg}.
We also note that the information-theoretically optimal method in \cite{Svore13Faster} which picks random $k$ in $g(k)$ has a classical processing cost which is linear in the quantum cost $T$, and Theorem 1 in \cite{Svore13Faster} can be converted to bound the mean-squared-error in estimating a single phase, see comments below Theorem \ref{thm:heisenberg_limit} in Section \ref{sec:speed_limits}. 
In principle this information-theoretic method can be extended to the case of $n_{\phi}$ eigenvalue phases, but the classical processing cost will be exponential in $n_{\phi}$ as one iterates over the possible values of the phases, while our Algorithm \ref{alg:adaptive} has a polynomial (but superlinear) quantum and classical cost in terms of the number of phases $n_{\phi}$.

\subsection{Outline}

We begin in Sec.~\ref{sec:QPE_split} by defining a few mathematical objects. We separate the quantum part of a phase estimation problem, namely sampling of the phase (or signal) function $g(k)$ in Eq.~\eqref{eq:phase_function} given an input unitary $U$ and input state $|\Psi\rangle$ in Definition \ref{def:phase_function_estimation_problem} through running some quantum circuits, and the classical processing of samples from $g(k)$ to extract the eigenvalue data of $U$. In Sec.~\ref{sec:distcirc} we state and prove some needed properties of the distance between phases. In Sec.~\ref{sec:speed_limits} we prove several Cramer-Rao bounds on the scaling of the error versus the total quantum cost for the estimation of a single eigenvalue phase, Theorem \ref{thm:heisenberg_limit}. We state the previous result on getting Heisenberg-limited scaling for a single eigenvalue phase in Algorithm \ref{alg:single_phase_Heisenberg_limit}.
Table \ref{tab:glossary} contains a glossary of the symbols used in this paper.

In Sec.~\ref{sec:QPE_sparse_def} we properly define a multi-eigenvalue phase estimation problem (Def.~\ref{def:MEEP}).
We discuss algorithms that can be used to extract multiple phases from densely-sampled signal $g(k)$.
We state error bounds satisfied by the output of Alg.~\ref{alg:phase_extraction} (Lemma \ref{lem:phase_extraction_promises}) that is used as the fixed-order subroutine in our final Algorithm \ref{alg:adaptive} which achieves Heisenberg scaling.

In Section \ref{sec:phase_matching_problem} we present our Heisenberg-limited algorithm. We discuss a critical aliasing problem to be solved which occurs when estimating multiple eigenvalues. We show that an adaptive choice for $k$ in $g(k)$ can solve this issue and we prove that such adaptive choice always exists in Lemma~\ref{lem:phase_matching_solution}. In Lemma \ref{lem:phase_matching_checks} and Lemma \ref{lem:phase_matching_with_failures} we prove some properties about Algorithm \ref{alg:adaptive} which will be helpful in proving the final Theorem \ref{thm:adaptive_gets_Heisenberg_limit}.

Thus in Theorem~\ref{thm:adaptive_gets_Heisenberg_limit} we prove Algorithm~\ref{alg:adaptive} achieves Heisenberg-limited scaling, which is the main result of this work. In Sec.~\ref{sec:numerics} we numerically compare this rigorous implementation to an implementation using the matrix pencil method, used in Ref.~\cite{Obrien19Quantum}, for which we are unable to find a rigorous proof of Heisenberg scaling. We finish the paper with a discussion, Section \ref{sec:conclusion}.

\section{The classical and quantum tasks of phase estimation}\label{sec:QPE_split}

\begin{table*}[]
    \centering
    \begin{tabular}{p{0.07\linewidth}|p{0.24\linewidth}|p{0.65\linewidth}}
         Symbol & Term & Description\\
         \hline
         $U$ & Unitary & The unitary whose eigenphases we wish to estimate.\\
         $T$ & Quantum cost & The total number of applications of controlled $U$ over the course of the phase estimation algorithm.\\
         $\phi_j$ & Phase & A number $\phi_j \in [0, 2\pi)$ such that the $j$th eigenvalue of $U$ is $e^{i\phi_j}$.\\
         $A_j$ & Overlap & The overlap of the input state and the $j$th eigenstate of $U$; see Def.~\ref{def:phase-function}.\\
         $g(k)$ & Signal / Phase function & See Def.~\ref{def:phase-function}.\\
         $d$ & Order & Running index for the order of estimation. At each order we construct new phase estimates $\tilde\phi_j^{(d)}$ using new data and the previous estimates $\tilde\phi_j^{(d-1)}$.\\
         $\tilde \phi_j^{(d)}$ & Estimate & Estimate of $\phi_j$ obtained at the $d$th order; see step~\ref{step:update} of Alg.~\ref{alg:adaptive}.\\
         $k_d$ & Exponent & At order $d$ we perform the QEEP subroutine for $V = U^{k_d}$.\\
         $\kappa_d$ & Multiplier & Defines $k_d = \kappa_d k_{d-1}$.\\
          $\theta_j^{(d)}$ & Phase & Eigenphase of $U^{k_d}$.\\
         $\tilde\theta_j^{(d)}$ & Estimate & Estimate of $\theta_j^{(d)}$; see step~\ref{step:exeQEEP} of Alg.~\ref{alg:adaptive}.  \\
         $\epsilon$ & Single order error & Error parameter used for the QEEP subroutine at each order; see Alg.~\ref{alg:phase_extraction}.\\
         $\delta$ & Final error & A bound for standard deviation of the final estimates; see Def.~\ref{def:MEEP}.\\
         $p_d$ & Confidence bound & The probability with which the QEEP subroutine at order $d$ succeeds; see Eq.~\eqref{eq:defpd}.\\
         $A$ & Overlap bound & We wish to estimate the phases for which the overlap is $A_j > A$; see Def.~\ref{def:MEEP}. \\
    \end{tabular}
    \caption{Glossary of symbols used in the main text.}
    \label{tab:glossary}
\end{table*}

One may separate quantum phase estimation into the extraction of a signal which consists of oscillations at eigenvalue frequencies $\phi_j$ at a chosen time $k$, and the processing of this signal to resolve the frequencies. Let us first define the following:

\begin{ntn}[Signal or Phase Function, and Spectral Function] \label{def:phase-function}
Let $U\in \mathsf{U}(2^N)$ be an $N$-qubit unitary operator, and $|\Psi\rangle\in\CC^{2^N}$ an $N$-qubit state. We label the eigenstates $|\phi_j\rangle$ of $U$ by their phase --- $U|\phi_j\rangle=e^{i\phi_j}|\phi_j\rangle$. We can decompose $|\Psi\rangle$ in terms of these eigenstates,
\begin{equation}
    |\Psi\rangle = \sum_j a_j|\phi_j\rangle,
\end{equation}
and write the overlap $A_j:=|a_j|^2, \sum_j A_j=1$.
We define the phase function, -- also called the signal--, $g(k)$ for $k \in \mathbb{R}$ of a state $|\Psi\rangle$ under $U$ as
\begin{equation}
    g(k)=\sum_jA_je^{ik\phi_j}.\label{eq:phase_function}
\end{equation}
 The spectral function $A(\phi)$ is defined as
    \begin{equation}
        A(\phi)=\sum_jA_j\delta(\phi-\phi_j).\label{eq:spectral_function}
    \end{equation}
    Note that $\int_{0}^{2\pi}d\phi A(\phi)=1$, and $g(k)=\int_0^{2\pi}d\phi e^{ik\phi}A(\phi)$; i.e. the phase function sets the Fourier coefficients of the spectral function. 
\end{ntn}

Note that one may change seamlessly between the description of a unitary $U$ and its eigenvalues and a Hermitian operator $H$ and its eigenvalues using the transform $U=e^{iHt}$ for an appropriate choice of $t$. Note that since $g(-k)=g^*(k)$ we can restrict ourselves to $k\geq 0$.

One may consider algorithms estimating $g(k)$ at integer $k \in \mathbb{Z}^+$, which require the quantum circuits using controlled-$U^k$ in Fig.~\ref{fig:QPE} with $k\in \mathbb{Z}^+$.
In our final Alg.~\ref{alg:adaptive} we will however use $k\in \mathbb{R}^+$ (in practice $k \in \mathbb{Q}^+$).
In order to implement $U^k$, we can write $k=\lfloor k\rfloor + \alpha$, and we can simulate $U^k$ in time $O(k)$ if we can simulate $U^\alpha$ in time independent of $k$.
The accuracy of this fractional query to $U^{\alpha}$ can be independent of the final error in our phase estimation (as long as it is sufficiently small).
Simulating $U^{\alpha}$ is not a significant issue for Hamiltonian simulation methods such as Trotter decompositions~\cite{Childs20Theory}, which allow simulation of $e^{iHt}$ for arbitrary $t\in\RR$.
If we instead have access to a circuit implementation of a unitary $U$, or a block-encoding of a Hamiltonian $H$, we can implement a fractional query of $U^{\alpha}$ via the quantum singular value transform~\cite[Corollary 34]{Gilyen19QSVT}.
The circuit to implement $U^{\alpha}$ to error $\epsilon$ requires ${\cal O}(1)$ additional ancilla qubits, and ${\cal O}(\Delta^{-1}_{\max}\log(1/\epsilon))$ implementations of controlled-$U$ (where $\Delta_{\max}$ is the largest gap in the spectrum of $U$)~\footnote{The $\Delta_{\max}$ dependence in our circuit comes from the requirement in~\cite[Corollary 34]{Gilyen19QSVT} that our unitary have spectrum on $[-\pi +\Delta_{\max}, \pi - \Delta_{\max}]$. Though this will not immediately be the case, we are free to rotate the spectrum of our unitary to satisfy this requirement, as long as the spectral gap exists. We also only require to consider those eigenstates with support on our initial state in this algorithm; eigenstates with zero weight can be adjusted as part of implementing the quantum singular value transformation without affecting the phase function $g(k)$.}.
We will assume in this work that our states have support on at most $n_{\phi}$ phases, and so we can bound $\Delta\geq \pi/n_{\phi}$.
The cost of implementing $U^{\alpha}$ will thus not be a significant part of the cost to implement $U^k$ under the assumption $k>> A_0^{-1}\log(1/\epsilon)$.
To simplify our remaining analysis, we assume herein that we can implement $U^k$ at a total quantum cost $k$ for all $k\in\mathbb{R}_+$.



The following task summarizes the quantum subroutine for phase estimation which is to be executed with the quantum circuits in Fig.~\ref{fig:QPE}:

\begin{dfn}[Phase Function Estimation, PFE] \label{def:phase_function_estimation_problem}
Let $U$ be an $N$-qubit unitary operator and $|\Psi\rangle$ an $N$-qubit quantum state. Assume
\begin{enumerate}
    \item A quantum circuit implementation of $U$ (conditional on a control qubit) and,
    \item A quantum circuit that prepares $|\Psi\rangle$.
    \end{enumerate}
    Given a $k\in\ZZ^+$, error $\epsilon>0$, and confidence $0<p\leq 1\in\RR$, {\em PFE} outputs an estimate $\tilde{g}(k)$ of the phase function $g(k)$ of $|\Psi\rangle$ under $U$, with $\mathbb{P}(|\tilde{g}(k)-g(k)|\leq \epsilon) \geq 1-p$ with quantum cost $T= M |k|$ where $M$ is the number of repetitions of both experiments in Fig.~\ref{fig:QPE} and $M=\Theta(|\ln(1-p)| \epsilon^{-2})$ via a Chernoff bound. Our assumption\footnote{In practice, the cost of phase function estimation for $k\in\mathbb{R}^+$ will scale as $M(\lfloor k\rfloor + O(n_{\phi}\log(1/\epsilon))$ when using quantum signal processing techniques. One can confirm that the effect this has on our result is to effectively increase the cost of calling the QEEP subroutine, Alg.~\ref{alg:phase_extraction}, from $O(\epsilon^{-6})$ to $O(\epsilon^{-6}\log(1/\epsilon))$. This will only change the prefactor of our Heisenberg-limited phase estimation algorithm (Alg.~\ref{alg:adaptive}), and does not prevent it achieving the Heisenberg limit.} that the cost of implementing $U^k$ for $k\in \mathbb{R}^+$ scales as $O(k)$ implies that the above scaling for the cost of phase function estimation holds when $k \in \mathbb{R}^+$.
\end{dfn}

Note that estimating the quantum cost of the subroutine in Def.~\ref{def:phase_function_estimation_problem} as linear in $k$ is consistent with general no-fast forwarding statements \cite{Berry07Efficient} which state that for general Hamiltonians one cannot implement $U^t=\exp(i t H)$ in time sub-linear in $t$. It is expected that phase function estimation is hard to do efficiently on a classical computer as it allows one, via classical post-processing, to sample from the eigenvalue distribution from the input state which can be reformulated as a BQP-complete problem~\cite{Wocjan06Several}.

\begin{figure}[h!]
\centering
\includegraphics[width = 0.5\textwidth]{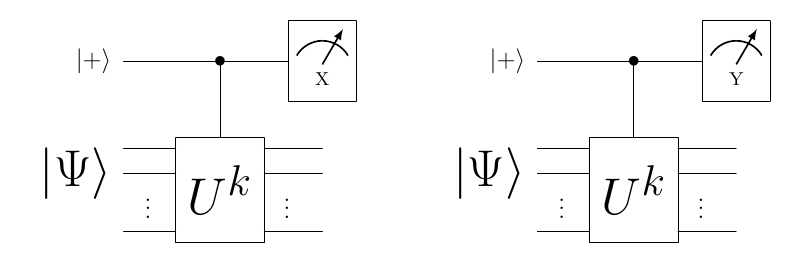}
\caption{For an input state $\ket{\Psi}=\sum_j a_j \ket{\phi_j}$ and initial ancilla state $\ket{+}=\frac{1}{\sqrt{2}}(\ket{0}+\ket{1})$ the probability for the ancilla measurement outcome to be $\pm 1$ is $\mathbb{P}(\pm 1)=\frac{1}{2}\sum_j A_j [1\pm \cos(k\phi_j)]$ (left circuit measuring in the $X$-basis) and $\mathbb{P}(\pm 1)=\frac{1}{2}\sum_j A_j [1\mp \sin(k\phi_j)]$ (right circuit measuring in the $Y$-basis).}
\label{fig:QPE}
\end{figure}

The quantum subroutine for PFE proceeds by executing the circuits in Fig.~\ref{fig:QPE} for the given $k$. The control qubit is prepared in the $\frac{1}{\sqrt{2}}(|0\rangle+|1\rangle)$ state, and is used to control $k$ applications of the unitary $U$ on the system register prepared in $|\Psi\rangle$. The reduced density matrix of the control qubit then takes the form
\begin{equation}
    \rho = \frac{1}{2}\left(\begin{array}{cc}1 & g(k) \\ g^*(k) & 1\end{array}\right).\label{eq:control_rdm}
\end{equation}
The phase function $g(k)$ is extracted by state tomography of the control qubit: one estimates the real and imaginary parts of $g(k)=g^{(r)}(k)+ig^{(i)}(k)$ by $M$ repetitions of the circuit in Fig.~\ref{fig:QPE} and measurements of the control qubit in the X- or Y-basis respectively. 
We will ignore any dependence of phase function estimation on $N$.

\subsection{Phase distance on the circle}
\label{sec:distcirc}

Quantum phase estimation describes a series of protocols to estimate the eigenphases $\phi_j$. As these eigenvalues are defined on the circle $[0,2\pi)$, we need a notion of distance which respects this periodicity:
\begin{dfn}
For $x\in\RR$ we define the distance $|\cdot |_{T}\in [0,\pi]$ as
\begin{align}
    |x|_{T}:=\min_{m\in \mathbb{Z}} (|\Delta|),
    &\mbox{ with } \; x=\Delta+2 \pi m,\nonumber\\
    &\mbox{ with } \Delta \in [-\pi, \pi).
\end{align}
Clearly, the distance obeys the triangle inequality: for $x_1, x_2 \in \RR$
\begin{equation}
    |x_1 + x_2|_T \leq |x_1|_T + |x_2|_T.\label{eq:triangle_inequality}
\end{equation}
\end{dfn}

The following Lemma addresses a technical issue in the proof of performance of Algorithm \ref{alg:adaptive}. For integer $k \in \mathbb{Z}^+$, we have $|k x|_T=\min_{m \in \mathbb{Z}} |k \Delta+2 \pi m|=k\min_{m \in \mathbb{Z}} |\Delta+\frac{2 \pi m}{k}|_T$, implying Eq.~\eqref{eq:normscaling} below directly. However, for $k \in \mathbb{R}^+$ (or rational numbers $k \in \mathbb{Q}^+$) we need to specify a range of $x$ for which such a statement holds, that is: 
\begin{lem}\label{lem:normscaling}
Suppose $k > 1$ and $\theta,\phi\in[0,2\pi)$. If \begin{equation}
    \frac{\pi}{k}\leq \phi \leq \frac{\pi(2\lfloor k\rfloor-1)}{k},\label{eq:normscaling_condition}
\end{equation}
we have for any $\theta$
{\small
\begin{equation}
    \min_{n\in\{0,\ldots,\lfloor k\rfloor-1\}}\left|\phi-\frac{\theta}{k}-\frac{2\pi n}{k}\right|_T = \frac{1}{k}\left|k\phi - \theta\right|_T.\label{eq:normscaling}
\end{equation}}

\end{lem}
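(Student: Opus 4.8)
The plan is to reduce both sides of Eq.~\eqref{eq:normscaling} to a nearest-grid-point computation after the substitution $\psi := \phi - \theta/k$. First I would rewrite the right-hand side: since $k\phi-\theta = k\psi$ and $|x|_T = \min_{\ell\in\mathbb{Z}}|x-2\pi\ell|$, it becomes $\tfrac1k|k\phi-\theta|_T = \min_{\ell\in\mathbb{Z}}|\psi - \tfrac{2\pi\ell}{k}|$, i.e. the ordinary distance from $\psi$ to the uniform grid $G_R = \{\tfrac{2\pi\ell}{k} : \ell\in\mathbb{Z}\}$ of spacing $\tfrac{2\pi}{k}$. Unfolding the $|\cdot|_T$ on the left in the same way turns the left-hand side into the distance from $\psi$ to $G_L = \{\tfrac{2\pi n}{k} + 2\pi m : n\in\{0,\dots,\lfloor k\rfloor-1\},\, m\in\mathbb{Z}\}$, a block-periodic grid that agrees with $G_R$ on the ``block-$0$'' points $\ell=0,\dots,\lfloor k\rfloor-1$ but replaces the remaining points of each period-$2\pi$ block by a single larger gap up to $2\pi$. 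So the claim is precisely $\mathrm{dist}(\psi,G_L)=\mathrm{dist}(\psi,G_R)$.

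Next I would translate the hypothesis into a range for $\psi$. Since $\theta\in[0,2\pi)$ gives $\theta/k\in[0,\tfrac{2\pi}{k})$, we have $\psi\in(\phi-\tfrac{2\pi}{k},\phi]$; combined with Eq.~\eqref{eq:normscaling_condition} this forces $\psi\in(-\tfrac{\pi}{k},\,\tfrac{\pi(2\lfloor k\rfloor-1)}{k}]$, where the strict lower bound comes from $\theta<2\pi$. The point of this interval is that it is exactly the union of the Voronoi cells (each of radius $\tfrac{\pi}{k}$) of the block-$0$ points $\tfrac{2\pi n}{k}$, $n=0,\dots,\lfloor k\rfloor-1$, inside the uniform grid $G_R$. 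Hence the nearest $G_R$-point to $\psi$ is some block-$0$ point $p^* = \tfrac{2\pi n^*}{k}$, and the right-hand side equals $|\psi-p^*|\le\tfrac{\pi}{k}$.

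The two inequalities then follow. Because $p^*\in G_L$ as well, we immediately get $\mathrm{dist}(\psi,G_L)\le|\psi-p^*|=\mathrm{RHS}$. For the reverse, I would check that every point of $G_L$ lies at distance $\ge\mathrm{RHS}$ from $\psi$: the block-$0$ points belong to $G_R$, hence are no closer than the $G_R$-minimum, while every non-block-$0$ point of $G_L$ is strictly farther than $\tfrac{\pi}{k}\ge\mathrm{RHS}$. This last point is the single computation to carry out: the nearest neighbouring-block points, namely $2\pi$ and $\tfrac{2\pi(\lfloor k\rfloor-1)}{k}-2\pi$, are both at distance at least $\tfrac{2\pi(k-\lfloor k\rfloor)+\pi}{k}\ge\tfrac{\pi}{k}$ from any $\psi$ in the above interval. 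Taking the minimum over $G_L$ gives $\mathrm{dist}(\psi,G_L)\ge\mathrm{RHS}$, and equality follows.

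The main obstacle is bookkeeping rather than anything conceptual. One must treat the endpoints of the $\psi$-interval with care: the strict lower bound $\psi>-\tfrac{\pi}{k}$ (from $\theta<2\pi$) is what rules out a spurious tie at the left edge, whereas the closed right endpoint $\psi=\tfrac{\pi(2\lfloor k\rfloor-1)}{k}$ is harmless, since the tie it produces in $G_R$ has a value that is still attained by a block-$0$ point inside $G_L$. One must also confirm that the block structure of $G_L$ genuinely keeps all non-block-$0$ points beyond radius $\tfrac{\pi}{k}$, which is where the non-integrality of $k$ enters; the integer-$k$ case is degenerate (there $G_L=G_R$ identically) and is already covered by the remark preceding the lemma.
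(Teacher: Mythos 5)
Your proof is correct, and its core is the same as the paper's: the substitution $\psi=\phi-\theta/k$, the deduction $k\psi\in(-\pi,\,2\pi\lfloor k\rfloor-\pi]$ from Eq.~\eqref{eq:normscaling_condition} and $\theta\in[0,2\pi)$, and the observation that the representative of $k\psi$ modulo $2\pi$ lands in ``block $0$'', i.e.\ $k\psi=\Delta+2\pi m$ with $\Delta\in[-\pi,\pi)$ and $m\in\{0,\dots,\lfloor k\rfloor-1\}$. The difference is one of completeness rather than of route. The paper's proof only makes explicit the achievability direction: choosing $n=m$ attains $|\Delta|/k$, so the left-hand side of Eq.~\eqref{eq:normscaling} is at most the right-hand side; the reverse inequality --- that no admissible $n$, combined with a wrap-around multiple of $2\pi$, can do strictly better than $|\Delta|/k$ --- is left as an unstated check. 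Your Voronoi framing supplies exactly that check: block-$0$ points of $G_L$ lie in $G_R$ and hence cannot beat the $G_R$-minimum, while the out-of-block points of $G_L$ nearest to the allowed $\psi$-interval, namely $2\pi$ and $2\pi(\lfloor k\rfloor-1)/k-2\pi$, are at distance at least $\bigl(2\pi(k-\lfloor k\rfloor)+\pi\bigr)/k\geq\pi/k$ from $\psi$, which is at least the right-hand side. So your writeup is, if anything, more rigorous than the paper's own. One tiny remark: the strict bound $\psi>-\pi/k$ is not actually load-bearing --- even at $\psi=-\pi/k$ the tie in $G_R$ is still attained by the block-$0$ point $0\in G_L$, so equality survives; your handling of the closed right endpoint and of the degenerate integer-$k$ case is correct.
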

\begin{proof}
Let $x=\phi-\frac{\theta}{k}$, then Eq.~\eqref{eq:normscaling_condition} and $\theta \in [0,2\pi)$ imply that
\begin{equation}
    -\pi \leq k x \leq 2\pi \lfloor k \rfloor-\pi.  
\end{equation}
and thus $kx=\Delta+2 \pi m$ with $m \in \{0,\ldots \lfloor k \rfloor -1\}$ and $\Delta \in [-\pi, \pi)$ and $|k \phi-\theta|_T=|\Delta|$. Hence $x=\frac{\Delta}{k}+\frac{2 \pi m}{k}$ with $m \in \{0,\ldots \lfloor k \rfloor -1\}$, implying Eq.~\eqref{eq:normscaling} where the minimum can be achieved by $m=n$.
\end{proof}

\subsection{Limits for single-eigenvalue phase estimation}\label{sec:speed_limits}

For the special case of estimating a single eigenvalue phase, the Cram\'er-Rao theorem can be used to lower bound the quantum cost $T$ to learn the phase with accuracy $\delta$, known as the Heisenberg limit. The problem of estimating multiple phases $\phi_j$, in the presence of unknown overlaps $A_j$, is not easily amenable to such Fisher information analysis as it is a multi-parameter estimation problem. However, it can be expected that the cost $T$ of this task is at least as high as that of single phase estimation, hence it is of interest to review these bounds here. The following theorem also proves a dense signal limit which sits in between Heisenberg and sampling noise scaling:

\begin{thm}[The Heisenberg, Dense Signal and Sampling Limits]\label{thm:heisenberg_limit}
The (root-mean-square) error $\delta$ of an estimator $\tilde{\phi}$ of the eigenvalue phase $\phi$ employing the circuits in Fig.~\ref{fig:QPE} on a eigenstate of U is always lower bounded as 
\begin{equation}
  \mbox{ {\rm Heisenberg Limit}}\colon  \delta \geq c T^{-1}, 
\end{equation}
where $T$ is the quantum cost of implementing the circuits. If we choose to use only quantum circuits with $k=1$, the sampling noise limit holds:
\begin{equation}
   \mbox{ {\rm Sampling Noise Limit}}\colon \delta  \geq T^{-1/2}. 
\end{equation}
If we choose circuits with $k=0,1, \ldots, K$ with a fixed number of repetitions $M$ for each circuit we are bound by a so-called `dense signal' limit:
\begin{equation}
  \mbox{ {\rm Dense Signal Limit}}\colon  \delta  \geq c T^{-3/4}. 
  \label{eq:DSL}
\end{equation}
In these statements $c$ is some constant.
\end{thm}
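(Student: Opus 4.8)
The plan is to reduce all three statements to the (quantum) Cram\'er--Rao bound, so the crux is to compute the Fisher information that a single run of the circuit in Fig.~\ref{fig:QPE} at depth $k$ carries about $\phi$, and then to optimize the allocation of the cost budget $T$ across the admissible values of $k$. Since the input is an eigenstate of $U$, after the controlled-$U^k$ the control qubit is left in the pure, unentangled state proportional to $\ket{0} + e^{ik\phi}\ket{1}$, whose Bloch vector rotates in the $XY$-plane at angular rate $k$ as $\phi$ varies. A direct computation of the pure-state quantum Fisher information, $4(\langle\partial_\phi\psi|\partial_\phi\psi\rangle - |\langle\psi|\partial_\phi\psi\rangle|^2)$, gives $F(k)=k^2$; equivalently one checks from the Bernoulli statistics of the $X$- or $Y$-basis outcome in Fig.~\ref{fig:QPE} (with off-diagonal $g(k)=e^{ik\phi}$ as in Eq.~\eqref{eq:control_rdm}) that the classical Fisher information is $k^2$ and thus saturates the QFI.

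First I would assemble the total Fisher information for a general strategy performing $M_k$ runs at each depth $k$. By additivity of Fisher information over independent runs, $F_{\mathrm{total}}=\sum_k M_k k^2$, while the quantum cost is $T=\sum_k M_k k$ (cost linear in $k$ per run, per Definition~\ref{def:phase_function_estimation_problem}). The Cram\'er--Rao inequality then yields $\delta^2 \geq 1/F_{\mathrm{total}} = 1/\sum_k M_k k^2$, and each of the three limits follows by bounding $\sum_k M_k k^2$ in terms of $T$ under the corresponding restriction on the allowed $k$.

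For the Heisenberg limit, let $K$ be the largest depth used; then $\sum_k M_k k^2 \leq K\sum_k M_k k = KT$, and since $T \geq K$ (at least one run uses depth $K$) we get $F_{\mathrm{total}}\leq T^2$, hence $\delta \geq T^{-1}$. For the sampling-noise limit, restricting to $k=1$ gives $F_{\mathrm{total}}=M=T$ and thus $\delta \geq T^{-1/2}$. For the dense-signal limit, with uniform $M$ over $k=0,1,\ldots,K$ we have $T=M\sum_{k=0}^{K}k=\Theta(MK^2)$ and $F_{\mathrm{total}}=M\sum_{k=0}^{K}k^2=\Theta(MK^3)$; eliminating $M$ gives $F_{\mathrm{total}}=\Theta(TK)$, and since $M\geq 1$ forces $K=O(\sqrt{T})$, we obtain $F_{\mathrm{total}}=O(T^{3/2})$ and hence $\delta \geq c\,T^{-3/4}$.

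The main obstacle is not any single calculation but the legitimacy of the Cram\'er--Rao step: the parameter $\phi$ lives on the circle, so the textbook unbiased-estimator bound in $\RR$ must be invoked in its local form, valid for small $\delta$, which is the regime of interest. One must also confirm additivity of Fisher information when $k$ is chosen adaptively from previous outcomes; here the per-run contribution is still capped by the QFI $k^2$ at the chosen depth, so the allocation bounds continue to hold in expectation. I would emphasize that using the QFI rather than a fixed-basis classical Fisher information makes the bound measurement-independent, which is precisely what licenses treating $F(k)=k^2$ as an upper bound on the information extractable by \emph{any} tomography scheme on the control qubit.
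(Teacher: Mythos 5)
Your proposal is correct and follows essentially the same route as the paper: the Cram\'er--Rao bound with per-run Fisher information $k^2$, additivity over independent runs, and a comparison of $\sum_k M_k k^2$ against the cost $T=\sum_k M_k k$ under each restriction on the allowed $k$. The only differences are refinements of the same argument --- your chain $\sum_k M_k k^2 \leq K\,T \leq T^2$ makes rigorous the paper's informal claim that concentrating the budget at the largest depth is optimal, and your appeal to the quantum Fisher information makes the per-run cap measurement-independent, whereas the paper computes the classical Fisher information of the $X$/$Y$ measurements directly (which, as you note, saturates the QFI).
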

\begin{proof} Let $\tilde{\phi}$ be an estimator of $\phi$ which is inferred from the data ${\bf x}$. Here the data ${\bf x}$ is the string of outcomes of the ancilla qubit measurements for all the experiments using the left and right circuits in Fig.~\ref{fig:QPE}. We have 
\begin{equation}
    \delta^2=\sum_{\bf x}\mathbb{P}({\bf x}|\phi) (\phi-\tilde{\phi}({\bf x}))^2 \geq I^{-1}(\phi),
    \end{equation}
    by the Cram\'er-Rao theorem \cite{Cramer46Mathematical,Rao45Information}
    where the Fisher information is defined as 
\begin{equation}
    I(\phi)=\sum_{{\bf x}} \mathbb{P}({\bf x}|\phi)\left\{\frac{\partial}{\partial \phi}\ln\left[\mathbb{P}({\bf x}|\phi)\right]\right\}^2.\label{eq:fisher_def}
\end{equation}
Thus $I(\phi)$ limits the information we may learn about $\phi$ given a dataset ${\bf x}$ drawn from $\mathbb{P}({\bf x}|\phi)$ and we can calculate $I(\phi)$.  Let $M_k^r$ be the number of experiments, using the circuit with the $X$ measurement, and $M_k^i$ be the number of experiments using the circuit with the $Y$ measurement with a certain chosen $k$.  
 The Fisher information for all independent experiments together is additive, i.e. $I(\phi)=\sum_k [M_k^r I(\phi|k,r)+M_k^i I(\phi|k,i)]$ with $I(\phi|k,r)$ and $I(\phi|k,i)$ the Fisher information of a single experiment and $\sum_k$ is the sum over the chosen set of $k$s. For a single experiment we can calculate, using Eq.~\eqref{eq:fisher_def} and the probability for the output bit given in Fig.~\ref{fig:QPE}, that $I(\phi|k,r)=I(\phi|k,i)=k^2$ and thus
 \begin{equation}
   I(\phi)=\sum_{k} k^2 (M_k^r +M_k^i).
   \label{eq:Fisher_info_single_eigenstate}
 \end{equation}
At the same time the total quantum cost of all experiments is
\begin{equation}
    T=\sum_{k} k(M_k^r+M_k^i).\label{eq:Total_quantum_cost_single_eigenstate}
\end{equation}
The key insight here is that the relative dependence on $k$ is different between $T$ and $I$ and this implies that the trend of the number of experimental runs $M_k^{r/i}$ as a function of $k$ will affect the maximum rate of estimation. 
If we choose only $k=1$ we see that $\delta^2 \geq \frac{1}{T}$ which is the sampling noise limit.

The biggest value for $I(\phi)$ for a given $T$ is obtained when we choose a single largest possible $k=K$ so that $T=K M_K$ and $I(\phi)=K^2 M_K=T^2/M_K$ with $M_K=M_K^r+M_K^i$ . This implies a Heisenberg limit, i.e. $\delta \geq c T^{-1}$ where $c$ is some constant depending on $M_K$.
If we however make the `dense signal' choice, that is, $M^r_k=M^i_k=M$ for $k=1,2 \ldots K$, then 
\begin{equation}
    I(\phi)=\frac{M}{3}K(K+1)(1+2K),
\end{equation}
while the total quantum cost is $T=M K(K+1)$.
Increasing $K$ with $M$ fixed, we have to leading order in $K$ that $\delta \geq I(\phi)^{-\frac{1}{2}}=\sqrt{\frac{3}{2}}M^{-\frac{1}{2}}K^{-\frac{3}{2}}=\sqrt{\frac{3}{2}} M^{\frac{1}{4}}T^{-\frac{3}{4}}$, which is the dense signal limit.

\end{proof}

{\em Remark}: We note that a randomized version of the dense signal choice can potentially scale in near-Heisenberg-limited fashion. In this method, one would draw $k$ at random from $1, \ldots, K$ and repeat this $S$ times to generate random variables $k_1, \ldots, k_S$ and repeat experiments with fixed $M$ for each such $k_i$. With the right choice of $S \times M={\rm polylog}(K)$, one can argue, using the Cramer-Rao lower bound analysis above, that the expected error $\mathbb{E}(\delta)\geq \frac{{\rm poly log}(\mathbb{E}(T))}{\mathbb{E}(T)}$ where $\mathbb{E}(T)$ is the expected quantum cost. The algorithm in \cite{Svore13Faster} uses such strategy with $M=1$.
Clearly, the sampling noise limit can be achieved by choosing $k=1$ in the circuits of Fig.~\ref{fig:QPE}. However, one can ask whether the dense signal limit or the Heisenberg limit can also be achieved, in particular when we demand that the classical post-processing is computationally efficient, meaning that this processing is polynomial in the quantum cost $T$.
For the dense signal limit one needs a classical method to process the estimates of $g(k)$ at $k=1,\ldots, K$ to estimate $\tilde{\phi}$. Using perturbation theory in the noise, the matrix pencil method has been claimed to achieve this for a single eigenvalue \cite{Hua90Matrix}.

Achieving the Heisenberg limit is non-trivial due to phase aliasing: the phase function $g(k)$ obtained by the experiments using $U^k$ remains invariant if the phase $\phi$ is shifted by $\frac{2\pi}{k}$.
This implies that a strategy of estimating $\phi$ from a single point $g(k)$ at large $k$ will fail unless $\phi$ is already known to sit within a window of width $\frac{2\pi}{k}$. This issue is circumvented by sampling $g(k)$ at multiple orders $k=2^d$ to `gradually zoom in' on $\phi$. To get Heisenberg scaling, one lets the number of samples $M$ and thus the confidence, to depend on the order, so that the most significant bits of $\phi$ are determined with the highest confidence. Methods for doing this were first introduced in Ref.~\cite{Higgins09Demonstrating}, and improved in Ref.~\cite{Kimmel15Robust} for the purpose of gate calibration. Here we state the result:
\begin{alg}[Heisenberg Algorithm For Single Eigenvalue Phase \cite{Higgins09Demonstrating,Kimmel15Robust}]\label{alg:single_phase_Heisenberg_limit}
Given an targeted error $\delta> 0$, and numbers $\alpha, \gamma \in \mathbb{Z}^+$.
The Heisenberg algorithm which outputs an estimate $\tilde{\phi}$ for $\phi$ proceeds as follows:
\begin{enumerate}
    \item Fix $d_f = \lceil \log_2(1/\delta)\rceil$.
    \item For $d=0,1\ldots, d_f$:
    \begin{enumerate}
        \item Use the {\rm PFE} subroutine, Def.~\ref{def:phase_function_estimation_problem}, circuits in Fig.~\ref{fig:QPE}, to obtain an estimate $\tilde{g}(k)$ of $g(k)$ for $k=2^d$ using $M_d=\alpha + \gamma (d_f+1-d)$ repetitions of both experiments.
        \label{step:singlephase_exePFE}
        \item Compute $\tilde{\theta}^{(d)}=\mathrm{Arg}[\tilde{g}(2^d)]\in[0,2\pi)$ \label{step:singlephase_deftheta}
        \item If $d=0$, set $\tilde{\phi}^{(0)}=\tilde{\theta}^{(0)}$.
        \item \label{step:update-QPE} Else, set $\tilde{\phi}^{(d)}$ to be the unique value in the interval  $[\tilde{\phi}^{(d-1)}-\frac{\pi}{2^d},\tilde{\phi}^{(d-1)}+\frac{\pi}{2^d})$ (with periodic boundaries) such that 
        \begin{equation}
        2^d\tilde{\phi}^{(d)} = \tilde{\theta}^{(d)}\mod 2\pi.
        \end{equation}
    \end{enumerate}
    \item Return $\tilde{\phi}=\tilde{\phi}^{(d_f)}$ as an estimate for $\phi$.
\end{enumerate}
It was proven in \cite{Kimmel15Robust} that for some choices of $\alpha$ and $\gamma$ the root-mean-square error $\delta$ on the final estimate $\phi^{(d_f)}$ is at most $c T^{-1}$ for a constant $c$ and total cost $T=2\sum_{d=0}^{d_f} 2^d M_d$, thus reaching the Heisenberg limit.
\end{alg}

One might consider the effect of experimental noise on these limits.
The algorithm given in \cite{Kimmel15Robust} was proven to be robust against noise that affected the estimation of any $g(k)$ by no more than $\frac{1}{\sqrt{8}}$.
However, realistic noise tends to scale with the circuit depth, eventually breaking this bound.
In the presence of a uniform depolarizing channel, it is possible to extend the above calculation of Fisher information to optimize the recovery of a single phase $\phi$, however in the limit that $\delta\rightarrow 0$ only the sampling-noise limit can be obtained:
\begin{lem}\label{lem:depol_channel}
The (root-mean-square) error $\delta$ of an estimator $\tilde{\phi}$ of the eigenvalue phase $\phi$ employing the circuits in Fig.~\ref{fig:QPE} on an eigenstate of $U$ in the presence of a pure depolarizing channel with a fixed failure probability $p=e^{-1/\tau}$ per iteration of $U$ is bounded as
\begin{equation}
    \delta\geq \frac{\tau^{-\frac{1}{2}}T^{-\frac{1}{2}}}{2e}.
\end{equation}
\end{lem}
\begin{proof}
A pure depolarizing channel sends the off-diagonal element of the reduced density matrix in Eq.~\eqref{eq:control_rdm} to $p^kg(k)$.
This adjustment can be propagated directly through to the Fisher information (Eq.~\eqref{eq:Fisher_info_single_eigenstate}), which becomes
\begin{equation}
    I(\phi)=\sum_ke^{-2k/\tau}k^2(M_k^r+M_k^i),
\end{equation}
while the total quantum cost (Eq.~\ref{eq:Total_quantum_cost_single_eigenstate}) remains the same.
Let us consider the relative contribution to $I$ versus the contribution to $T$ of a single choice of $k$; if we write $I(\phi)=\sum_kI_k$ and $T=\sum_kT_k$, we have $I_k/T_k=ke^{-2k/\tau}$.
Differentiating w.r.t. $k$ and setting equal to zero yields
\begin{align}
    \frac{d}{dk}\frac{I_k}{T_k}=e^{-2k/\tau}-\frac{2k}{\tau}e^{-2k/\tau}=0\\
    &\rightarrow k=\frac{\tau}{2}.
\end{align}
Optimizing $I(\phi)$ with respect to $T$ thus requires setting $k=\frac{\tau}{2}$ and increasing $M_k=M_k^r+M_k^i$, which yields
\begin{equation}
    I(\phi)=\frac{\tau^2}{4e}M_k, T=\frac{\tau}{2}M_k.
\end{equation}
Substituting this into the Cram\'{e}r-Rao bound $\delta\geq I(\phi)^{-\frac{1}{2}}$ yields the desired result.
\end{proof}
{\em Remark}: Note that as we fix $T\sim\tau$ in the above, we in effect have Heisenberg-limited scaling in $\tau$.
However, if we treat $\tau$ as a constant this is only the sampling noise limit as defined above.
This result holds only for the simplest-possible noise case; more complicated noise is difficult to analyse, but numerical results show it may prevent estimation beyond some minimum value using standard techniques~\cite{Obrien19Quantum}.
We assume herein that all circuits are noiseless (and will not use Lemma~\ref{lem:depol_channel} in the rest of this work).

\section{Defining the task of multiple-eigenvalue phase estimation}\label{sec:QPE_sparse_def}

In this section we define the goal of estimating multiple eigenvalue phases of some unitary $U$. When the input state $\ket{\Psi}$ is supported on multiple eigenstates, choosing a single $k$ does not suffice, simply since knowing Eq.~\eqref{eq:phase_function} at a single point $k$ does not give a unique solution $\{A_j,\phi_j\}$ \cite{Obrien19Quantum}.
A simple way to circumvent this problem is thus to estimate `densely': estimating $g(k)$ for all integers $0\leq k\leq K$ would allow us to fit up to $O(K)$ $(\phi_j,A_j)$ pairs. However, this does not saturate the Heisenberg limit as shown in Theorem \ref{thm:heisenberg_limit}, hence we need to come up with a different method.

Separate from this, the full eigenspectrum of an arbitrary $N$-qubit unitary $U$ has up to $2^N$ unique values, making it impossible to describe in polynomial time in $N$. In addition, the spectral content of the input state $\ket{\Psi}$ could be very dense, with many eigenvalues clustered together instead of separated by gaps, and the overlap for these eigenvalues, $A_j$, could be sharply concentrated or uniformly spread. To deal with general input states, Ref.~\cite{Somma19Quantum} thus formulated the quantum eigenvalue estimation problem (QEEP): instead of estimating individual phases, the focus is on estimating the spectral function $A(\phi)$ in Eq.~\eqref{eq:spectral_function} with some resolution.
We recall the precise definition of this problem in App.~\ref{app:QEEP}, Def.~\ref{def:QEEP}.
In our case, we focus on the case where the initial state only has a non-zero overlap with a small number $n_\phi$ of eigenvectors of $U$, and we want to estimate eigenphases corresponding to each of them.
Here is our precise definition of the problem to be solved:

\begin{dfn}[Multiple eigenvalue estimation problem]\label{def:MEEP} Fix an error bound $\delta>0$, an overlap bound $A>0$. For a unitary $U$ and state $|\Psi\rangle$, we assume that $A_j>A$ for exactly $n_{\phi}$ phases $\phi_j$ and $A_j=0$ for all other phases so that $n_{\phi}\leq A^{-1}$. 
Let $g(k)=\sum_jA_je^{ik\phi_j}$ be the phase function in Def.~\ref{def:phase-function} and assume access to the PFE quantum subroutine in Def.~\ref{def:phase_function_estimation_problem} for any $k\in \mathbb{R}^+$, generating data ${\bf x}$. The task is to output a set $\{\tilde{\phi}_l\}$ of $n_{\phi}$ or fewer estimates of the phases $\{\phi_j\}$ such that, if we take the closest estimate $\tilde{\phi}_j^{(\mathrm{closest})}({\bf x})$ of each phase $\phi_j$ given the data ${\bf x}$,
\begin{equation}
    \tilde{\phi}_j^{(\mathrm{closest})}({\bf x})=\argmin_{\tilde{\phi}_l}(|\tilde{\phi}_l({\bf x})-\phi_j|_T),
\end{equation}
the accuracy error
{\small
\begin{equation}
    \delta_j=\sqrt{\sum_{\mathbf{x}}\mathbb{P}\left(\mathbf{x}|\{\phi_l,A_l\}\right)\left|\tilde{\phi}_j^{(\mathrm{closest})}(\mathbf{x})-\phi_j\right|_T^2},
    \label{eq:MSE}
\end{equation}}
is bounded by $\delta_j \leq \delta$ for all $j=1, \ldots, n_{\phi}$.
\end{dfn}

{\em Remark}: Def.~\ref{def:MEEP} allows us the freedom to assign a single estimate to multiple phases when calculating the final mean-squared-error. 

\subsection{Methods of dense signal phase estimation}\label{sec:dense_phase_methods}

Achieving the Heisenberg limit for multiple eigenvalues requires solving the problem considered in Def.~\ref{def:MEEP} with a total quantum cost $T = O(\delta^{-1})$.
We intend to accomplish that goal with a multi-order estimation scheme.
At each order $d$, we will use a data processing method to estimate multiple eigenphases $\theta^{(d)}_j$ of $U^{k_d}$ to within some error $\epsilon$, from data generated by PFE (analogous to step \ref{step:singlephase_deftheta} of Alg.~\ref{alg:single_phase_Heisenberg_limit}).
(We will stitch the estimates of the phases $\theta^{(d)}_j$ together to give Heisenberg-limited estimates of the corresponding $\phi_j$ in a manner similar to step \ref{step:update-QPE} of Alg.~\ref{alg:single_phase_Heisenberg_limit}.)
We can offload the estimation of $\theta^{(d)}_j$ to a subroutine; we will show later that we can afford a subroutine with superlinear scaling in $\epsilon$ as the final error $\delta$ in our multi-order scheme can be made arbitrarily small even at fixed $\epsilon$.
In this section we discuss the two subroutines that we will consider in this work: the matrix pencil method (Alg.~\ref{alg:mps}) first studied for QPE in Ref.~\cite{Obrien19Quantum}, and the `time series analysis' proposed in Ref.~\cite{Somma19Quantum} to solve the QEEP problem mentioned above.

We detail our implementation of the matrix pencil method in Alg.~\ref{alg:mps}; this is a well-known algorithm in signal processing, that is known to achieve the dense-sampling limit for a single eigenvalue~\cite{Rao45Information, Hua90Matrix}.
However, bounding the error of the matrix pencil method in estimating many phases typically requires a minimal gap $\Delta$ between these phases, $\Delta=\min_{i\neq j} |\phi_i-\phi_j|_T$, and that we query the PFE to obtain estimates of $g(k)$ at $k>\frac{1}{\Delta}$.
This is a proven necessary condition to estimate multiple $\phi_j$ to error $\epsilon\leq c\Delta$~\cite{Moitra14Super} for some constant $c$.
In our case, we need to allow for the case where we are estimating two phases $\theta_0^{(d)}$, $\theta_1^{(d)}$ to an error $\epsilon\geq |\theta_0^{(d)}-\theta_1^{(d)}|$.
In principle this is not forbidden by the result of~\cite{Moitra14Super} (and numerical simulation confirms that this indeed works), but we do not know of a formal statement about the scaling of the matrix pencil method in this situation.
Instead, we opt for a different method for the purposes of forming a rigorous proof of the Heisenberg limit, and test the matrix pencil method in numerics only.

Ref.~\cite{Somma19Quantum} proved rigorously the QEEP can be solved from the densely sampled signal $g(k)$ generated with the {\rm PFE} in Def.~\ref{def:phase_function_estimation_problem} using a `time-series analysis' algorithm.
We review the results of Ref.~\cite{Somma19Quantum} in detail in App.~\ref{app:QEEP}.
However, the solution to the QEEP is an estimation $\tilde{A}(\phi)$ of a discretization of the spectral function $A(\phi)$ (Eq.~\eqref{eq:spectral_function}) rather than a set of estimates $\{\tilde{\theta}_j^{(d)}\}$.
To use the time-series analysis algorithm as a subroutine in our multi-order phase estimation algorithm then requires converting from one form to the other.
This is achieved by Alg.~\ref{alg:phase_extraction}, the Conservative QEEP Eigenvalue Extraction algorithm.
This algorithm is designed so that its output fulfills the following guarantees whenever the time-series analysis algorithm succeeds (defined as $\|\tilde{A}(\phi)-A(\phi)\|_1\leq\epsilon$)

\begin{lem}\label{lem:phase_extraction_promises}
Fix a confidence bound $0<p<1$, an overlap bound $A$, a number of phases $n_{\theta} < \frac{1}{A}$, and an error bound $0 < \epsilon < \frac{A}{3}$.
Let $g(k)=\sum_jA_je^{ik\theta_j}$ be the phase function for a unitary $V$, with $A_j >A$ for exactly $n_\theta$ phases $\theta_j$, and $A_j = 0$ for all other phases. 
Let $\{\tilde{\theta}_l\}$ be a set of estimates of $\{\theta_j\}$ generated by Alg.~\ref{alg:phase_extraction} with error $\epsilon$ and confidence bound $p$.
With probability at least $p$, the following statements are true:
\begin{enumerate}
    \item \label{stat1} For each phase $\theta_j$ with $A_j>0$, there exists at least one estimate $\tilde{\theta}_l$ such that $$|\theta_j-\tilde{\theta}_l|_T\leq 2\epsilon.$$
    \item \label{stat2} For each estimate $\tilde{\theta}_l$ there exists at least one phase $\theta_j$ with $A_j>0$ such that $$|\theta_j-\tilde{\theta}_l|_T\leq 2\epsilon.$$
    \item \label{stat3} The number of estimates $|\{\tilde{\theta}_l\}|\leq n_{\theta}$.
\end{enumerate}
\end{lem}
See App~\ref{app:QEEP} for a proof.
The total quantum cost of Alg.~\ref{alg:phase_extraction} is inherited directly from the cost of the time series analysis algorithm (as it involves no additional quantum circuitry), which is $O(\epsilon^{-6}|\log(1-p)|)$.

\section{Multiple eigenvalues: multi-order estimation and the phase matching problem}\label{sec:phase_matching_problem}

To achieve Heisenberg-limited scaling for multiple phases, we combine the dense signal algorithms which can resolve multiple phases in the previous section with the single-phase Heisenberg limited algorithm, Algorithm~\ref{alg:single_phase_Heisenberg_limit}, which achieves the correct scaling.

A natural way to achieve such combination is to estimate phases $\{\theta^{(d)}_j\}$ of $V=U^{2^d}$ for multiple orders $d=0,1\ldots, d_f$ via a dense signal method (e.g. Alg.~\ref{alg:phase_extraction}), and then combine them in the same manner as in Algorithm ~\ref{alg:single_phase_Heisenberg_limit}. 

If we would manage to get an estimate of $\theta_j^{(d)}=\phi_j 2^d$ at each order $d$ with error $\epsilon$ {\em and} be able to combine these estimates in an unequivocal manner, then reaching the Heisenberg limit for multiple phases may be feasible.
Note that the error in the final $d_f^{\rm th}$ estimate in this case would be $\delta\sim \epsilon / 2^{d_f}$ with $\epsilon$ in Algorithm \ref{alg:phase_extraction}. One could thus achieve arbitrarily small $\delta$ for fixed $\epsilon$ by making $d_f$ arbitrarily large. This allows us to use (possibly non-optimal) routines such as the QEEP algorithm since the scaling with $\epsilon$ does not propagate to a scaling in $\delta$ for a sufficiently small~$\epsilon$.

However, the combination of phase information at different orders in the case of multiple phases may not be feasible when we use $V=U^{2^d}$ for increasing $d$ as in Algorithm \ref{alg:single_phase_Heisenberg_limit}. The reason is that if we have multiple phases, the previous order estimates provide sets of `ballpark' intervals and it may not clear or unambiguous which interval to choose in order to convert a new estimate $\tilde{\theta}^{(d)}_j$ to an updated $\tilde{\phi}^{(d)}_j$ (as in step \ref{step:update-QPE} of Algorithm \ref{alg:single_phase_Heisenberg_limit}). We would like the choice of the next order to be such that this `matching with a previous estimate' can be done unambiguously.

For this, we will estimate the multiple eigenphases of $V=U^{k_d}$ for $k_d=\prod_{d'=1}^d\kappa_{d'}$ with $\kappa_d\geq 2$ a, possibly non-integer, multiplier.
For this algorithm to have a means of associating each $d$th order estimate $\tilde{\theta}^{(d)}_j$ with a single previous-order estimates $\tilde{\phi}^{(d-1)}_j$, we use an adaptive strategy for choosing the next multiplier $\kappa_d$ in Alg.~\ref{alg:adaptive}. That is, the algorithm will determine a $\kappa_d$ based on the estimates $\tilde{\phi}_j^{(d-1)}$ from the previous round. Although this scheme requires some classical processing of the experimental data before the experiment is finished, it is not an adaptive scheme in the same sense as iterative QPE~\cite{Kitaev95Quantum}, as we do not require feedback within the coherent lifetime of a single experiment.

The generalization from using $U^{2^d}$ to $U^{k_d}$ for $k_d\in\RR^+$ presents one small additional complication. In order to prove bounds on the estimation at each order we will require invoking Lemma~\ref{lem:normscaling}.
However, this requires that our phases $\phi_j$ satisfy Eq.~\eqref{eq:normscaling_condition} (unless $k_d\in\mathbb{Z}^+$~). If a phase $\phi_j$ does not satisfy Eq.~\eqref{eq:normscaling_condition}, one can construct a situation where two corresponding estimates $\tilde{\phi}_j^{(d)}$ are found on either side of the branch cut at $2\pi$, and where we cannot guarantee that our algorithm would choose the `correct' one (without knowledge of the hidden $\phi_j$). To solve this issue, we note that one may shift the phases of $U$ by a constant $\chi$ by performing phase estimation on $Ue^{-i\chi}$ instead of $U$.
This need not even be done on the quantum device, as one simply multiplies estimates of $g(k)$ by $e^{-ik\chi}$.
As we assume the existence of only $n_{\phi}$ phases, we can always find some $Ue^{i\chi}$ with phases in some window $[\phi_{\min}, \phi_{\max})$ with $\phi_{\min}\geq \frac{\pi}{k}$, $\phi_{\max}\leq \frac{\pi(2\lfloor k\rfloor -1)}{k}$ when $k \geq 3n_{\phi}$.
This will allow us to invoke Lem.~\ref{lem:normscaling} to match estimates of eigenphases of $Ue^{i\chi}$ and estimates of eigenphases of $(Ue^{i\chi})^k$ as we require. We also note that the above issue can be circumvented when $U=e^{iHt}$ by a suitable choice of $t$.

\subsection{Heisenberg-limited algorithm for multiple phases}
\label{sec:toHeis}

We now describe our Heisenberg-limited phase estimation algorithm.
This algorithm targets a final error $\delta=O(\delta_c)$, where $\delta_c$ is a fixed input to the algorithm itself (We will calculate the constant of proportionality in the proof of Theorem \ref{thm:adaptive_gets_Heisenberg_limit}).
The Heisenberg limit will be achieved by making this $\delta_c$ smaller while keeping the error $\epsilon$ of the phase extraction subroutine, Alg. \ref{alg:phase_extraction}, constant. 

\begin{alg}\label{alg:adaptive}[Adaptive multi-order phase estimation algorithm]
We assume access to the conservative QEEP eigenvalue extraction algorithm, Alg.~\ref{alg:phase_extraction} for a unitary $V=U^k$ (for arbitrary $k \in\RR^+$), and an initial state $|\Psi\rangle$. Fix a final error $\delta_c$, an overlap bound $A$, a number of phases $n_{\phi}\leq A^{-1}$, and error parameters $\epsilon_0$ and $\epsilon$ bounded as 
\begin{equation}
\label{eq:eps-n0}
    \epsilon_0 \leq \epsilon_{\mathrm{crit},0} \equiv \frac{2 \pi}{300 n_{\phi}^4}
\end{equation}
and
\begin{equation}
\label{eq:eps-n}
    \epsilon \leq \epsilon_{{\rm crit}}\equiv \frac{2\pi}{300 n_{\phi}^2}.
\end{equation}
Let the confidence parameter $p_d$ be
\begin{equation}
        p_d=1-e^{-\alpha}\left(\frac{k_d\delta_c}{\pi}\right)^{\gamma},
        \label{eq:defpd}
    \end{equation}
given some real numbers $\alpha> 0$, $\gamma>2$ and $k_d$ to be chosen below. The algorithm proceeds as follows:
\begin{enumerate}
    \item \label{step:order0} Let $d=0$ and $k_{d=0}=1$. Use Alg.~\ref{alg:phase_extraction} to find a set of first estimates $\{\tilde{\phi}^{(0)}_j\}$ of eigenvalues of $U$ with error parameter $\epsilon_0$ in Eq.~\eqref{eq:eps-n0}, overlap bound $A$, and confidence $p_{d=0}$ in Eq. \eqref{eq:defpd}. If this set is empty or has more than $n_{\phi}$ elements, return $\{0\}$ (this is a failure mode).
    \item \label{step:shift} Find the point $\zeta \in [0,2\pi)$ defined by
    \begin{equation}
        \zeta=\argmax_{\zeta'\in[0,2\pi]}\;\min_j\left|\tilde{\phi}^{(0)}_j-\zeta'\right|_T,
    \end{equation}
    i.e. $\zeta$ is the midway point in the largest gap between the phase estimates $\tilde{\phi}^{(0)}_j$. Let 
    \begin{equation}
        d_{\zeta} = \min_j\left|\tilde{\phi}^{(0)}_j-\zeta\right|_T,
    \end{equation}
    i.e. $d_{\zeta}$ is half the size of the largest gap. Shift the unitary $U\rightarrow Ue^{-i(\zeta+\frac{1}{2}d_{\zeta}-8\epsilon_0)}$, $\tilde{\phi}_j^{(0)}\rightarrow\tilde{\phi}^{(0)}_j-\zeta-d_{\zeta}/2+8\epsilon_0\mod 2\pi$.
    \item \label{step:k1} Choose $\kappa_1=k_1$ with $k_1 \in [3n_\phi, 3n_{\phi}+1]$ such that for all $\tilde{\phi}^{(0)}_j \neq \tilde{\phi}^{(0)}_l$, either
        \begin{equation}
        |\tilde{\phi}_j^{(0)}k_1 - \tilde{\phi}_l^{(0)}k_1|_T > 4\epsilon_0(1+ k_1).~\label{eq:first_round_aliasing}
    \end{equation}
     or
    \begin{equation}
        |\tilde{\phi}_j^{(0)}-\tilde{\phi}_l^{(0)}|_T < \frac{\pi}{k_1}.~\label{eq:first_round_aliasing_lower_bound}
    \end{equation}
    \item \label{step:loop} While $k_d<\frac{2\epsilon}{\delta_c}$ :
    \begin{enumerate}
    \item Set $d\rightarrow d+1$.
    \item \label{step:exeQEEP} Use Alg.~\ref{alg:phase_extraction} to find a set of estimates $\{\tilde{\theta}^{(d)}_l\}$ of eigenvalues of $V=U^{k_d}$ with error parameter $\epsilon$ in Eq.~\eqref{eq:eps-n}, overlap bound $A$, and confidence $p_d$ in Eq.~\eqref{eq:defpd}.
    \item  \label{step:fail1} If there exists some $\tilde{\phi}_j^{(d-1)}$ such that
    {\small
    \begin{equation}
    \min_l|k_d\tilde{\phi}_j^{(d-1)}-\tilde{\theta}_l^{(d)}|_T>2\epsilon(1+\kappa_d),\label{eq:far1}
    \end{equation}
    }
    or there exists some $\tilde{\theta}_l^{(d)}$ such that
    {\small
    \begin{equation}
        \min_j|k_d\tilde{\phi}_j^{(d-1)}-\tilde{\theta}_l^{(d)}|_T>2\epsilon(1+\kappa_d) \label{eq:far2},
    \end{equation}}
    or the number of estimates $|\{\tilde{\theta}_l^{(d)}\}|>n_{\phi}$, return $\{\tilde{\phi}_j^{(d-1)} + \zeta+d_{\zeta} / 2 -8\epsilon_0 \mod 2\pi\}$. This is a failure mode. 
    \item  \label{step:update} If not, for each $\tilde{\theta}_l^{(d)}$, find the estimate $\tilde{\phi}_j^{(d-1)}$ and an integer $n\in[0,k_d)$ which minimizes
    \begin{equation}
        |\tilde{\phi}_j^{(d-1)} - (\tilde{\theta}_l^{(d)}+2\pi n)/k_d|_T,\label{eq:cost_function_nj}
    \end{equation}
    and set $\{\tilde{\phi}_l^{(d)}\}=\{(\tilde{\theta}^{(d)}_l+2\pi n)/k_d\}$.
    \item \label{step:fail2} If any $\tilde{\phi}_j^{(d)}\in[0,\frac{\pi}{k_d})\cup (\frac{\pi(2\lfloor k_d\rfloor - 1)}{k_d}, 2\pi]$, return $\{\tilde{\phi}_j^{(d-1)} + \zeta+d_{\zeta} / 2 - 8\epsilon_0 \mod 2\pi\}$. This is a failure mode.
      \item \label{step:krest} Choose the multiplier $\kappa_{d+1}\in [2,3]$ such that for all $\tilde{\phi}^{(d)}_j \neq \tilde{\phi}^{(d)}_l$, either
    \begin{align}
        |\tilde{\phi}_j^{(d)}k_d\kappa_{d+1} - \tilde{\phi}_l^{(d)}k_{d}\kappa_{d+1}|_T\nonumber\\
         > 4\epsilon(1+\kappa_{d+1}).~\label{eq:aliasing_bound}
    \end{align}
     or
     {\small
    \begin{equation}
        |\tilde{\phi}_j^{(d)}-\tilde{\phi}_l^{(d)}|_T < \frac{\pi-2\epsilon(1+\kappa_{d+1})}{k_d\kappa_{d+1}},~\label{eq:aliasing_lower_bound}
    \end{equation}}
    and set $k_{d+1}=k_{d}\kappa_{d+1}$.
    \end{enumerate}
    \item Return $\{\tilde{\phi}_j^{(d)} + \zeta+d_{\zeta} / 2 - 8\epsilon_0 \mod 2\pi\}$.
\end{enumerate}
\end{alg}
In principle the first few orders $d$ could be skipped given accurate prior knowledge of our phases $\phi_j$.
However, as the largest circuits are executed at the latter $d$ values, this will only change the constant factor of the algorithm, rather than the asymptotic scaling with $\delta$.

In the rest of this section, we prove that Alg.~\ref{alg:adaptive} can achieve the Heisenberg limit.
The first use of the QEEP subroutine requires a potentially smaller error parameter ($\epsilon_0$, bounded by Eq.~\eqref{eq:eps-n0}) than subsequent uses (where $\epsilon$ needs to be only bounded by Eq.~\eqref{eq:eps-n}).
(Invoking Alg.~\ref{alg:phase_extraction} requires that the error parameters $\epsilon$ and $\epsilon_0$ are at most $A/3 \leq 1/(3 n_{\phi})$, which is fulfilled by both bounds.)
This relates to a technical issue: we require $k_1\geq 3 n_{\phi}$ in order for Lemma \ref{lem:shifting_unitary} in Appendix \ref{app:shift} and thus Lemma \ref{lem:normscaling} to apply.
For later rounds $k_d \geq 3 n_{\phi}$ automatically, and the multiplier $\kappa_d$ is no longer constrained, which indirectly allows us to relax the region of valid choices for $\epsilon$. 
The first step in proving the performance of Algorithm \ref{alg:adaptive} is to show that the multipliers can be chosen in the first (step \ref{step:k1} in Alg. \ref{alg:adaptive}), and subsequent rounds (step \ref{step:krest} in Alg. \ref{alg:adaptive}), which obey the desired conditions.
This is accomplished by the following Lemma which is proved in Appendix \ref{app:phase_matching_proof}. 
\begin{lem}\label{lem:phase_matching_solution}
Let $\{\tilde{\phi}_j^{(0)}\}\in [0,2\pi)$ be a set of at most $n_{\phi}$ phases. Assuming Eq.~\eqref{eq:eps-n0}, for a randomly chosen $k_1  \in [3n_\phi, 3 n_{\phi}+1]$ with probability at least $1/2$, either Eq.~\eqref{eq:first_round_aliasing} or Eq.~\eqref{eq:first_round_aliasing_lower_bound} holds for all $\tilde{\phi}_j^{(0)}\neq \tilde{\phi}_l^{(0)}$. Fix a $k_d$. Let $\{\tilde{\phi}^{(d)}_j\}\in[0,2\pi)$ be a set of at most $n_{\phi}$ phases. Assuming Eq.~\eqref{eq:eps-n}, for a randomly chosen $\kappa_{d+1}\in [2,3]$ with probability at least $3/4$, either Eq.~\eqref{eq:aliasing_bound} or Eq.~\eqref{eq:aliasing_lower_bound}  holds for all $\tilde{\phi}_l^{(d)} \neq \tilde{\phi}_j^{(d)}$.
\end{lem}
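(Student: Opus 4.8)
The statement has two parts, both of which assert that a random choice of multiplier satisfies one of two dichotomous conditions for every pair of estimates. The plan is to handle them by the same probabilistic argument: for each fixed pair of phases, show that the ``bad'' set of multipliers (those for which \emph{neither} condition holds) is small, and then union-bound over all $O(n_\phi^2)$ pairs. I will carry out the second part (the generic round, $\kappa_{d+1}\in[2,3]$) in detail and treat the first part ($k_1\in[3n_\phi,3n_\phi+1]$) as an analogous computation.

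\emph{First I would} fix a pair $\tilde{\phi}_j^{(d)}\neq\tilde{\phi}_l^{(d)}$ and set $\Delta:=|\tilde{\phi}_j^{(d)}-\tilde{\phi}_l^{(d)}|_T$. If $\Delta$ is already small enough that Eq.~\eqref{eq:aliasing_lower_bound} holds for \emph{every} $\kappa_{d+1}\in[2,3]$, that pair is never bad, so I may assume $\Delta$ exceeds the relevant lower-bound threshold. \emph{Then} the quantity of interest is $|\tilde{\phi}_j^{(d)}k_d\kappa_{d+1}-\tilde{\phi}_l^{(d)}k_d\kappa_{d+1}|_T = |k_d\Delta\,\kappa_{d+1}|_T$, viewed as a function of the real variable $\kappa_{d+1}$. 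As $\kappa_{d+1}$ ranges over the interval $[2,3]$ of length $1$, the argument $k_d\Delta\,\kappa_{d+1}$ sweeps through a range of length $k_d\Delta$, so on the circle it wraps around $\lfloor k_d\Delta/2\pi\rfloor$ or so times. The condition Eq.~\eqref{eq:aliasing_bound} fails precisely when $|k_d\Delta\,\kappa_{d+1}|_T\le 4\epsilon(1+\kappa_{d+1})\le 16\epsilon$ (using $\kappa_{d+1}\le 3$), i.e.\ when the swept point lands within a window of half-width $\le 16\epsilon$ around a multiple of $2\pi$. The key estimate is therefore: the Lebesgue measure of the bad set of $\kappa_{d+1}$ for this single pair is at most the number of wraparounds times the window width divided by the sweep rate, which I bound by roughly $\dfrac{(k_d\Delta/2\pi)\cdot 2\cdot 16\epsilon}{k_d\Delta} = \dfrac{16\epsilon}{\pi}$, independent of $k_d\Delta$.

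\emph{Next} I would union-bound over all unordered pairs. There are at most $\binom{n_\phi}{2}<n_\phi^2/2$ pairs, so the total measure of bad multipliers is at most $\dfrac{n_\phi^2}{2}\cdot\dfrac{16\epsilon}{\pi}=\dfrac{8n_\phi^2\epsilon}{\pi}$. Invoking the hypothesis Eq.~\eqref{eq:eps-n}, $\epsilon\le\dfrac{2\pi}{300 n_\phi^2}$, this bad measure is at most $\dfrac{8n_\phi^2}{\pi}\cdot\dfrac{2\pi}{300 n_\phi^2}=\dfrac{16}{300}<\dfrac14$. Since $\kappa_{d+1}$ is drawn uniformly from an interval of length $1$, the probability of landing in the good set is at least $3/4$, as claimed. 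For the first part I repeat the computation with $\kappa_1=k_1$ drawn from $[3n_\phi,3n_\phi+1]$ and the stronger bound Eq.~\eqref{eq:eps-n0} on $\epsilon_0$; here the sweep rate and the threshold $4\epsilon_0(1+k_1)$ carry an extra factor of $k_1=O(n_\phi)$, which is exactly why $\epsilon_0$ must scale as $n_\phi^{-4}$ rather than $n_\phi^{-2}$ to force the bad measure below $1/2$.

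\emph{The main obstacle} is making the per-pair measure bound fully rigorous rather than heuristic. The informal ``wraparounds times window width over sweep rate'' argument hides the fact that the function $\kappa\mapsto|k_d\Delta\,\kappa|_T$ is piecewise linear with slope $\pm k_d\Delta$, and the bad set is an exact union of intervals around each point where $k_d\Delta\,\kappa\in 2\pi\mathbb{Z}$ lying in $[2,3]$; I must count these intervals correctly (paying attention to partial windows at the endpoints of $[2,3]$ and to the mild dependence of the threshold $4\epsilon(1+\kappa)$ on $\kappa$) and verify the cancellation of $k_d\Delta$ is genuine. The subtlety that Eq.~\eqref{eq:aliasing_lower_bound} itself depends on $\kappa_{d+1}$ means the ``I may assume $\Delta$ is large'' reduction must be stated carefully, since the threshold $\dfrac{\pi-2\epsilon(1+\kappa_{d+1})}{k_d\kappa_{d+1}}$ varies over the interval; I expect to handle this by checking that whenever Eq.~\eqref{eq:aliasing_bound} fails, the relation $|k_d\Delta\kappa_{d+1}|_T\le 16\epsilon$ forces $\Delta$ close to a multiple of $2\pi/(k_d\kappa_{d+1})$, and that the nearest such multiple being $0$ is exactly the regime where Eq.~\eqref{eq:aliasing_lower_bound} takes over, so the two conditions dovetail and no pair can be bad for \emph{structural} reasons.
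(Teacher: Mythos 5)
Your strategy is the same as the paper's (Appendix~\ref{app:phase_matching_proof}): decompose the per-pair bad set into resonance windows around $\kappa_{d+1}k_d\Delta\in 2\pi\mathbb{Z}$, bound its Lebesgue measure, and union-bound over at most $n_\phi^2/2$ pairs. The gap is in your key estimate, and it is not an ignorable constant. As $\kappa_{d+1}$ sweeps $[2,3]$, the angle $\kappa_{d+1}k_d\Delta$ passes through up to $\lfloor k_d\Delta/2\pi\rfloor+1$ multiples of $2\pi$, not $k_d\Delta/2\pi$ of them; the extra (partial) window contributes measure $\approx 32\epsilon/(k_d\Delta)$, and since your reduction only guarantees $k_d\Delta\geq(\pi-8\epsilon)/3\approx 1$, this dropped term is $\approx 96\epsilon/\pi$ --- six times the $16\epsilon/\pi$ you keep. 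Correcting your per-pair bound to $\approx 112\epsilon/\pi$ and redoing the union bound with $\epsilon=2\pi/(300n_\phi^2)$ gives a forbidden measure bounded only by $112/300>1/4$, so the claimed success probability $3/4$ does not follow from your argument as stated. The conclusion can be rescued, but only via a fact you never derive: for Eq.~\eqref{eq:aliasing_bound} to fail at all with $\kappa_{d+1}\leq 3$, the swept angle must reach $2\pi-16\epsilon$, which forces $k_d\Delta\geq(2\pi-16\epsilon)/3\approx 2$; the paper instead writes each window $R^{(n)}_{j,l}$ as an explicit interval, sums the exact lengths over $1\leq n\leq n_{\max}$, and verifies a per-pair measure of at most $0.24$, squeaking under $1/4$ after the union bound --- the margin in this lemma is thin, and the term you discard is exactly what consumes it. The same omission breaks your first-round claim: your formula (wraparounds times window width over sweep rate) gives a per-pair measure $\approx 4\epsilon_0(3n_\phi+2)/\pi$, which after the union bound would let $\epsilon_0\sim n_\phi^{-3}$ suffice; the actual $n_\phi^{-4}$ in Eq.~\eqref{eq:eps-n0} is forced precisely by the boundary window, whose width is $\approx 8\epsilon_0(3n_\phi+2)/\Delta$ with $\Delta$ as small as $\pi/(3n_\phi+1)$, i.e. $\Theta(n_\phi^2\epsilon_0)$ per pair.

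There is also a smaller but genuine error at the start: you set $\Delta=|\tilde\phi^{(d)}_j-\tilde\phi^{(d)}_l|_T$ and write $|\tilde\phi^{(d)}_jk_d\kappa_{d+1}-\tilde\phi^{(d)}_lk_d\kappa_{d+1}|_T=|k_d\Delta\,\kappa_{d+1}|_T$. This identity fails for non-integer $k_d\kappa_{d+1}$: if the representatives in $[0,2\pi)$ wrap around, the true difference and $\pm\Delta$ differ by $2\pi$, and after multiplication by $k_d\kappa_{d+1}\notin\mathbb{Z}$ the two angles differ by $2\pi k_d\kappa_{d+1}\notin 2\pi\mathbb{Z}$, so their $|\cdot|_T$ values disagree --- the torus norm does not commute with non-integer scaling, which is exactly the issue Lemma~\ref{lem:normscaling} exists to control. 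The paper avoids this by defining $\Delta_{j,l}=|\tilde\phi^{(d)}_j-\tilde\phi^{(d)}_l|$ as the plain absolute difference of representatives and sweeping that quantity. Your measure estimates survive the substitution (the plain difference is at least the torus distance, so the sweep is no slower), but as written the set you measure is not the bad set of the lemma.
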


{\em Remarks}: The probability with which a multiplier can be found which obeys the desired property is rather arbitrary in this Lemma and can be increased by choosing a smaller $\epsilon$. Note that it is easy to verify whether for a randomly chosen multiplier the desired conditions hold or not. The validity of this Lemma importantly does not depend on whether the phase estimates are actually accurate, it only depends on the number of phases $n_{\phi}$. In practice, we do not generate a random multiplier $\kappa_{d+1}$ through this Lemma, but simply exhaustively search for a valid $\kappa_{d+1}$ starting at the maximal value, see Section \ref{sec:numerics}.

The reason to adaptively choose the multiplier $\kappa_{d+1}$ for $d=0,1,\ldots$ is that two (estimated) phases in principle need to lead to separate estimates at the next order: this is expressed in Eq.~\eqref{eq:aliasing_bound}. An exception to this occurs when the (estimated) phases are still close enough, as in Eq.~\eqref{eq:aliasing_lower_bound}, so that their next-order refined estimates could merge at the next order, see Fig.~\ref{fig:aliasing}. Phase estimates can thus split and merge over the multiple orders. They split when sufficient accuracy is available at the next order to distinguish them, they can stay or are allowed to merge when such accuracy is not yet needed at the given order. 

In what follows below we will assume, just for simplicity of the proof, that the error parameter $\epsilon$ is bounded by $\epsilon_{{\rm crit},0}$ in Eq.~\eqref{eq:eps-n0} for all rounds, and $\epsilon$ is the same for all rounds, including the first one.\\

\begin{figure*}[ht!]
    \centering
    \includegraphics[width=\textwidth]{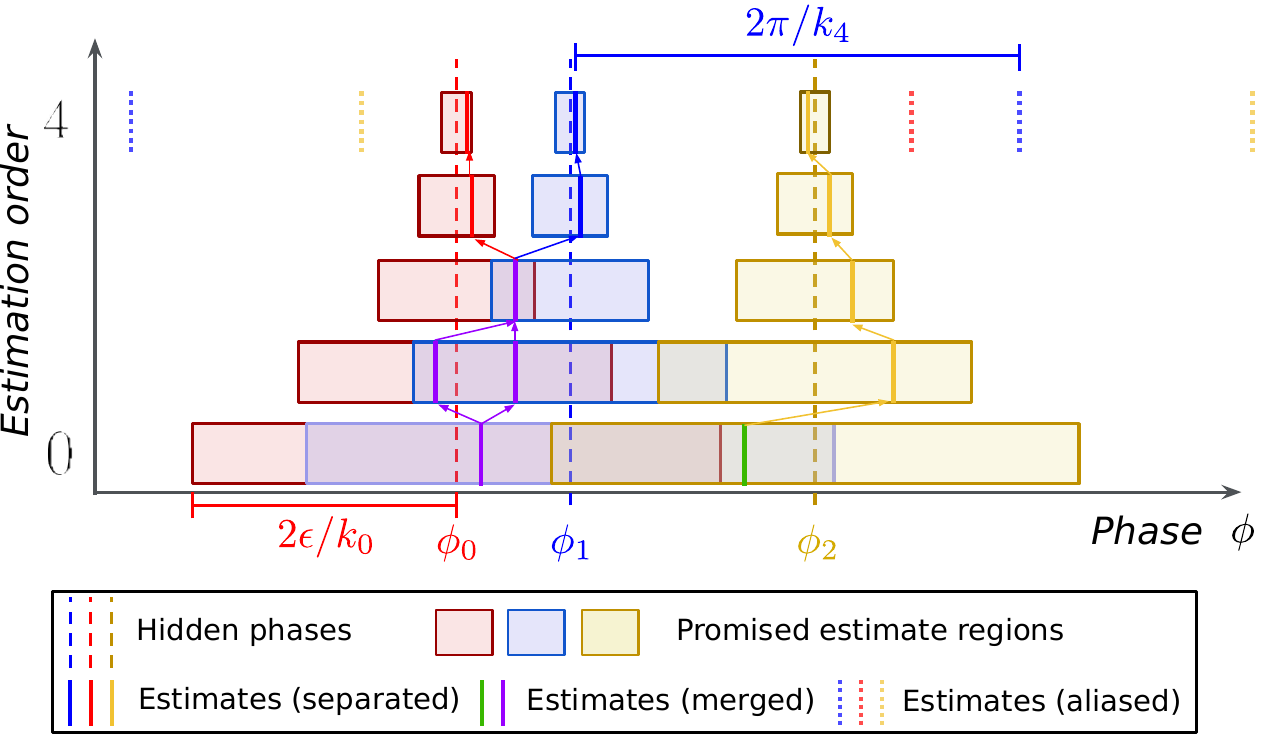}
    \caption{Schematic of the execution of Alg.~\ref{alg:adaptive} to estimate three hidden phases, $\phi_0$, $\phi_1$ and $\phi_2$ (dashed lines). The algorithm progresses from bottom to top as the estimation order $d$ increases. At each order $d$, the phase extraction subroutine (Alg.~\ref{alg:phase_extraction}) promises to return an estimate $\tilde{\theta}^{(d)}_{\rightarrow j}$ of each $\theta^{(d)}_j=k_d\phi_j\mod 2\pi$ that corresponds to an estimate $\tilde{\phi}^{(d)}_{\rightarrow j}$ (solid lines) lying within the promised estimate region about $\phi_j$ (coloured boxes). By matching phases at subsequent orders (arrows), the algorithm is able to converge to an ever-more accurate estimate of each $\phi_j$. The phase extraction subroutine only promises that each region will contain at least one phase (and that the total number of estimates at each order is bounded by $n_{\phi}$) - when two regions overlap, the subroutine may merge the phases to give a single estimate (green and purple lines). Estimates at subsequent orders may continue to separate and even re-merge until the regions separate, at which point the algorithm promises with high confidence a precise estimate for each hidden phase. The estimate $\tilde{\phi}_{\rightarrow j}^{(d)}$ at each order is only known $\mod (2\pi)/k_d$, leading to a set of potential aliases (dotted lines at $d=4$) for each phase. We do not know a priori which alias is correct, and must rely on the fact that the true estimate $\tilde{\phi}_{\rightarrow j}^{(d)}$ needs to be close to a previous estimate $\tilde{\phi}_{\rightarrow j}^{(d-1)}$. By carefully choosing each $k_d$, we can guarantee that no alias will satisfy this condition (so long as Alg.~\ref{alg:phase_extraction} succeeds), and our phase matching will be unambiguous.}
    \label{fig:aliasing}
\end{figure*}

\subsection{Bounding the error with and without failures}

In this section we state and prove the two key intermediate lemmas, Lemma \ref{lem:phase_matching_checks} and Lemma \ref{lem:phase_matching_with_failures} on our way towards proving that Alg.~\ref{alg:adaptive} reaches the Heisenberg-limit. Together, these lemmas allow us to bound the error in Alg.~\ref{alg:adaptive}, --assuming that the phase extraction subroutine succeeds for the first $d$ rounds--, to within $O(\epsilon/k_d)$.

These Lemmas deal with the issue of `aliasing' or the correct matching of new estimates with older estimates which is solved by the specific choice of $\kappa_{d+1}$ in step \ref{step:krest} of Alg.~\ref{alg:adaptive}, see also Fig.~\ref{fig:aliasing}. 
It is important to note that there is no 1-1 relation between these estimates and the actual phases since the number of estimates is at most the number of phases.

Let $d_f$ be the last order executed in Alg.~\ref{alg:adaptive}, i.e. the last order for which we go through step \ref{step:exeQEEP}, construct the estimates $\{\tilde\theta_l^{(d_f)}\}$ and pass the tests at step \ref{step:fail1} and \ref{step:fail2} and output $\{\tilde\phi_l^{(d_f)}\}$. When none of the failure modes is encountered, $d_f$ is set by the first $k_d$ such that $k_{d} \geq \frac{\epsilon}{\delta_c}$ (since the next $k_{d+1}\geq 2\epsilon/\delta_c$ as $\kappa_{d+1}\geq 2$). Since $\kappa_d \geq 2$, we observe that 
\begin{equation}
    d_f \leq \log_2\left( \frac{2\epsilon}{\delta_c}\right).
    \label{eq:upper-df}
\end{equation}
In Corollary \ref{cor:fail} we argue that when the QEEP subroutines, Alg. \ref{alg:phase_extraction}, succeed up to order $d_f$, we indeed never exit via these failure modes. 

\begin{lem}\label{lem:phase_matching_checks}
If each invocation of the QEEP subroutine, Alg.~\ref{alg:phase_extraction}, succeeds in Alg.~\ref{alg:adaptive} up to order $d_f$, then in this last round $d_f$ in step \ref{step:update} it holds that 
\begin{itemize}
    \item (Property 1a) For every phase $\phi_j$ there exists an estimate $\tilde{\phi}_l^{(d_f)}$ such that
    $$|\phi_j-\tilde{\phi}_l^{(d_f)}|_T \leq 2\epsilon/k_{d_f}.$$ 
    \item (Property 1b) For every estimate $\tilde{\phi}_l^{(d_f)}$ there exists a phase $\phi_j$ such that 
    $$|\phi_j-\tilde{\phi}_l^{(d_f)}|_T \leq 2\epsilon/k_{d_f}.$$
\end{itemize}
\end{lem}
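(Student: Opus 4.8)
The plan is to prove the two properties by induction on the order $d$, establishing at every order $d=0,1,\ldots,d_f$ the stronger statement that (1a) every phase $\phi_j$ lies within $2\epsilon/k_d$ of some estimate $\tilde{\phi}_l^{(d)}$, and (1b) every estimate $\tilde{\phi}_l^{(d)}$ lies within $2\epsilon/k_d$ of some phase $\phi_j$; the Lemma is then the case $d=d_f$. Throughout I will use two structural facts guaranteed by the algorithm. First, the shift in step \ref{step:shift} (justified for $k_1\geq 3n_\phi$) places all phases $\phi_j$ in the window of Eq.~\eqref{eq:normscaling_condition}, and the failure check in step \ref{step:fail2} guarantees that every retained estimate $\tilde{\phi}_j^{(d)}$ lies in this window as well, since by the definition of $d_f$ no failure mode of steps \ref{step:fail1}--\ref{step:fail2} fires up to order $d_f$. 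As the windows are nested across orders, previous-order estimates also satisfy Eq.~\eqref{eq:normscaling_condition} at the current order. Consequently Lemma \ref{lem:normscaling} applies to all relevant points, letting me convert $\phi$-distances into $\theta$-distances exactly, $|k_d x|_T=k_d|x|_T$, and ensuring the aliases $\{(\tilde{\theta}_l^{(d)}+2\pi n)/k_d\}$ are cleanly spaced $2\pi/k_d$ apart. I adopt the convention (as in the text) that $\epsilon$ is uniform across all rounds.

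The base case $d=0$ ($k_0=1$) is immediate from Lemma \ref{lem:phase_extraction_promises}, whose statements \ref{stat1} and \ref{stat2} are exactly (1a) and (1b) with $k_0=1$. For the inductive step I first apply Lemma \ref{lem:phase_extraction_promises} to $V=U^{k_d}$ to obtain the $\theta$-space promises: every $\theta_j^{(d)}=k_d\phi_j$ is within $2\epsilon$ of some $\tilde{\theta}_l^{(d)}$, and vice versa. The heart of the argument is to show that the matching performed in step \ref{step:update} always selects, for a given $\tilde{\theta}_l^{(d)}$, the alias $a^\star$ lying closest to the associated true phase $\phi$. Granting this, Lemma \ref{lem:normscaling} converts the $\theta$-space bound into $|a^\star-\phi|_T=\tfrac{1}{k_d}|k_d\phi-\tilde{\theta}_l^{(d)}|_T\leq 2\epsilon/k_d$; applying the selection claim to the phase furnished by statement \ref{stat1} (resp. \ref{stat2}) of Lemma \ref{lem:phase_extraction_promises} then delivers (1a) (resp. (1b)) at order $d$.

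To prove that the matching selects $a^\star$, I show it is a viable candidate and that no competing choice beats it. Viability: by the inductive hypothesis (1a), the phase $\phi$ is within $2\epsilon/k_{d-1}$ of some previous estimate $\tilde{\phi}_m^{(d-1)}$, so $a^\star$ (the closest alias to $\phi$, hence to $\tilde{\phi}_m^{(d-1)}$) lies within $R:=2\epsilon/k_d+2\epsilon/k_{d-1}=\tfrac{2\epsilon(1+\kappa_d)}{k_d}$ of $\tilde{\phi}_m^{(d-1)}$, bounding the cost in Eq.~\eqref{eq:cost_function_nj} by $R$. Optimality: suppose the global minimum instead matches some alias $\hat{a}$ of $\tilde{\theta}_l^{(d)}$ to a \emph{different} previous estimate $\tilde{\phi}_{\hat{j}}^{(d-1)}$ (if it is matched to $\tilde{\phi}_m^{(d-1)}$ the closest alias is $a^\star$ and there is nothing to prove). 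For the pair $\tilde{\phi}_m^{(d-1)},\tilde{\phi}_{\hat{j}}^{(d-1)}$, the defining property of $\kappa_d$ (step \ref{step:krest}, or step \ref{step:k1} for $d=1$), whose existence is Lemma \ref{lem:phase_matching_solution}, forces either Eq.~\eqref{eq:aliasing_bound} or Eq.~\eqref{eq:aliasing_lower_bound}. In the ``split'' case, Lemma \ref{lem:normscaling} turns the two cost bounds into $|k_d\tilde{\phi}_m^{(d-1)}-\tilde{\theta}_l^{(d)}|_T,\,|k_d\tilde{\phi}_{\hat{j}}^{(d-1)}-\tilde{\theta}_l^{(d)}|_T\leq k_dR=2\epsilon(1+\kappa_d)$, and the triangle inequality gives $|k_d\tilde{\phi}_m^{(d-1)}-k_d\tilde{\phi}_{\hat{j}}^{(d-1)}|_T\leq 4\epsilon(1+\kappa_d)$, contradicting Eq.~\eqref{eq:aliasing_bound}. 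In the ``merge'' case, $|\tilde{\phi}_m^{(d-1)}-\tilde{\phi}_{\hat{j}}^{(d-1)}|_T<\tfrac{\pi-2\epsilon(1+\kappa_d)}{k_d}$ combines with $R=\tfrac{2\epsilon(1+\kappa_d)}{k_d}$ to give $|\tilde{\phi}_{\hat{j}}^{(d-1)}-a^\star|_T<\pi/k_d$, i.e. $\tilde{\phi}_{\hat{j}}^{(d-1)}$ lies strictly inside the cell of $a^\star$, so its closest alias is $a^\star$ and $\hat{a}=a^\star$. Either way the matched alias equals $a^\star$.

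The main obstacle I anticipate is precisely this optimality step, and within it the merge case: the constant in Eq.~\eqref{eq:aliasing_lower_bound} is engineered so that $R+\tfrac{\pi-2\epsilon(1+\kappa_d)}{k_d}$ collapses to exactly $\pi/k_d$, and keeping all triangle inequalities strict while tracking the $(1+\kappa_d)$ factors so they match the $4\epsilon(1+\kappa_d)$ threshold is the delicate bookkeeping. A secondary technical point requiring care is the uniform applicability of Lemma \ref{lem:normscaling}: one must verify that phases, previous-order estimates, and candidate aliases all satisfy Eq.~\eqref{eq:normscaling_condition} at order $d$, which rests on the shift of step \ref{step:shift}, the nesting of the windows, and the failure checks not having fired, so that $|k_d x|_T=k_d|x|_T$ may be used freely.
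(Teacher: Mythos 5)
Your proposal follows essentially the same route as the paper's proof: induction on the order, the triangle-inequality ``viability'' bound $\xi \leq 2\epsilon(1+\kappa_d)/k_d$, the dichotomy between the split case (Eq.~\eqref{eq:aliasing_bound}, where any competing matching is strictly more expensive) and the merge case (Eq.~\eqref{eq:aliasing_lower_bound}, where it yields the same alias), and the final identification of the selected alias with the one closest to the true phase via uniqueness of an alias within $\pi/k_d$; you merely organize the optimality step contrapositively and treat Properties 1a and 1b symmetrically, which the paper does in two nearly identical passes.

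There is, however, one auxiliary step that fails as you justify it: the claim that ``the windows are nested across orders,'' which you use to conclude that previous-order estimates satisfy Eq.~\eqref{eq:normscaling_condition} at the current order because they passed the step~\ref{step:fail2} check at their own order. The windows $[\pi/k,\, \pi(2\lfloor k\rfloor-1)/k]$ are not nested in $k$: the upper endpoint decreases between consecutive integers. For instance, with $k_d=4$ and $k_{d+1}=8.9$ (multiplier $2.225\in[2,3]$) the upper endpoints are $1.75\pi$ and $15\pi/8.9\approx 1.69\pi$, so an estimate at $1.72\pi$ passes the check at order $d$ yet violates Eq.~\eqref{eq:normscaling_condition} for $k_{d+1}$. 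The fact you need is still true, but it must be obtained the way the paper obtains it (with ingredients you already have): Lemma~\ref{lem:shifting_unitary} gives the shifted true phases a margin of $14\epsilon_0$ inside the window of Eq.~\eqref{eq:normscaling_condition} simultaneously for \emph{every} $k\geq 3n_\phi$ (Eq.~\eqref{eq:realphases}), and your inductive hypothesis places each estimate $\tilde{\phi}_l^{(d)}$ within $2\epsilon/k_d\leq 2\epsilon_0$ of some true phase, so every estimate lies inside the window for $k_{d+1}$ with margin at least $12\epsilon_0$; no appeal to step~\ref{step:fail2} or to nesting is needed. (Relatedly, your parenthetical ``the closest alias to $\phi$, hence to $\tilde{\phi}_m^{(d-1)}$'' silently requires $2\epsilon(1+\kappa_d)<\pi$, which indeed holds under Eqs.~\eqref{eq:eps-n0} and \eqref{eq:eps-n}; the paper verifies this bound explicitly, and your write-up should too.)
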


\begin{proof} We prove this Lemma by induction. Consider the first round $d=0$ ($k_{d=0}=1$), i.e. step \ref{step:order0} of Alg. \ref{alg:adaptive}. If the QEEP subroutine, Alg.~\ref{alg:phase_extraction}, succeeds (with probability $p_0$) then Lemma \ref{lem:phase_extraction_promises} holds, namely for each $\phi_j$ there exists an estimate $\tilde{\phi}_l^{(0)}$ such that 
\begin{equation}
\label{eq:d0-a}
    |\phi_j-\tilde{\phi}_l^{(0)}|_T \leq 2\epsilon.
\end{equation}
and for each estimate $\tilde{\phi}_l^{(0)}$ there exists at least one $\phi_j$ such that 
\begin{equation}
\label{eq:d0-b}
    |\phi_j-\tilde{\phi}_l^{(0)}|_T \leq 2\epsilon.
\end{equation}
Hence the statement to be proven holds at $d=0$. Now consider step \ref{step:shift} of Alg. \ref{alg:adaptive} and invoke Lemma \ref{lem:shifting_unitary} for which the assumptions are fulfilled by Eqs.~\eqref{eq:d0-a},\eqref{eq:d0-b}. This implies that with the choice of $k_1 \geq 3 n_{\phi}$ in step \ref{step:k1} in Alg.~\ref{alg:adaptive} the shifted phases and their 0th-order estimates obey the technical condition in Lemma \ref{lem:normscaling} and we can use Eq.~\eqref{eq:normscaling}. In the next steps we work with these shifted phases but for simplicity we don't use any new notation and refer to them as $\phi_j$ and estimates $\tilde{\phi}_j^{(d)}$ etc.

Now assume the statement to be proven holds at order $d$ i.e. 
let $\{\tilde{\phi}_l^{(d)}\}$ be a set of at most $ n_{\phi}$ estimates of the phases $\{\phi_j\}$ with
\begin{itemize}
\item (Assumption 1a) For every phase $\phi_j$ there exists an estimate $\tilde{\phi}_l^{(d)}$ such that
$$|\phi_j-\tilde{\phi}_l^{(d)}|_T \leq 2\epsilon/k_{d}.$$
    \item (Assumption 1b) For every estimate $\tilde{\phi}_l^{(d)}$ there exists a phase $\phi_j$ such that
    $$|\phi_j-\tilde{\phi}_l^{(d)}|_T \leq 2\epsilon/k_{d}.$$
\end{itemize}
Note that these assumptions certainly imply that one can apply Lemma \ref{lem:normscaling} to the estimates $\tilde{\phi}_l^{(d)}$. That is, given that the real phases $\phi_j$ are $2\epsilon/k_d$ close to these estimates and that the (shifted) $\phi_j$ obey Eq.~\eqref{eq:realphases}, it implies that Eq.~\eqref{eq:normscaling} can be used with $k\geq 3 n_{\phi}$ (which is the case for all rounds $d\geq 1$).

We consider the QEEP subroutine, Alg. \ref{alg:phase_extraction}, with a given choice of $\kappa_d$ obeying the conditions in step \ref{step:k1} (for $d=1$) and step \ref{step:krest} (for higher $d$), executed in step \ref{step:exeQEEP} with confidence $p_d$. In the math below we refer to conditions on $\kappa_{d> 1}$, namely Eq.~\eqref{eq:aliasing_bound} and Eq.~\eqref{eq:aliasing_lower_bound}, but the conditions on $\kappa_1$ in Eq.~\eqref{eq:first_round_aliasing} and Eq.~\eqref{eq:first_round_aliasing_lower_bound} are of identical form (so we don't make a separate argument for the $d=0 \rightarrow d=1$ induction step).

Let thus $\{\tilde{\theta}^{(d+1)}_l\}$ be a set of estimates of the eigenphases $\{\theta^{(d+1)}_j\}$ of $U^{k_{d+1}}$ corresponding to the set $\{\phi_j\}$, that is, 
\begin{equation}
    \theta^{(d+1)}_j = k_{d+1}\phi_j\mod 2\pi.\label{eq:phitheta}
\end{equation}
and $k_{d+1}=k_d \kappa_{d+1}$. By assuming that Alg.~ \ref{alg:phase_extraction} succeeds we can invoke Lemma \ref{lem:phase_extraction_promises}, namely
\begin{itemize}
    \item (Assumption 2a) For every phase $\theta_j^{(d+1)}$ there exists an estimate $\tilde{\theta}_l^{(d+1)}$ such that $$|\theta_j^{(d+1)}-\tilde{\theta}_l^{(d+1)}|_T \leq 2\epsilon.$$
    \item (Assumption 2b) For every estimate $\tilde{\theta}_l^{(d+1)}$ there exists a phase $\theta_j^{(d+1)}$ such that $$|\theta_j^{(d+1)}-\tilde{\theta}_l^{(d+1)}|_T \leq 2\epsilon.$$
    \end{itemize}
To prove the induction step, we thus need to show that the set $\tilde{\phi}_j^{(d+1)}$ generated by step \ref{step:update} of Alg.~\ref{alg:adaptive} satisfies the following two properties
\begin{itemize}
    \item (Property 1a) For every phase $\phi_j$ there exists an estimate $\tilde{\phi}_l^{(d+1)}$ such that
    $$|\phi_j-\tilde{\phi}_l^{(d+1)}|_T \leq 2\epsilon/k_{d+1}.$$
    \item (Property 1b) For every estimate $\tilde{\phi}_l^{(d+1)}$ there exists a phase $\phi_j$ such that
    $$|\phi_j-\tilde{\phi}_l^{(d+1)}|_T \leq 2\epsilon/k_{d+1}.$$
\end{itemize}

First consider Assumption 2a. Assumption 2a implies that for every phase $\phi_j$ there exists a $\tilde{\theta}_{l}^{(d+1)}$ such that
\begin{equation}
    |k_{d+1} \phi_j-\tilde{\theta}_{l}^{(d+1)}|_T \leq 2\epsilon.
    \label{eq:ass}
\end{equation}
In this proof we will use the label $l=\rightarrow j$ for this $\tilde{\theta}_l^{(d+1)}$ associated with $\phi_j$. Thus, also using Eq.~\eqref{eq:normscaling}, Eq.~\eqref{eq:ass} is equivalent to
\begin{align}
    \min_{n\in \{0, \ldots, \lfloor k_{d+1}\rfloor-1\}} |\phi_j-(\tilde{\theta}_{\rightarrow j}^{(d+1)}+&2\pi n)/k_{d+1}|_T\nonumber\\ &\leq \frac{2\epsilon}{k_{d+1}},
    \label{eq:bound-wanted}
\end{align}
with $n^{\rm ideal}_{j,\rightarrow j}$ the integer which achieves the minimum, i.e.
{\small
\begin{align}
 n^{\rm ideal}_{j,\rightarrow j}=&
    \argmin_{n\in \{0, \ldots, \lfloor k_{d+1}\rfloor-1\}}\Big\{\nonumber\\
 & |\phi_j-(\tilde{\theta}_{\rightarrow j}^{(d+1)}+2\pi n)/k_{d+1}|_T\Big\}.
\end{align}}
Similarly, by Assumption 1a, there is some $\tilde{\phi}_{l=\rightarrow j}^{(d)}$ which is $2\epsilon/k_d$-close to $\phi_j$, again using a label which shows this association.

Consider the optimization in Eq.~\eqref{eq:cost_function_nj} at step \ref{step:update} in Alg.~\ref{alg:adaptive}. We define
\begin{align}
\xi_l&=\min_j \xi_{j,l},\\   \xi_{j,l} &\equiv \left|\tilde{\phi}^{(d)}_j-(\tilde{\theta}_l^{(d+1)}+2\pi n_{j,l})/k_{d+1}\right|_T
\end{align}
with
\begin{align}
    n_{j,l} =& \argmin_{n\in \{0, \ldots, \lfloor k_{d+1}\rfloor-1\}}\Big\{\nonumber\\&\left|\tilde{\phi}^{(d)}_j-(\tilde{\theta}_l^{(d+1)}+2\pi n)/k_{d+1}\right|_T\Big\},\\
    n_l =&\argmin_{n_{j,l}} \xi_{j,l}. \label{eq:njd_def}
\end{align}
The goal is thus to prove that for each $\phi_j$, using the corresponding $\tilde{\theta}_{\rightarrow j}^{(d+1)}$, we have $n_{\rightarrow j}=n^{\rm ideal}_{j,\rightarrow j}$ which directly implies Property 1a.

We can bound using Eq.~\eqref{eq:triangle_inequality} and then Eq.~\eqref{eq:normscaling}, Assumptions 1a and 2a and the optimality of $n_{\rightarrow j, \rightarrow j}$,
\begin{align}
    &\xi_{\rightarrow j,\rightarrow j}\nonumber\\&\hspace{0.2cm}=\left|\tilde{\phi}^{(d)}_{\rightarrow j}-(\tilde{\theta}_{\rightarrow j}^{(d+1)}+2\pi n_{\rightarrow j, \rightarrow j})/k_{d+1}\right|_T \notag \\
    &\hspace{0.2cm}\leq\left|\tilde{\phi}^{(d)}_{\rightarrow j}-\phi_j\right|_T\nonumber \notag \\
    &\hspace{0.2cm}\qquad +\left|\phi_{j}-(\tilde{\theta}_{\rightarrow j}^{(d+1)}+2\pi n^{\rm ideal}_{j, \rightarrow j})/k_{d+1}\right|_T \notag \\
    &\hspace{0.2cm} \leq \frac{2\epsilon}{k_d}+\frac{1}{k_{d+1}}\left|k_{d+1}\phi_j-\tilde{\theta}^{(d+1)}_{\rightarrow j}\right|_T \notag \\
    &\hspace{0.2cm}= \frac{2\epsilon}{k_d} + \frac{1}{k_{d+1}}\left|\theta_j^{(d+1)}-\tilde{\theta}_{\rightarrow j}^{(d+1)}\right|_T \notag \\
    &\hspace{0.2cm} \leq \frac{2\epsilon(1 + \kappa_{d+1})}{k_{d+1}}.\label{eq:boundinglemma_phijthetajnjbound1}
\end{align}

Now if Eq.~\eqref{eq:aliasing_bound} holds for some other $m \neq \rightarrow j$, we claim on the other hand that
\begin{equation}
\xi_{m,\rightarrow j} > \frac{2\epsilon(1 + \kappa_{d+1})}{k_{d+1}},\label{eq:boundinglemma_step_1_mid2}
\end{equation}
hence matching $\theta_{\rightarrow j}^{(d+1)}$ with such $\tilde{\phi}_m^{(d)}$, with $m \neq \rightarrow j$ is non-optimal and will not be chosen in the Algorithm. To indeed see that Eq.~\eqref{eq:aliasing_bound} implies Eq.~\eqref{eq:boundinglemma_step_1_mid2}, we can calculate
{\small
\begin{align}
    &\frac{4\epsilon (1+ \kappa_{d+1})}{k_{d+1}}<\frac{1}{k_{d+1}}\left|k_{d+1}\tilde{\phi}_{\rightarrow j}^{(d)}-k_{d+1}\tilde{\phi}_m^{(d)}\right|_T \notag \\
    &\hspace{2cm}\leq\frac{1}{k_{d+1}}\left|k_{d+1}\tilde{\phi}_{\rightarrow j}^{(d)}-\tilde{\theta}_{\rightarrow j}^{(d+1)}\right|_T\nonumber\\
    &\hspace{2cm}+\frac{1}{k_{d+1}}\left|\tilde{\theta}_{\rightarrow j}^{(d+1)}-k_{d+1}\tilde{\phi}_m^{(d)}\right|_T \notag \\
    &\leq \frac{2\epsilon (1+ \kappa_{d+1})}{k_{d+1}}+\min_{n\in\{0,\ldots,\lfloor k_{d+1}\rfloor -1\}}\Big\{\nonumber\\&\hspace{2cm}\Big|\tilde{\phi}_m^{(d)}-(\tilde{\theta}_{\rightarrow j}^{(d+1)}+2\pi n)/k_{d+1}\Big|_T\Big\} \\
&\Rightarrow    \frac{2\epsilon (1+ \kappa_{d+1})}{k_{d+1}} <\nonumber\\
    &\min_{n\in\{0,\ldots,\lfloor k_{d+1}\rfloor -1\}}\left|\tilde{\phi}_m^{(d)}-(\tilde{\theta}_{\rightarrow j}^{(d+1)}+2\pi n)/k_{d+1}\right|_T \notag \\
    &\hspace{4cm}=\xi_{m, \rightarrow j}.
\end{align}}
Here we have used that Eq.~\eqref{eq:normscaling} holds for the estimate $\tilde{\phi}_m^{(d)}$. 

Alternatively, for those $m \neq \rightarrow j$ for which Eq.~\eqref{eq:aliasing_lower_bound} holds, we claim that
\begin{equation}
    n_{m,\rightarrow j}= n_{\rightarrow j,\rightarrow j},\label{eq:boundinglemma_step_1_mid}
\end{equation}
hence for those $\tilde{\phi}^{(d)}_m \neq \tilde{\phi}^{(d)}_{\rightarrow j}$ the algorithm produces a single new estimate equal to $(\tilde{\theta}_{\rightarrow j}^{(d+1)}+2\pi n_{\rightarrow j,\rightarrow j})/k_{d+1}$.

To see that Eq.~\eqref{eq:aliasing_lower_bound} implies Eq.~\eqref{eq:boundinglemma_step_1_mid} indeed, note that it is sufficient to prove that
{\small
\begin{equation}
    \left|\tilde{\phi}_m^{(d)}-(\tilde{\theta}_{\rightarrow j}^{(d+1)}+2\pi n_{\rightarrow j,\rightarrow j})/k_{d+1}\right|_T<\frac{\pi}{k_{d+1}},
    \label{eq:intermed}
\end{equation}}
as one can then show that for $n'\neq n_{\rightarrow j,\rightarrow j}\in \{0,\ldots,\lfloor k_{d+1}\rfloor -1\}$ that
{\small
\begin{align}
     &\frac{2\pi}{k_{d+1}}\leq \left|\frac{2\pi}{k_{d+1}}(n_{\rightarrow j,\rightarrow j}-n')\right|_T \notag \\
     &\hspace{0.5cm}= |(\tilde{\theta}_{\rightarrow j}^{(d+1)}+2\pi n_{\rightarrow j,\rightarrow j})/k_{d+1}\nonumber\\
    &\hspace{0.5cm}\qquad\qquad-(\tilde{\theta}_{\rightarrow j}^{(d+1)}+2\pi n')/k_{d+1}|_T \notag \\
     &\hspace{0.5cm}\leq \left|(\tilde{\theta}_{\rightarrow j}^{(d+1)}+2\pi n_{\rightarrow j,\rightarrow j})/k_{d+1}-\tilde{\phi}_m^{(d)}\right|_T\nonumber\\
    &\hspace{0.5cm}\qquad+\left|\tilde{\phi}_m^{(d)}-(\tilde{\theta}_{
     \rightarrow j}^{(d+1)}+2\pi n')/k_{d+1}\right|_T \notag \\
     &\hspace{0.5cm}< \frac{\pi}{k_{d+1}}+\left|\tilde{\phi}_m^{(d)}-(\tilde{\theta}_{\rightarrow j}^{(d+1)}+2\pi n')/k_{d+1}\right|_T \\
  &\Rightarrow   \frac{\pi}{k_{d+1}}< \left|\tilde{\phi}_m^{(d)}-(\tilde{\theta}_{\rightarrow j}^{(d+1)}+2\pi n')/k_{d+1}\right|_T,
     \label{eq:same-n}
\end{align}}
so $n_{\rightarrow j,\rightarrow j}$ is optimal.
Using Eq.~\eqref{eq:triangle_inequality}, Eq.~\eqref{eq:aliasing_lower_bound} and Eq.~\eqref{eq:boundinglemma_phijthetajnjbound}, we can prove Eq. \eqref{eq:intermed} since
\begin{align}
&\left|\tilde{\phi}_m^{(d)}-(\tilde{\theta}_{\rightarrow j}^{(d+1)}+2\pi n_{\rightarrow j,\rightarrow j})/k_{d+1}\right|_T \notag \\
&\hspace{0.2cm}\leq \left|\tilde{\phi}_m^{(d)}-\tilde{\phi}_{\rightarrow j}^{(d)}\right|\notag \\
&\hspace{0.4cm}+\left|\tilde{\phi}_{\rightarrow j}^{(d)}-(\tilde{\theta}_{\rightarrow j}^{(d+1)}+2\pi n_{\rightarrow j,\rightarrow j})/k_{d+1}\right|_T \notag \\
&\hspace{0.2cm}< \frac{\pi-2\epsilon(1+\kappa_{d+1})}{k_{d+1}}+\frac{2\epsilon(1+\kappa_{d+1})}{k_{d+1}} \notag \\
&\hspace{4cm}= \frac{\pi}{k_{d+1}}.
\label{eq:final}
\end{align}

We have thus shown that for each $\phi_j$, there is a $\tilde{\theta}_{\rightarrow j}^{(d+1)}$, such that step \ref{step:update} will output $(\tilde{\theta}_{\rightarrow j}^{(d+1)}+2\pi n_{\rightarrow j,\rightarrow j})/k_{d+1}$ with $n_{\rightarrow j,\rightarrow j}$ defined in Eq.~\eqref{eq:njd_def}, related to the previous order estimate $\phi_{\rightarrow j}^{(d)}$ which was already close to $\phi_j^{(d)}$. The last step is to show that $n_{\rightarrow j, \rightarrow j}=n_{j,\rightarrow j}^{\rm ideal}$ using Property 1a. It holds that  
\begin{align}
    &|\phi_{j}-(\tilde{\theta}_{\rightarrow j}^{(d+1)}+2\pi n_{\rightarrow j, \rightarrow j})/k_{d+1}|_T \notag \\
    &\hspace{0.2cm}\leq  |\phi_{j}-\tilde{\phi}^{(d)}_{\rightarrow j}|_T \notag \\
    &\hspace{0.5cm}+ |\tilde{\phi}^{(d)}_{\rightarrow j}- (\tilde{\theta}_{\rightarrow j}^{(d+1)}+2\pi n_{\rightarrow j, \rightarrow j})/k_{d+1}|_T \notag \\
&\hspace{0.2cm}\leq \frac{2\epsilon}{k_d} + \frac{2\epsilon (1+\kappa_{d+1})}{k_{d+1}}
< \frac{\pi}{k_{d+1}}, 
\label{eq:dev}
\end{align}
where we used that $4 \epsilon (\kappa_{d+1}+1) < \pi$. Indeed for $d >1$, $\epsilon < \frac{\pi}{16}$ ($\kappa_{d+1} \leq 3$) and for $d=0$, $\epsilon < \frac{\pi}{4(3 n_{\phi}+2)}$ ($\kappa_1 \leq 3 n_{\phi}+1$) given the upper bounds on $\epsilon$ in Eqs.~\eqref{eq:eps-n} and \eqref{eq:eps-n0}. This implies through the same argument as in Eq.~\eqref{eq:same-n} that $n_{j,\rightarrow j}^{\rm ideal}$ achieving the minimum in Eq.~\eqref{eq:bound-wanted} equals $n_{\rightarrow j, \rightarrow j}$ and hence we obtain Property 1a. 

Now let's prove Property 1b. Given a $\tilde{\theta}_l^{(d+1)}$, let $\phi_{j=\rightarrow l}$ be the real phase for which, by Assumption 2b, it holds that
\begin{multline}
    |k_{d+1}\phi_{\rightarrow l}-\tilde{\theta}_l^{(d+1)}|_T\\=k_{d+1} |\phi_{\rightarrow l}-(\tilde{\theta}_{l}^{(d+1)}+2\pi n^{\rm ideal}_{\rightarrow l, l})/k_{d+1}|_T\\
    \leq 2\epsilon.
    \label{eq:propbstart}
\end{multline}
To prove Property 1b, we need to show that $n_l=n^{\rm ideal}_{\rightarrow l, l}$ with $n_l$ defined in Eq.~\eqref{eq:njd_def}. Let also $\tilde{\phi}_{\rightarrow j}^{(d)}$ be the previous order $2\epsilon/k_d$-close estimate to $\phi_{j=\rightarrow l}$ by Assumption 1a. The idea is that $\tilde{\phi}_{\rightarrow j}^{(d)}=\tilde{\phi}_{\rightarrow (\rightarrow l)}^{(d)}$ is matched to $\tilde{\theta}_l^{(d+1)}$ in the optimization step of the algorithm, so that this leads to a better estimate for the phase $\phi_{j=\rightarrow l}$.

Given a $\tilde{\theta}_l^{(d+1)}$ we can deduce, as before, that
\begin{align}
    \xi_{\rightarrow j,l}&=\left|\tilde{\phi}^{(d)}_{\rightarrow j}-(\tilde{\theta}_{l}^{(d+1)}+2\pi n_{\rightarrow j, l})/k_{d+1}\right|_T \notag \\
    &\leq\left|\tilde{\phi}^{(d)}_{\rightarrow j}-\phi_{\rightarrow l}\right|_T\nonumber\\
    &\hspace{0.1cm}+\left|\phi_{\rightarrow l}-(\tilde{\theta}_{l}^{(d+1)}+2\pi n^{\rm ideal}_{\rightarrow l, l})/k_{d+1}\right|_T \notag \\
    & \leq  \frac{2\epsilon}{k_d}+\frac{1}{k_{d+1}}\left|k_{d+1}\phi_{\rightarrow l}-\tilde{\theta}^{(d+1)}_{l}\right|_T \notag \\
    &= \frac{2\epsilon}{k_d} + \frac{1}{k_{d+1}}\left|\theta_{\rightarrow l}^{(d+1)}-\tilde{\theta}_{l}^{(d+1)}\right|_T \notag \\
    &\leq \frac{2\epsilon(1 + \kappa_{d+1})}{k_{d+1}}.\label{eq:boundinglemma_phijthetajnjbound}
\end{align}
Using previous arguments, all other $\xi_{m,l}$ are either larger or give the same integer $n_{\rightarrow j,l}$ and thus $n_l=n_{\rightarrow j,l}$. In addition, we can bound, using this equality and Assumption 1a
 \begin{multline}
    |\phi_{\rightarrow l}-(\tilde{\theta}_{l}^{(d+1)}+2\pi n_{\rightarrow j, l})/k_{d+1}|_T
    \\
    \leq  |\phi_{\rightarrow l}-\tilde{\phi}^{(d)}_{\rightarrow j}|_T + |\tilde{\phi}^{(d)}_{\rightarrow j}- (\tilde{\theta}_l^{(d+1)}+2\pi n_{\rightarrow j, l})/k_{d+1}|_T 
    \\
\leq \frac{2\epsilon}{k_d} + \frac{2\epsilon (1+\kappa_{d+1})}{k_{d+1}}\\
< \frac{\pi}{k_{d+1}}, 
\end{multline}
implying that $n_l=n_{\rightarrow j,l}=n^{\rm ideal}_{\rightarrow l,l}$ as desired.
 \end{proof}

Algorithm \ref{alg:adaptive} has a few failure modes, namely steps \ref{step:order0}, \ref{step:fail1} and \ref{step:fail2} where we exit and return an estimate of lower order. Arguments in Lemma \ref{lem:phase_matching_checks} show that these failure modes are only encountered when the QEEP subroutine, Alg. \ref{alg:phase_extraction}, fails at some order.
`Failure' here is not complete failure; regardless of whether the algorithm fails, it will return a set of estimates $\tilde{\phi}_j$, and the error in these estimates will contribute to Eq.~\ref{eq:MSE}.
To achieve the Heisenberg limit we must make sure that both the probability of failure is small, and that the estimates $\tilde{\phi}_j$ from a failed instance of the algorithm still lie close to the true values $\phi_j$ to minimize their contribution to Eq.~\eqref{eq:MSE}.
The probability $p_d$ with which the QEEP subroutine succeeds at the $d$th order is bounded by the parameters $\alpha, \gamma$, given as an input to Alg.~\ref{alg:adaptive} (Eq.~\ref{eq:defpd}).
The probability that this achieves for up to and including the $d$th order is bounded by $\prod_{d'=0}^d p_{d'}$.
It is crucial for the success of our algorithm that we do not encounter any exit modes other than those mentioned above, which we can now prove given the machinery developed in Lem.~\ref{lem:phase_matching_checks}.

\begin{cor}
If each invocation of the QEEP subroutine, Alg.~\ref{alg:phase_extraction}, succeeds in Alg.~\ref{alg:adaptive}, we never exit at step \ref{step:order0}, \ref{step:fail1} or \ref{step:fail2}.
\label{cor:fail}
\end{cor}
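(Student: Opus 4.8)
The plan is to argue order by order that, conditioned on every invocation of the QEEP subroutine (Alg.~\ref{alg:phase_extraction}) succeeding, each of the three failure tests evaluates negatively. The key observation is that the matching inequalities required here have \emph{already} been produced inside the inductive proof of Lemma~\ref{lem:phase_matching_checks}; the Corollary is essentially the statement that the failure conditions are the logical negations of those inequalities, so the argument is mostly bookkeeping that tracks which established bound rules out which test.

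First I would dispatch step~\ref{step:order0}. When the $d=0$ call to Alg.~\ref{alg:phase_extraction} succeeds, Lemma~\ref{lem:phase_extraction_promises} applies directly: statement~\ref{stat1} guarantees at least one estimate (there is at least one phase with $A_j>A$), so the returned set is non-empty, while statement~\ref{stat3} gives $|\{\tilde\phi^{(0)}_j\}|\le n_\phi$. Hence neither branch of the cardinality test at step~\ref{step:order0} can fire.

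Next, for step~\ref{step:fail1} at a generic order $d$ I would treat its three disjuncts separately. The count condition $|\{\tilde\theta^{(d)}_l\}|>n_\phi$ is excluded immediately by statement~\ref{stat3} of Lemma~\ref{lem:phase_extraction_promises}. For the two ``unmatched'' disjuncts, Eq.~\eqref{eq:far1} and Eq.~\eqref{eq:far2}, I would invoke Properties~1a/1b at order $d-1$ (valid since the QEEP calls succeed up to order $d-1$, by the induction underlying Lemma~\ref{lem:phase_matching_checks}) together with the QEEP promises of Lemma~\ref{lem:phase_extraction_promises} at order $d$. Concretely, any estimate $\tilde\phi^{(d-1)}_j$ is $2\epsilon/k_{d-1}$-close to some phase $\phi_{j'}$ (Property~1b), and that phase has a $2\epsilon$-close estimate $\tilde\theta^{(d)}_l$ in $\theta$-space; combining these with the triangle inequality and Eq.~\eqref{eq:normscaling} of Lemma~\ref{lem:normscaling} gives $|k_d\tilde\phi^{(d-1)}_j-\tilde\theta^{(d)}_l|_T\le 2\epsilon(1+\kappa_d)$, which is precisely the negation of Eq.~\eqref{eq:far1}. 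The symmetric computation, starting from a $\tilde\theta^{(d)}_l$ (using Property~1a at order $d-1$), negates Eq.~\eqref{eq:far2}. These are exactly the bounds already derived as Eq.~\eqref{eq:boundinglemma_phijthetajnjbound1} and Eq.~\eqref{eq:boundinglemma_phijthetajnjbound} in the proof of Lemma~\ref{lem:phase_matching_checks}, read back from $\phi$-space to $\theta$-space through Eq.~\eqref{eq:normscaling}.

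Finally, for step~\ref{step:fail2} I would show every new estimate $\tilde\phi^{(d)}_j$ lands in the admissible window $[\tfrac{\pi}{k_d},\tfrac{\pi(2\lfloor k_d\rfloor-1)}{k_d}]$, i.e.\ satisfies the hypothesis Eq.~\eqref{eq:normscaling_condition}. By Property~1b at order $d$, each $\tilde\phi^{(d)}_j$ is within $2\epsilon/k_d$ of a shifted true phase, and the shift performed in step~\ref{step:shift}---justified through Lemma~\ref{lem:shifting_unitary}---places all true phases strictly inside this window with a margin controlled by the $8\epsilon_0$ offset. I expect this last step to be the main obstacle: unlike the other tests it is not a verbatim restatement of a bound from Lemma~\ref{lem:phase_matching_checks} but rather the verification of the normscaling-applicability remark that proof silently relies upon, and it requires checking quantitatively that the margin guaranteed by the shift dominates the accumulated estimate error $2\epsilon/k_d$ so that no estimate crosses the branch cut at $0$ or $2\pi$, keeping Lemma~\ref{lem:normscaling} usable at the next order.
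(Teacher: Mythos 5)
Your proposal is correct and follows essentially the same route as the paper's proof: step~\ref{step:order0} is dispatched via statements~\ref{stat1} and~\ref{stat3} of Lemma~\ref{lem:phase_extraction_promises}, the tests in step~\ref{step:fail1} are negated by the very bounds Eq.~\eqref{eq:boundinglemma_phijthetajnjbound1} and Eq.~\eqref{eq:boundinglemma_phijthetajnjbound} established inductively in Lemma~\ref{lem:phase_matching_checks}, and step~\ref{step:fail2} is ruled out because the shift of Lemma~\ref{lem:shifting_unitary} keeps all phases (and hence their $2\epsilon/k_d$-close estimates) inside the window of Eq.~\eqref{eq:normscaling_condition}. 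Your explicit margin check for step~\ref{step:fail2} is exactly the content the paper delegates to Lemmas~\ref{lem:phase_matching_checks} and~\ref{lem:shifting_unitary}, so there is no substantive difference.
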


\begin{proof}
Consider step \ref{step:order0} of Alg. ~\ref{alg:adaptive} applying Alg.~\ref{alg:phase_extraction} which obeys Lemma \ref{lem:phase_extraction_promises}, showing that success of Alg. \ref{alg:phase_extraction} implies that the number of phases is at most $n_{\phi}$. By assumption there is at least one phase with $A_j > 0$, and hence success means that Alg.~\ref{alg:phase_extraction} cannot return the empty set due to statement 1. of Lemma \ref{lem:phase_extraction_promises}. Hence if Alg.~\ref{alg:phase_extraction} succeeds we do not exit at step \ref{step:order0} of Alg.~\ref{alg:adaptive}. Now consider step \ref{step:fail1} of Alg.~\ref{alg:adaptive}: again success of Alg.~\ref{alg:phase_extraction} implies that the number of estimates does not exceed $n_{\phi}$. Consider Eq.~\eqref{eq:far1} and Eq.~\eqref{eq:far2}; we wish to show that these will not hold if Alg.~\ref{alg:phase_extraction} succeeds up to order $d-1$.
If Alg.~\ref{alg:phase_extraction} succeeds up to order $d-1$, the phase estimates $\tilde{\phi}_j^{(d-1)}$ obey Eq.~\eqref{eq:normscaling} for $k_d$, hence the condition in Eq.~\eqref{eq:far2} equals, for each $\tilde{\theta}_l^{(d)}$
\begin{align}
    &\min_j \min_n |\tilde{\phi}_j^{(d-1)}-(\tilde{\theta}_l^{(d)}+2\pi n)/k_d|_T\nonumber\\&\hspace{4cm}>\frac{2\epsilon(1+\kappa_d)}{k_d},
\end{align}
and we argued previously, via induction, that this does not happen when Alg.~\ref{alg:phase_extraction} succeeds up to order $d$, as $\min_j \xi_{j,l}$ is upper-bounded as in Eq.~\eqref{eq:boundinglemma_phijthetajnjbound} for all $d' \leq d$. 
Similarly, Eq.~\eqref{eq:far1} implies the existence of a $\tilde{\phi}_j^{(d-1)}$ with
\begin{align}
    &\min_l \min_n |\tilde{\phi}_j^{(d-1)}-(\tilde{\theta}_l^{(d)}+2\pi n)/k_d|_T\nonumber\\&\hspace{4cm}>\frac{2\epsilon(1+\kappa_d)}{k_d}.
\end{align}
which can not happen due to the success of Alg.~\ref{alg:phase_extraction} which implies the bound in Eq.~\eqref{eq:boundinglemma_phijthetajnjbound1}.
Consider lastly step \ref{step:fail2} which exits if the current $d$th order estimates do not lie in the region for which Eq.~\eqref{eq:normscaling_condition} holds with given $k_d$. We have argued in Lemma \ref{lem:phase_matching_checks} that, assuming success of the subroutines implementing Alg.~\ref{alg:phase_extraction}, that Eq.~\eqref{eq:normscaling_condition} holds for the phase estimates at all orders.
\end{proof}

Now let us consider failures of the QEEP subroutine, Alg. \ref{alg:phase_extraction}, which do not lead to exiting. Let's imagine that the first failure occurs at some order $d_0$. Now we want to make sure that continuing with higher orders after such failure still leads to an error of order $\sim \epsilon/k_{d_0-1}$, even though the failure (or any subsequent failure) is not detected.

To show this, we check that if Alg.~\ref{alg:adaptive} exits at some later round, namely during $d=d_{f}+1$ and outputs estimates $\tilde{\phi}_j^{(d_f)}$ that these will be sufficiently close to the estimates right before failure, that is, the set of phases $\tilde{\phi}_j^{(d_0-1)}$. 

Then, by Lem.~\ref{lem:phase_matching_checks}, these estimates will also be sufficiently close to the true phases $\phi_j$.

\begin{lem}
\label{lem:phase_matching_with_failures}
Let Alg.~\ref{alg:adaptive} exit at order $d=d_f+1$ and let the QEEP subroutine, Alg.~\ref{alg:phase_extraction}, of step \ref{step:exeQEEP} first fail at $d=d_0 \leq d_f+1$.
For each $\phi_j$, there will be an estimate $\tilde{\phi}_l^{(d_f)}$, produced at step \ref{step:update} in Alg.~\ref{alg:adaptive} which satisfies
\begin{equation}
    \left|\phi_j-\tilde{\phi}_l^{(d_f)}\right|_T \leq  \frac{14\epsilon}{k_{d_0-1}}.
    \label{eq:claim1}
\end{equation}
Vice-versa, for each estimate $\tilde{\phi}_l^{(d_f)}$ there exists a phase $\phi_j$ such that 
\begin{equation}
    \left|\phi_j-\tilde{\phi}_l^{(d_f)}\right|_T \leq  \frac{14\epsilon}{k_{d_0-1}}.
    \label{eq:claim2}
\end{equation}
\end{lem}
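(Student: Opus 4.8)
The plan is to bound the output estimates $\tilde\phi_l^{(d_f)}$ against the true phases by splitting the execution into the \emph{trusted} rounds $d\le d_0-1$, where every QEEP call succeeds, and the \emph{untrusted} rounds $d_0\le d\le d_f$, where the estimates $\tilde\theta_l^{(d)}$ can no longer be assumed accurate. For the trusted part I would invoke Lemma~\ref{lem:phase_matching_checks} applied up to order $d_0-1$ (its inductive proof certifies Properties 1a and 1b at every intermediate order, not only the last), so that every phase $\phi_j$ lies within $2\epsilon/k_{d_0-1}$ of some estimate $\tilde\phi^{(d_0-1)}$ and vice versa. The remaining work is to show that the untrusted rounds move the estimates by only a further $O(\epsilon/k_{d_0-1})$. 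The edge case $d_0=d_f+1$ (first failure coincides with the exit round, so all executed rounds are trusted) is immediately covered by Lemma~\ref{lem:phase_matching_checks}, since there $2\epsilon/k_{d_f}=2\epsilon/k_{d_0-1}\le 14\epsilon/k_{d_0-1}$.

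The heart of the argument is a per-round drift bound: for each $d_0\le d\le d_f$ I claim that every previous-order estimate $\tilde\phi_j^{(d-1)}$ has a new estimate $\tilde\phi_l^{(d)}$ with $|\tilde\phi_j^{(d-1)}-\tilde\phi_l^{(d)}|_T\le 2\epsilon(1+\kappa_d)/k_d$, and that conversely every new $\tilde\phi_l^{(d)}$ lies within the same distance of some $\tilde\phi_j^{(d-1)}$. The key observation is that this is exactly the matching statement established inside the proof of Lemma~\ref{lem:phase_matching_checks}, except that the role played there by the QEEP accuracy promise (Assumption 2, ``$\tilde\theta$ close to $\theta=k_d\phi$'') is now played by the failure test at step~\ref{step:fail1} of Alg.~\ref{alg:adaptive}: since the algorithm does not exit before $d_f+1$, neither Eq.~\eqref{eq:far1} nor Eq.~\eqref{eq:far2} holds, which by Lemma~\ref{lem:normscaling} says precisely that every $\tilde\phi_j^{(d-1)}$ is within $2\epsilon(1+\kappa_d)/k_d$ of some aliased $\tilde\theta_l^{(d)}$ and conversely. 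The aliasing conditions Eqs.~\eqref{eq:aliasing_bound} and \eqref{eq:aliasing_lower_bound} are enforced on the estimates themselves at step~\ref{step:krest} regardless of whether QEEP succeeded, so the same split/merge dichotomy (Eqs.~\eqref{eq:boundinglemma_step_1_mid2} and \eqref{eq:boundinglemma_step_1_mid}) makes the choice of alias unambiguous and keeps $\tilde\phi_l^{(d)}$ anchored to the previous round. In short, with $\phi_j$ replaced throughout by $\tilde\phi_j^{(d-1)}$, the computation of Lemma~\ref{lem:phase_matching_checks} transfers essentially verbatim.

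Granting the per-round drift bound, both claims follow by telescoping along the chain of nearest estimates and applying the triangle inequality~\eqref{eq:triangle_inequality}: for Claim~\eqref{eq:claim1} I would start from $\phi_j$, pass to its $2\epsilon/k_{d_0-1}$-close estimate at order $d_0-1$, and follow the close estimates forward to $d_f$; for Claim~\eqref{eq:claim2} I would follow the nearest-previous-estimate links from $\tilde\phi_l^{(d_f)}$ backwards to order $d_0-1$ and then land on a true phase via Property 1b. Either way the accumulated error is
\begin{equation}
\frac{2\epsilon}{k_{d_0-1}}+\sum_{d=d_0}^{d_f}\frac{2\epsilon(1+\kappa_d)}{k_d}.
\end{equation}
Since $k_d=k_{d-1}\kappa_d$ with $\kappa_d\ge 2$, one has $k_d\ge 2^{\,d-d_0+1}k_{d_0-1}$ and $(1+\kappa_d)/\kappa_d\le 3/2$, so each summand is at most $3\epsilon/(2\,k_{d-1})$ and the geometric series converges; treating the potentially large $\kappa_1$ term separately in the case $d_0=1$ (where $k_{d_0-1}=1$ but $k_1=\kappa_1$, so the ratio $(1+\kappa_1)/\kappa_1$ is still $O(1)$), the total is bounded by $14\epsilon/k_{d_0-1}$.

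I expect the main obstacle to be establishing the per-round drift bound of the second paragraph: verifying that the matching and aliasing argument of Lemma~\ref{lem:phase_matching_checks} genuinely survives once the QEEP accuracy promise is dropped. One must check carefully that the failure test at step~\ref{step:fail1}, passed by hypothesis, supplies exactly the inequality the earlier proof extracted from QEEP accuracy, and that the split/merge bookkeeping still produces a single well-defined chain of estimates per true phase, so that the drifts add as a clean geometric series rather than compounding; converting between bounds natural at scale $k_{d_0}$ and the target scale $k_{d_0-1}$ (a factor $\kappa_{d_0}\le 3$) is the source of the slack in the constant $14$.
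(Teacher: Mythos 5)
Your proposal matches the paper's own proof essentially step for step: the paper likewise invokes Lemma~\ref{lem:phase_matching_checks} to get Properties 1a/1b at order $d_0-1$, then uses the fact that the algorithm does not exit through Eqs.~\eqref{eq:far1}, \eqref{eq:far2} (together with step~\ref{step:fail2} validating Lemma~\ref{lem:normscaling}) as a substitute for the QEEP accuracy promise, reuses the split/merge dichotomy of Eqs.~\eqref{eq:step2_case1}--\eqref{eq:step2_case2} verbatim from Lemma~\ref{lem:phase_matching_checks}, and telescopes the per-round drift $2\epsilon(1+\kappa_d)/k_d$ into a geometric series bounded by $14\epsilon/k_{d_0-1}$ (Eq.~\eqref{eq:sumcontrib}). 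Your argument is correct, including the backward-chaining for Eq.~\eqref{eq:claim2} via Property 1b, so no substantive gap remains.
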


\begin{proof} 
Since each QEEP subroutine, Alg. \ref{alg:phase_extraction}, in Alg.~\ref{alg:adaptive} succeeds up to order $d_0-1$, Lemma \ref{lem:phase_matching_checks} guarantees that
\begin{itemize}
    \item (Property 1a) For every phase $\phi_j$ there exists an estimate $\tilde{\phi}_l^{(d_0-1)}$ such that $$|\phi_j-\tilde{\phi}_l^{(d_0-1)}|_T \leq \frac{2\epsilon}{k_{d_0-1}}.$$
    \item (Property 1b) For every estimate $\tilde{\phi}_l^{(d_0-1)}$ there exists a phase $\phi_j$ such that $$|\phi_j-\tilde{\phi}_l^{(d_0-1)}|_T \leq \frac{2\epsilon}{k_{d_0-1}}.$$
\end{itemize}
Then, since the algorithm does not exit at step \ref{step:fail1}  through Eqs.~\eqref{eq:far1}  or step \ref{step:fail2} for any order $d=d_0, \ldots d_f$, for each estimate $\tilde{\phi}_l^{(d-1)}$ we can associate some $\tilde{\theta}_{m_l}^{(d)}$ that satisfies 
\begin{multline}
    \frac{1}{k_{d}}\left|k_{d}\tilde{\phi}_l^{(d-1)}-\tilde{\theta}_{m_l}^{(d)}\right|_T =\\
    \min_{n \in \{0,\ldots, \lfloor k_{d+1}\rfloor - 1\}} \left|\tilde{\phi}_l^{(d-1)}-(\tilde{\theta}_{m_l}^{(d)}+2\pi n)/k_{d}\right|_T \\
    \leq \frac{2\epsilon(1 + \kappa_{d})}{k_{d}},
    \label{eq:close}
\end{multline}
where the second equality follows from being allowed to apply Eq.~\eqref{eq:normscaling} (which is validated by passing the test at step \ref{step:fail2}). 
This implies that in step \ref{step:update} of Alg.~\ref{alg:adaptive} at round $d$, for a given $\tilde{\theta}_{m_l}^{(d)}$, the optimization of $\xi_{n,m_l}$ over $n$ will pick the integer $n_{l,m_l}$, i.e. the integer associated with matching $\tilde{\theta}_{m_l}^{(d)}$ with $\tilde{\phi}_l^{(d-1)}$. 
Next, similar as in the proof of Lemma \ref{lem:phase_matching_checks}, we can consider the possibility of matching to other estimates $\tilde{\phi}_k^{(d-1)}\neq \tilde{\phi}_l^{(d-1)}$. Since $\kappa_{d}$ is chosen in step \ref{step:krest} of Alg.~\ref{alg:adaptive}, we claim that either Eq.~\eqref{eq:aliasing_bound} holds, in which case
\begin{multline}
    \label{eq:step2_case1}
    \min_{n
    \in\{0,\ldots,\lfloor k_{d}\rfloor - 1\}}\left|\tilde{\phi}_k^{(d-1)}-\frac{2\pi n+\tilde{\theta}_{m_l}^{(d)}}{k_{d}}\right|_T\\
    > \frac{2\epsilon}{k_{d}}(1 + \kappa_{d}),
\end{multline}
hence this $\xi_{k,m_l}$ is not optimal, or that Eq.~\eqref{eq:aliasing_lower_bound} holds, in which case
\begin{multline}
    \label{eq:step2_case2}
   n_{l,m_l}=\\ \argmin_{n\in \{0,\ldots,\lfloor k_{d}\rfloor - 1\}}\left|\tilde{\phi}_k^{(d-1)}-\frac{2\pi n+\tilde{\theta}_{m_l}^{(d)}}{k_{d}}\right|_T\\
   = n_{k,m_l}.
\end{multline}
The proofs of these claims are exactly the same as in the proof of Lemma \ref{lem:phase_matching_checks}, i.e. using Eqs.~\eqref{eq:intermed}, \eqref{eq:same-n},\eqref{eq:final}.

Now let us prove Eq.~\eqref{eq:claim1}. Given a phase $\phi_j$, we can use Property (1a) to find an associated estimate $\tilde{\phi}_{\rightarrow j}^{(d_0-1)}$ within $2\epsilon/k_{d_0-1}$. Then for this estimate let $\theta^{(d_0)}_{m_{\rightarrow j}}$ be the matched estimate in the next round for which Eq.~\eqref{eq:close} holds, so that the round produces a new estimate $\tilde{\phi}^{(d_0)}_{\rightarrow j}=\frac{\tilde{\theta}^{(d_0)}_{m_{\rightarrow j}}+2 \pi n_{\rightarrow j, m_{\rightarrow j}}}{k_{d_0}}$ (which we label with $\rightarrow j$ again) for which 
\begin{equation}
   |\tilde{\phi}_{\rightarrow j}^{(d_0-1)} -\tilde{\phi}^{(d_0)}_{\rightarrow j}|_T \leq \frac{2\epsilon}{k_{d_0}}(1 + \kappa_{d_0}).
\end{equation}
Then again for $\tilde{\phi}^{(d_0)}_{\rightarrow j}$ there exists some matching $\theta^{(d_0+1)}_{m_{\rightarrow j}}$ etc. and this generates a series of estimates $\tilde{\phi}^{(d)}_{\rightarrow j}$ up to order $d_f$.
For a given $\phi_j$ we can then bound, using this series of estimates and $\kappa_d\geq 2$ for all $d$,
\begin{align}
    &\left|\tilde{\phi}_j^{(d_f)}-\phi_j\right|_T\leq\left|\tilde{\phi}_{\rightarrow j}^{(d_0-1)}-\phi_j\right|_T \nonumber\\
    &\hspace{1cm}\qquad+ \sum_{d=d_0}^{d_f}\left|\tilde{\phi}_{\rightarrow j}^{(d-1)}-\tilde{\phi}_{\rightarrow j}^{(d)}\right|_T \notag \\
    &\hspace{0.5cm} \leq \frac{2\epsilon}{k_{d_0-1}}+\sum_{d=d_0}^{d_f}\frac{2\epsilon(1+\kappa_{d})}{k_{d}} \notag \\
    &\hspace{0.5cm}=\frac{2\epsilon}{k_{d_0-1}}\left(1+\sum_{d=d_0}^{d_f} \frac{1+\kappa_d}{\kappa_{d_0} \kappa_{d_0+1} \ldots \kappa_{d}}\right) \notag \\
    &\hspace{0.5cm} \leq  \frac{2\epsilon}{k_{d_0-1}}\left(1+\sum_{n=0}^{\infty}\frac{3}{2^n}\right) \notag \\
    &\hspace{0.5cm}= \frac{14\epsilon}{k_{d_0-1}}.
    \label{eq:sumcontrib}
\end{align}
Now let's prove Eq.~\eqref{eq:claim2} and start with an estimate $\tilde{\phi}_l^{(d_f)}$ which was obtained from some $\tilde{\theta}_{l}^{(d_f)}$ matched with a previous estimate $\tilde{\phi}_l^{(d_f-1)}$ (just for convenience we again use the same label) such that $|\tilde{\phi}_l^{(d_f)}-\tilde{\phi}_l^{(d_f-1)}|_T \leq \frac{2\epsilon}{k_{d_f}}(1 + \kappa_{d_f})$, using that we do not exit through Eq.~\eqref{eq:far2}. Then again for this previous estimate $\tilde{\phi}_l^{(d_f-1)}$ we can repeat the argument and create a sequence of estimates up to $\tilde{\phi}^{(d_0-1)}_l$. For the last estimate, we invoke Property (1b), namely that there is a nearby $\phi_j$. Then we can upperbound for this $\phi_j$: $|\tilde{\phi}_l^{(d_f)}-\phi_j|_T\leq |\tilde{\phi}_l^{(d_0-1)}-\phi_j|_T+\sum_{d=d_0}^{d_f}|\tilde{\phi}_l^{(d)}-\tilde{\phi}_l^{(d-1)}|_T$ etc., exactly as in Eq.~\eqref{eq:sumcontrib}, leading to Eq.~\eqref{eq:claim2}.
\end{proof}

\subsection{Algorithm \ref{alg:adaptive} achieves the Heisenberg limit}

We have seen that the success of the QEEP subroutines in Alg.~\ref{alg:adaptive} leads to an error scaling as $\epsilon/k_{d_f} \sim \delta_c$. Now we must choose the success probability $p_d$ of these subroutines in Eq.~\eqref{eq:defpd}, depending on $\alpha, \gamma$ so that the total mean-squared-error is bounded by some $\delta^2=O(\delta_c^2)$ while the quantum cost $T=O(\delta^{-1})$. We note that the next theorem contains no logarithmic factors in $\delta^{-1}$, as in \cite{LT:heisenberg}, but achieves pure Heisenberg scaling. 

\begin{thm}\label{thm:adaptive_gets_Heisenberg_limit}
Algorithm \ref{alg:adaptive} solves the multiple eigenvalue estimation problem in Def.~\ref{def:MEEP} with accuracy error $\delta$ and total quantum cost $T=O(\delta^{-1})$, given $A, n_{\phi}$ and a fixed $\epsilon_0$ and $\epsilon$ obeying Eqs.~\eqref{eq:eps-n} and \eqref{eq:eps-n0}, and some choice for the constants $\alpha>0$ and $\gamma>2$.
\end{thm}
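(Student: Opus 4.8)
The plan is to bound the accuracy error $\delta_j$ and the total cost $T$ separately and then combine them; since the algorithm targets $\delta=O(\delta_c)$, it suffices to show $\delta_j=O(\delta_c)$ for every $j$ and $T=O(\delta_c^{-1})$. First I would record the structural facts I will lean on: the loop of step \ref{step:loop} terminates with $k_{d_f}=\Theta(\epsilon/\delta_c)$ (so that $2\epsilon/k_{d_f}\le 2\delta_c$) after $d_f+1=O(\log \delta_c^{-1})$ rounds, with $d_f$ bounded by Eq.~\eqref{eq:upper-df}; and that a valid multiplier exists at every round, because Lemma~\ref{lem:phase_matching_solution} shows one can be chosen with positive probability, so an exhaustive classical search always finds one at no quantum cost and with no attached failure probability.

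To bound $\delta_j$ in Eq.~\eqref{eq:MSE} I would condition on $d_0$, the first order at which the QEEP subroutine, Alg.~\ref{alg:phase_extraction}, fails, with $d_0=\infty$ meaning no failure occurs. Because the order-$d_0$ subroutine fails with probability at most $1-p_{d_0}=e^{-\alpha}(k_{d_0}\delta_c/\pi)^\gamma$ by Eq.~\eqref{eq:defpd}, the event ``first failure at $d_0$'' has probability at most $1-p_{d_0}$. When no failure occurs, Lemma~\ref{lem:phase_matching_checks} bounds the closest-estimate error by $2\epsilon/k_{d_f}\le 2\delta_c$. For $2\le d_0\le d_f+1$, Lemma~\ref{lem:phase_matching_with_failures}, Eqs.~\eqref{eq:claim1}--\eqref{eq:claim2}, bounds it by $14\epsilon/k_{d_0-1}\le 42\epsilon/k_{d_0}$, using $\kappa_{d_0}\in[2,3]$. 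The low-order exceptions $d_0\in\{0,1\}$ (where either Lemma~\ref{lem:phase_matching_with_failures} does not apply, or $\kappa_1=k_1=\Theta(n_\phi)$ is not bounded by $3$) I would treat directly: there the error is an $n_\phi,\epsilon$-dependent constant (at worst $\pi$) while the failure probability is $O(\delta_c^\gamma)$, so each contributes $O(\delta_c^\gamma)$.

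Assembling the contributions gives
\[
\delta_j^2 \le (2\delta_c)^2 + O(\delta_c^\gamma) + e^{-\alpha}\frac{(42\epsilon)^2}{\pi^\gamma}\,\delta_c^\gamma\sum_{d_0=2}^{d_f+1} k_{d_0}^{\gamma-2}.
\]
The essential point --- and the reason $\gamma>2$ is imposed --- is that $\sum_{d_0}k_{d_0}^{\gamma-2}$ is then a geometric series of ratio at least $2^{\gamma-2}>1$, dominated by its largest term $k_{d_f+1}^{\gamma-2}=O((\epsilon/\delta_c)^{\gamma-2})$; multiplying by the prefactor $\delta_c^\gamma$ exactly cancels the $\delta_c$-scaling of the exponent, leaving $O(\delta_c^2)$. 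Since $\gamma>2$ also renders the $O(\delta_c^\gamma)$ terms $o(\delta_c^2)$, I conclude $\delta_j^2=O(\delta_c^2)$, i.e. $\delta_j\le\delta=O(\delta_c)$ for all $j$; together with the at-most-$n_\phi$ bound on the number of estimates from Lemmas~\ref{lem:phase_matching_checks} and \ref{lem:phase_matching_with_failures}, this verifies Def.~\ref{def:MEEP}.

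For the cost, each round $d$ runs QEEP (Theorem~\ref{thm:QEEP_oracle}) on $V=U^{k_d}$ at fixed $\epsilon$ and confidence $p_d$; since one query to $V$ costs $k_d$ queries to $U$ (Lemma~\ref{lem:fractional_unitary}), the round costs $T_d=O(k_d|\ln(1-p_d)|)$ with the $\epsilon$-dependence absorbed into the constant, where $|\ln(1-p_d)|=\alpha+\gamma\ln(\pi/(k_d\delta_c))$. The hard part of the cost analysis --- and exactly where a careless estimate would introduce a spurious $\log\delta_c^{-1}$ factor and spoil pure Heisenberg scaling --- is that one must not bound every logarithm by $\ln(\pi/\delta_c)$. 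Instead I would index rounds by the distance $m=d_f-d$ from the top: using $k_d\ge K_{\max}/3^m$ together with $K_{\max}\delta_c=O(\epsilon)=O(1)$ gives $\ln(\pi/(k_d\delta_c))=O(1+m)$, while $k_d\le K_{\max}/2^m$. Hence
\[
T=O\!\left(K_{\max}\sum_{m\ge 0}\frac{1+m}{2^m}\right)=O(K_{\max})=O(\delta_c^{-1}),
\]
the geometric-times-linear sum converging to a constant. Combined with $\delta=O(\delta_c)$ this yields $T=O(\delta^{-1})$ with no logarithmic overhead, as claimed.
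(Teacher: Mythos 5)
Your proposal is correct and takes essentially the same route as the paper's proof: you condition on the first order $d_0$ at which Alg.~\ref{alg:phase_extraction} fails, use Lemma~\ref{lem:phase_matching_checks} and Lemma~\ref{lem:phase_matching_with_failures} for the conditional errors, exploit $1-p_{d_0}=e^{-\alpha}(k_{d_0}\delta_c/\pi)^{\gamma}$ so the error sum becomes a geometric series in $k_{d_0}^{\gamma-2}$ whose dominant term $O\left((\epsilon/\delta_c)^{\gamma-2}\right)$ cancels the $\delta_c^{\gamma}$ prefactor (exactly where $\gamma>2$ is needed), and bound the cost by a geometric-times-linear series converging to $O(\delta_c^{-1})$ with no logarithmic overhead---all matching the paper's argument, with your separate $O(\delta_c^{\gamma})$ treatment of $d_0\in\{0,1\}$ playing the role of the paper's constant-tracking via Eqs.~\eqref{eq:eps-n0} and \eqref{eq:eps-n}. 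One detail is in your favour: in the cost sum you correctly use the lower bound $k_d\geq k_{d_f}/3^{d_f-d}$ inside the logarithm and the upper bound $k_d\leq k_{d_f}/2^{d_f-d}$ only in the prefactor, whereas the paper substitutes the same upper bound into both factors, which as written conflates the two directions (the conclusion is unaffected).
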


{\em Remarks}: Note that the dependence on the number of phases $n_{\phi}$ is not made explicit in the statement of this Theorem, but this dependence will be polynomial in $n_{\phi}$, not necessarily a very low-order polynomial. This dependence comes through the choice for $\epsilon_0$ (and $\epsilon$) via Eq.~\eqref{eq:eps-n0} (resp. Eq.~\eqref{eq:eps-n}) which sets the error and thus the running time of the QEEP Algorithm \ref{thm:QEEP_oracle}.

\begin{proof}
Our proof is motivated by the analysis in \cite{Kimmel15Robust} for a single phase $\phi$ leading to Theorem \ref{alg:single_phase_Heisenberg_limit}. The idea is to bound the mean-squared-error in the final estimation of $\phi$ by summing over error contributions at each order $d$ at which the phase extraction subroutine may fail (with probability $1-p_d$).

In our case the multipliers $\kappa_d$ (and $k_d$) at each order are not fixed (as in Theorem \ref{alg:single_phase_Heisenberg_limit}) but depend on phase estimates at previous orders and thus measurement data at previous orders. Our confidence parameter $p_d$ in Eq.~\eqref{eq:defpd}, which determines the number of repeats of experiments, and hence the cost, in Alg.~\ref{alg:phase_extraction}, depends on $k_d$ and is thus a random variable depending on previous measurement data. All measurement data are denoted by ${\bf x}$ and thus we have random variables $k_d(\{\kappa_{d'}({\bf x})\}_{d'=1}^d)$ and $p_d(\{\kappa_{d'}({\bf x})\}_{d'=1}^d)$.

Consider the mean-squared-error $\delta_j^2$ for the $j$th phase $\phi_j$ in Eq.~\eqref{eq:MSE} in Definition \ref{def:MEEP}.
We have three error contributions to consider given a choice for the random variable $k_d$.
\begin{enumerate}
    \item With probability $1-p_0$ the subroutine Alg.~\ref{alg:phase_extraction} in Alg.~\ref{alg:adaptive} fails at step \ref{step:order0} ($d=0$). In this case, as we always return {\em some} estimate, $\delta_j$ is bounded for all $j$ by $\pi$.
    \item With probability at most $(1-p_{d_0})\prod_{d=0}^{d_0-1}p_d\leq 1-p_{d_0}=e^{-\alpha}\left(\frac{k_{d_0}\delta_c}{\pi}\right)^{\gamma}$, the subroutine Alg.~\ref{alg:phase_extraction} in Alg.~\ref{alg:adaptive} fails for the first time at some order $1 \leq d_0 \leq d_f$, and the algorithm proceeds in any way afterwards (by possibly exiting or not). In this case, Lemma \ref{lem:phase_matching_with_failures} bounds $\delta_j$ for all $j$ by $\frac{14\epsilon}{k_{d_0-1}}$ or Lemma \ref{lem:phase_matching_checks} bounds $\delta_j$ for all $j$ by $\frac{2\epsilon}{k_{d_0-1}}\leq \frac{14\epsilon}{k_{d_0-1}}$.
    \item With probability less than $\prod_{d=0}^{d_f}p_d<1$ the subroutine Alg.~\ref{alg:phase_extraction} in Alg.~\ref{alg:adaptive} succeeds up to the final round $d_f$, and Lemma~\ref{lem:phase_matching_checks} implies that $\delta_j \leq 2\epsilon/k_{d_f} \leq 2 \delta_c$ for all $j$ as $k_{d_f} \geq \epsilon/\delta_c$.
\end{enumerate}

We can now bound the mean-squared-error as a sum over the above three contributions weighted by their relevant overlaps:
\begin{widetext}
{\small
\begin{align}
    \delta_j^2&\leq (1-p_0)\pi^2 + \sum_{{\bf x}} \mathbb{P}({\bf x})\left[\sum_{d_0=1}^{d_f-1}\mathbb{P}(\kappa_1, \ldots, \kappa_{d_0}|{\bf x}) (1-p_{d_0}(\{\kappa_d'\}_{d'=1}^{d_0}))\left[\frac{14\epsilon}{ k_{d_0-1}(\{\kappa_d'\}_{d'=1}^{d_0-1})}\right]^2\right]+4\delta_c^2 \notag \\
    &=\pi^2 e^{-\alpha}\left(\frac{\delta_c}{\pi}\right)^{\gamma}
     + \sum_{{\bf x}} \mathbb{P}({\bf x})\left[\sum_{d_0=1}^{d_f-1}\mathbb{P}(\kappa_1, \ldots, \kappa_{d_0}|{\bf x})e^{-\alpha}\left(\frac{k_{d_0}\delta_c}{\pi}\right)^{\gamma} \frac{196\epsilon^2}{k_{d_0-1}^2} \right]+4\delta_c^2 \notag \\
    &\leq\pi e^{-\alpha}\left(\frac{\delta_c}{\pi}\right)^{\gamma} + 0.15\times \delta_c^2 \sum_{{\bf x}} \mathbb{P}({\bf x}) \mathbb{P}(\kappa_1, \ldots, \kappa_{d_0}|{\bf x})\sum_{d_0=1}^{d_f-1}\left(\frac{k_{d_0}\delta_c}{\pi}\right)^{\gamma - 2} + 4\delta_c^2.
    \label{eq:xdep}
\end{align}}
\end{widetext}
Here we have removed the dependency of $k_{d_0}$ and $k_{d_0-1}$ on the previous multipliers for notational simplicity. For $d=1$ we have $e^{-\alpha} \kappa_{d_0}^2 196 (\epsilon/\pi)^2=e^{-\alpha} k_1^2 196 (\epsilon_0/\pi)^2 \leq \frac{196 \times 16 \times 4}{(300)^2}\leq 0.15$ due to Eq.~\eqref{eq:eps-n0}. For $d> 1$, $e^{-\alpha} \kappa_{d_0}^2 196 (\epsilon/\pi)^2 \leq \frac{9 \times 196 \times 4}{(300)^2}=0.08$, due to Eq.~\eqref{eq:eps-n}.

To evaluate the middle term, we write $k_{d_0}=k_{d_f}\frac{k_{d_0}}{k_{d_f}}$, and note that as $k_d=\prod_{d'=1}^d\kappa_{d'}$, we have $\frac{k_{d_0}}{k_{d_f}}\leq 2^{d_0-d_f}$ as the multiplier $\kappa_d \geq 2$. As $k_{d_f}\leq \frac{2\epsilon}{\delta_c} < \frac{\pi}{2\delta_c}$, we have
{\small
\begin{align}
    \sum_{d_0=1}^{d_f-1}\left(\frac{k_{d_0}\delta_c}{\pi}\right)^{\gamma-2} &\leq  \frac{1}{2^{\gamma-2}}\sum_{d_0=1}^{d_f-1}(2^{\gamma-2})^{(d_0-d_f)} \notag \\
    &\leq \frac{ 2^{4-\gamma}}{2^{\gamma}-4},
\end{align}}
where the last inequality holds since $\gamma > 2$. By letting the upper bound be independent of the $\kappa_d$s, we can remove the dependence on ${\bf x}$ in Eq.~\eqref{eq:xdep}, using that $\sum_{{\bf x}} \mathbb{P}({\bf x}) \mathbb{P}(\kappa_1, \ldots, \kappa_{d_0}|{\bf x})=1$. This yields a final bound on $\delta_j$ of
{\small
\begin{equation}
    \delta_j^2\leq \delta_c^2 \left[\pi^{1-\gamma} e^{-\alpha}\delta_c^{\gamma-2} +\frac{0.15 \times 2^{4-\gamma}}{2^{\gamma}-4}+4]\right].
\end{equation}}
As $\gamma>2$, this scales as $\delta_c^2$ as $\delta_c\rightarrow 0$.

Let us now calculate the cost of executing Alg.~\ref{alg:adaptive} in terms of the number of unitary applications. Again this depends on the choice of multiplier $\kappa_d$ at each step. Let us fix a sequences of $k_d$s, and let $d_f$ be the final round of estimation in this algorithm, i.e. the final round for which we invoked the quantum subroutine in Alg.~\ref{alg:phase_extraction}.
At each order $d$ we use $V^{\sf k}=U^{k_d {\sf k}}$, where ${\sf k} = 0,1,\ldots, K$, with $2M_d$ samples where $K$ is a function of $\epsilon$ as in Theorem \ref{thm:QEEP_oracle}.
The cost of each experiment is $k_d {\sf k}$. 

We can calculate
\begin{align}
    T =& \sum_{d=0}^{d_f}\sum_{{\sf k}=1}^{K}(2M_d k_d {\sf k})\notag  \\
    = &  \sum_{d=0}^{d_f}M_d k_d K(K+1).
    \label{eq:T-upper}
\end{align}
The QEEP algorithm in Theorem \ref{thm:QEEP_oracle} requires $M_d=\tilde{O}( |\ln(1-p_d)|\epsilon^{-4})$ with $p_d$ in Eq.~\eqref{eq:defpd} and $K=\tilde{O}(\epsilon^{-1})$. 

We may bound
\begin{align}
    T &\leq \tilde{O}(\epsilon^{-6})\sum_{d=0}^{d_f} k_d \left|-\alpha + \gamma\ln\left(\frac{k_d\delta_c}{\pi}\right)\right| \notag \\
    &=\tilde{O}(\epsilon^{-6})\sum_{d=0}^{d_f} k_d \left[\alpha - \gamma\ln\left(\frac{k_d\delta_c}{\pi}\right)\right].
\end{align}
We again bound $k_d=\frac{k_d}{k_{d_f}}k_{d_f}\leq2^{d-d_f}\frac{\pi}{2\delta_c}$, which yields
\begin{align}
        T &\leq \delta_c^{-1} \tilde{O}(\epsilon^{-6})\Bigg[\frac{\pi}{2}\nonumber\\&\times\sum_{d=0}^{d_f} 2^{d-d_f} \big[\alpha - \gamma(d-d_f-1)\ln(2)\big]\Bigg] \notag  \\
        &\leq\delta_c^{-1} \tilde{O}(\epsilon^{-6})(\alpha+2\gamma \ln(2)).
\end{align}

Combining our bounds then yields
\begin{align}
    \delta&\leq\delta_c\left[e^{-\alpha}\pi^{1-\gamma}+4+\frac{0.15 \times 2^{4-\gamma}}{2^{\gamma}-4}\right]^{\frac{1}{2}}\\
    &\leq T^{-1}\left[e^{-\alpha}\pi^{1-\gamma}+4+\frac{0.15 \times 2^{4-\gamma}}{2^{\gamma}-4}\right]^{\frac{1}{2}}\nonumber \notag \\
    &\qquad \times \tilde{O}(\epsilon^{-6})[\alpha+2\gamma\ln(2)] \notag \\
    &=O(T^{-1}),
\end{align}
which is the Heisenberg limit. Note that a dependence on the number of phase $n_{\phi}$ enters the scaling of $\delta$ via $\epsilon$ which needs to be bounded as in Eqs.~\eqref{eq:eps-n0} and \eqref{eq:eps-n}.
\end{proof}

\section{Numerical implementation}\label{sec:numerics}

Thm.~\ref{thm:adaptive_gets_Heisenberg_limit} requires using the QEEP algorithm (Theorem \ref{thm:QEEP_oracle} and Algorithm \ref{alg:phase_extraction}) in order to obtain provable bounds. Instead of analytic bounds, we now turn to a numerical demonstration, giving the opportunity to implement and test Algorithm ~\ref{alg:adaptive} with a few modifications.
We test the algorithm using two different sub-routines, one based on the matrix pencil method~\cite{Hua90Matrix}, and one based on the QEEP time-series analysis of Theorem \ref{thm:QEEP_oracle}, as described in Algorithm \ref{alg:phase_extraction}.
Code to implement all simulations can be found at~\url{https://github.com/alicjadut/qpe}.

To improve the practical performance of Alg.~\ref{alg:adaptive}, we make the following two small changes.
Firstly, instead of choosing $\kappa_{d+1}$ in step \ref{step:krest} in the ranges declared in Lem.~\ref{lem:phase_matching_solution}, we choose the largest $\kappa_{d+1}$ consistent with Eq.~\eqref{eq:aliasing_bound} and Eq.~\eqref{eq:aliasing_lower_bound} for all $\tilde{\phi}_j^{(d)},\tilde{\phi}_l^{(d)}$.
We note that the maximum such $\kappa_{d+1}$ is bounded above by $\frac{\pi}{2\epsilon}-1$, as the left-hand side of Eq.~\eqref{eq:aliasing_bound} is bounded above by $2\pi$ and the left-hand side of Eq.~\eqref{eq:aliasing_lower_bound} is bounded below by $0$.
(In practice, tighter bounds can be found by checking the boundaries of the regions $R_{jl}^{(n)}$ defined in Eq.~\eqref{def:rjln}, and we find the largest possible $\kappa_{d+1}$ by iterating backwards through these boundaries till a gap is found.)
Secondly, as the bounds for $\epsilon$ and $\epsilon_0$ in Lem.~\ref{lem:phase_matching_solution} are rather loose, and our performance scales rather badly in both, we choose the largest $\epsilon=\epsilon_0$ that allows all simulations to find a value of $\kappa_{d+1}>2$ at each order.

When using the matrix pencil processing subroutine, we follow the implementation described in Ref.~\cite{Obrien19Quantum}:
\begin{alg}\label{alg:mps}
The matrix pencil method takes as input estimates of the phase function $g({\sf k})=\sum_j A_j e^{i {\sf k} \theta_j}$ for a unitary $V$ at points ${\sf k}=0,1, \ldots K$ and an overlap bound $A$, and proceeds as follows:
\begin{enumerate}
\item Construct the $L_K\times(2K-L_K+1)$  Hankel matrices $G^{(0)}$, $G^{(1)}$, where $G^{(a)}_{i,j} = g(i+j+a-K)$ for $i \in \{0, 1, ... L_K-1\}, j \in \{0, 1, ..., 2K-L_K\}$, $a=0,1$, with $L_K = \lfloor (K+1)/2 \rfloor$, and using $g(-{\sf k})=g^*({\sf k})$.
\item Construct the $L_K\times L_K$ shift matrix $T$ by least-squares minimization of the matrix $2$-norm $\|TG^{(0)} - G^{(1)}\|$.
\item Calculate the eigenvalues of $T$, $\lambda_j = |\lambda_j|e^{i\tilde{\theta}_j}$ and from there the phase estimates $\tilde{\theta}_j$.
\item Calculate the overlap estimates $\tilde{A}_j$ by least-squares minimization of the vector $2$-norm $\|BA - g\|$, where $B$ is the $(K+1)\times L_K$ matrix
\begin{equation}
B_{k,j} = \lambda_j^{{\sf k}},
\end{equation}
and $g = [g(0), ... g(K)]^T$.
\item Return the phase estimates $\tilde{\theta}_j$ for which the corresponding overlap estimate $\tilde{A}_j\geq A$.
\end{enumerate}
\end{alg}
To use this algorithm as a subroutine in Alg.~\ref{alg:adaptive} (in place of Alg.~\ref{alg:phase_extraction}), we implement it on the matrix $V=U^{k_d}$, which requires implementing $V^{\sf k}=U^{{\sf k} k_d}$ for a range of integer ${\sf k}$ on a quantum device. 

To isolate the performance of the estimation routine from the generation of the signal itself, we do not test our protocols on data generated from simulating or approximating a particular unitary.
Instead, we test the ability of the protocols to estimate $n_\phi=2$ and $n_{\phi}=4$ randomly-chosen phases $\phi_j\in [0,2\pi]$ when sampling from the true phase function $g(k)$. We take all phases with equal weight --- $A_j = 1/n_\phi$.
We simulate the sampling from $g(k)$ in Algorithm \ref{alg:mps} or Algorithm \ref{thm:QEEP_oracle} for some $V=U^{k_d}$ by simulating the readout of a control qubit with the reduced density matrix of Eq.~\eqref{eq:control_rdm}.
(In practice this would be generated by the quantum circuit in Fig.~\ref{fig:QPE}.)
We first draw $M_d$ i.i.d. samples from the two Bernoulli distributions
\begin{align}
    \mathbb{P}_k^r(+1) = \frac{1}{2}\sum_{j=1}^{n_\phi}A_j(1+\cos(\theta_jk)),\\ \mathbb{P}_k^i(+1) = \frac{1}{2}\sum_{j=1}^{n_\phi}A_j(1-\sin(\theta_jk)),
\end{align}
where $\theta_j=k_d\phi_j\mod 2\pi$ are the eigenvalues of $V$.
Then, we return the fraction of $+1$s drawn as estimates for the real and imaginary parts of $g(k)$ respectively.
This is then repeated at all points $k=k_d{\sf k}$ for ${\sf k}=0,1,\ldots,K$.
Following the discussion in Sec.~\ref{sec:speed_limits} and using the notation from Eq.~\eqref{eq:T-upper}, we sum the total quantum cost for the algorithm over all requested $g(k)$ queries; $T=2\sum_d\sum_{{\sf k}=1}^{K}{\sf k}k_dM_d$.
(We ignore the sub-leading correction from the final term in Eq.~\eqref{eq:T-upper} as this will not affect the scaling of our result.)
For the signal length $K$ and number of points $M_d$ to sample each $g(k)$ at, we follow the bounds given in Ref.~\cite{Somma19Quantum} (both when using the QEEP and matrix pencil subroutines):
 \begin{itemize}
 \item signal length: $K = \lceil 0.1 L\ln^2 L\rceil$, with $L=\lceil\frac{2\pi}{\epsilon}\rceil$ the number of bins used in the QEEP subroutine (Def.~\ref{def:QEEP}).
 \item number of measurements of each circuit: $M_d = \left\lceil \left|\ln\left(1-p_d\right)\right| \epsilon^{-4} \right\rceil$.
 \end{itemize}
 
Here, $p_d$ is given in Eq.~\eqref{eq:defpd} for a given $k_d$.
This equation requires fixing a choice of $\alpha$ and $\gamma$ --- across all experiments we take $\alpha=2$ and $\gamma=2.1$.

\begin{figure*}
\centering
\includegraphics[width=\textwidth]{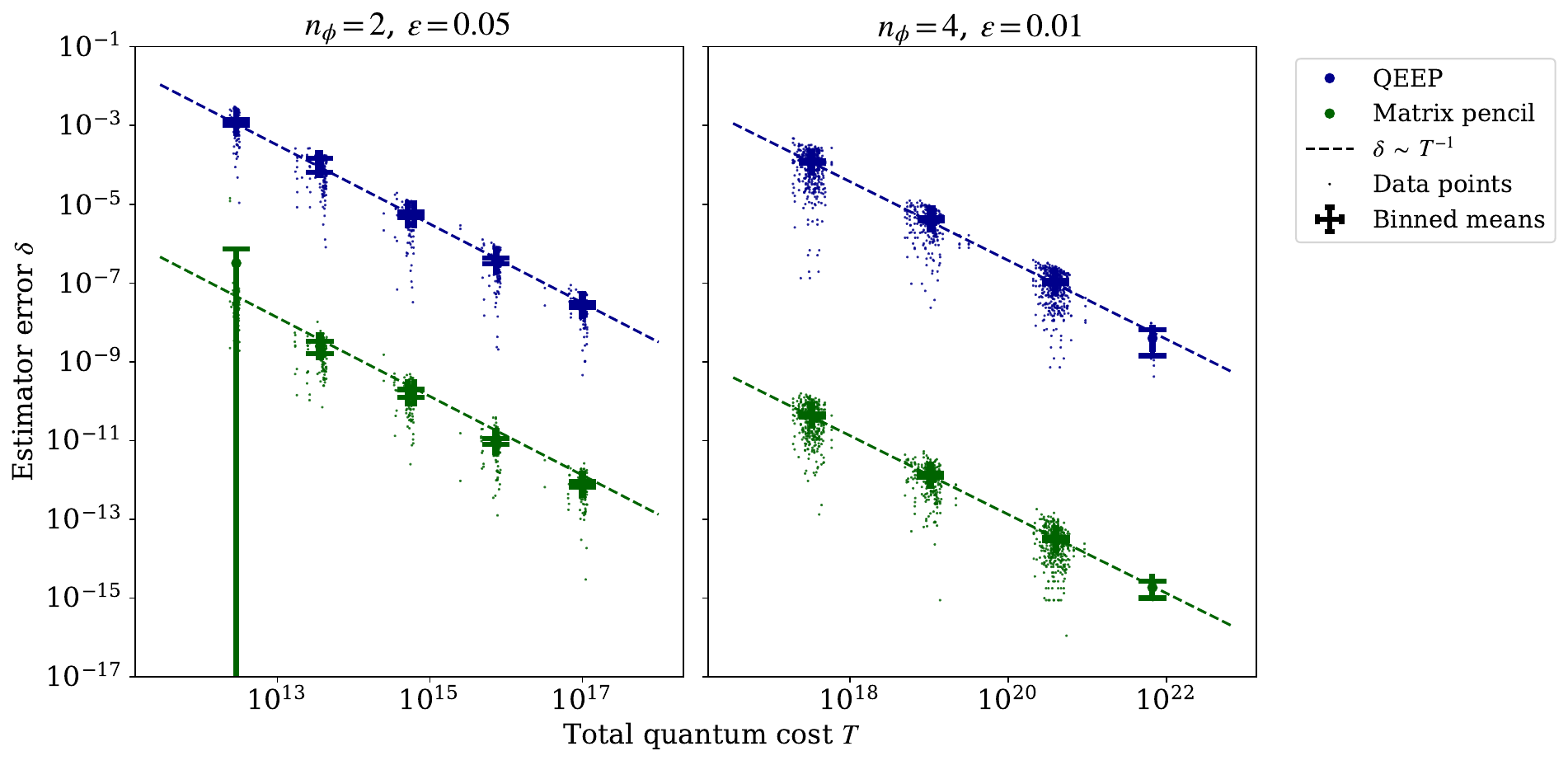}
\caption{
Convergence of the algorithm with total quantum cost $T$.
Phase estimates were obtained with either QEEP (blue) or matrix pencil (green) subroutines with parameters described in the text. 
Individual points show the error $\delta=|\tilde{\phi}_j-\phi_j|_{T}$ on individual phases in each simulation. 
This data is binned in the x-axis, and for each bin a root-mean-square error and standard deviation (error bars) are plotted in the x- and y-direction.
Dotted lines show a fit of these means to $\delta \sim T^{-1}$.
}
\label{fig:error_vs_cost}
\end{figure*}

To demonstrate that our methods achieve the Heisenberg limit, in Fig.~\ref{fig:error_vs_cost} we plot the error as a function of the total quantum cost for a set of simulations using the methods described above.
Each simulation draws a different set of $n_{\phi}$ random phases, and a final error $\delta_c\in[10^{-5},10^{-2}]$ (for each choice of $\delta_c$ we use the same $50$ sets of phases).
We plot the error for each phase estimate separately in Fig.~\ref{fig:error_vs_cost} (i.e. each simulation corresponds to $n_{\phi}$ points in the plot).
As both the total quantum cost and the error is different between simulations, we bin all experiments within a range of $T$ values, and calculate the root mean square error and root mean square total quantum cost.
This gives a good approximation to the accuracy error defined in Def.~\ref{def:MEEP} for the restricted data set used.

For the QEEP subroutine, we observe a clear fit of the data (blue points) to a $\delta\sim T^{-1}$ trend, as expected from Thm.~\ref{thm:adaptive_gets_Heisenberg_limit}, but with a rather large constant factor; we find $T\sim 10^{10}\delta^{-1}$ for estimating $2$ phases and $T\sim 10^{15}\delta^{-1}$ for estimating $4$.
Further optimization of the QEEP algorithm for these purposes may yet improve this constant factor.
However, as the methods of Ref.~\cite{Somma19Quantum} were not designed for estimating individual phases, it may be expected that this method performs somewhat badly for this purpose, so we have not pursued this further.

Simulations using the matrix pencil subroutine outperform simulations using the QEEP subroutine by a factor of $10^4-10^6$, and clearly demonstrate Heisenberg-limited scaling $\delta\sim T^{-1}$ as well.
We take this result instead of an analytic proof as strong numerical evidence for Heisenberg-limited scaling when a version of Alg.~\ref{alg:adaptive} is constructed using the matrix pencil method as a subroutine.
We notice that the error in two phase estimates in the first bin of the matrix pencil method for $n_{\phi}=2$ is significantly above the remainder of the population (by about a factor $100\times$), which blows up our error bars for this bin.
Further investigation shows that the two phases in question are from the same simulation, and separated by only $1.5\times 10^{-4}$.
By contrast, for the simulation in question (at $d=1$) our algorithm sets $k_1K\sim 3\times 10^3 < (1.5\times 10^{-4})^{-1}$ (where $k_1K$ is the largest value of the phase function $g(k)$ sampled during this simulation).
This implies that our signal lies within the region where improving our estimation accuracy by increasing the number of shots $M_d$ is exponentially hard~\cite{Moitra14Super}.
In latter simulations with these two phases we see that our estimation error regresses to similar results as all other estimates.

\section{Conclusion}
\label{sec:conclusion}

In this work we studied Heisenberg-limited quantum phase estimation using a single control qubit. In this form of phase estimation, we rely on classical signal processing to extract eigenvalue data from the phase function $g(k)$ in Eq.~\eqref{eq:phase_function}.

It has been an open question whether these methods can achieve the Heisenberg limit in the case of multiple phases: Ref.~\cite{LT:heisenberg} answered this question up to log factors with a Heisenberg-limited Monte Carlo algorithm, providing a sampling of the spectral function $A(\phi)$ in Eq.~\eqref{eq:spectral_function} from which to estimate the phases.
In this work we also answered this question in the affirmative exactly with a new adaptive multi-order phase estimation algorithm, for which we prove Heisenberg scaling if the algorithm uses a QEEP phase extraction subroutine.
We numerically show the performance of this algorithm, also when instead of using a QEEP subroutine, one uses the matrix pencil method to extract phase estimates from the phase function $g(k)$.
This result complements the previous work discussed in the introduction by closing the question of whether there exists a gap between classical post-processing of phase function data and fully quantum phase estimation.

In obtaining our results we encountered at least two details of quantum phase estimation that we have not seen discussed in the literature.
The first is the dense signal limit, Eq.~\eqref{eq:DSL} in Thm.~\ref{thm:heisenberg_limit}: sampling $g(k)$ at {\em all} integer point $k=1,\ldots,K$ is sub-optimal regardless of what method is used to process the data. However, we also briefly argued that by picking points among $k=1,\ldots,K$ at random one may go beyond this, and one could interpret this as allowing the results obtained in \cite{LT:heisenberg} in which such randomized choices for $k$ are taken.

The second is the need for adaptive choices of $k_d$ to solve the phase matching problem. It is unclear to us how far this problem extends; although Lemma~\ref{lem:phase_matching_solution} provides a practical solution, others may still exist. 
Another open question with respect to Algorithm \ref{alg:adaptive} is whether one can remove the need to choose real-valued multipliers $\kappa_d$ and restrict to integer choices. Restricting $\kappa_d\in\NN$ would significantly simplify some technical issues, i.e. the applicability of Lemma \ref{lem:normscaling} and the need for shifting phases in step \ref{step:shift} of Alg.~\ref{alg:adaptive}, but we don't know how to prove a version of Lemma~\ref{lem:phase_matching_solution} for $\kappa_d\in\NN$. In fact, we don't know whether there is a fundamental difference in performance between only using data obtained with integer $k$ in $g(k)$ versus data obtain with real-valued, --in practice rational--, $k$ in $g(k)$. 

We have assumed in our problem description, Def.~\ref{def:MEEP}, that the spectrum in the input state is discrete consisting of $n_{\phi}$ phases with amplitudes above some cut-off. In practice this condition may not be fulfilled and thus studying the performance of the algorithms on more typical spectra induced by many-body Hamiltonians and easy-to-prepare input states will be of interest. 
The scaling of our (provable) Heisenberg-limited algorithm in $n_{\phi}$ is also rather poor [$O(n_{\phi}^{24})$ as described], as we have not attempted to optimize this aspect.
This should in principle be immediately reducible to $O(n_{\phi}^3\delta^{-1}+n_{\phi}^6)$ under the assumption that the matrix pencil method continues to achieves the dense sampling limit when estimating multiple eigenvalue, however we do not know a proof of this.
In principle linear scaling with $n_{\phi}$ should be achievable (or even sub-linear if the methods of~\cite{Lin20Preparation} could be applied in this setting); optimizing this is a clear target for future work.

A direction for future research is to make this algorithm efficient in practice (i.e. improve the parameter dependence and the practical run time) or devise yet-different Heisenberg-scaling algorithms and examine their performance in the presence of experimentally-noisy signals $g(k)$.

\section*{Acknowledgements}

AD acknowledges financial support in the form of a scholarship from the Den Adel Fund.


\bibliographystyle{unsrtnat}
\bibliography{HLQPE}

\appendix

\section{Phase extraction subroutine \label{app:QEEP}}

For our Heisenberg-limited algorithm, at each order $d$ we need to extract eigenphases of $V=U^{k_d}$ from the signal generated by PFE. We require that the phase estimates satisfy the promises of Lem.~\ref{lem:phase_extraction_promises}. This is achieved by Alg.~\ref{alg:phase_extraction}, which is based on the solution to QEEP of Ref.~\cite{Somma19Quantum}.

In this appendix, we first summarize the results of Ref.~\cite{Somma19Quantum} by giving a precise definition of the QEEP (Def.~\ref{def:QEEP}) and performance guarantees of the time-series analysis (Def.~\ref{thm:QEEP_oracle}).
Then we describe the Conservative QEEP Eigenvalue Extraction algorithm (Alg.~\ref{alg:phase_extraction}) and give the proof of Lem.~\ref{lem:phase_extraction_promises}.

\subsection{The Quantum Eigenvalue Estimation Problem}

\begin{dfn}[Quantum Eigenvalue Estimation Problem, QEEP]
Let $A(\phi)$ be the spectral function defined in Eq.~\eqref{eq:spectral_function} for a unitary $U$ and $\ket{\Psi}$. Given is an error parameter $\epsilon > 0$, a confidence parameter $1 \geq p > 0$, and a set of non-negative (approximate indicator) functions $f^l(\phi)$ for $\phi\in [0,2\pi)$ for $l=0,\ldots, L-1$, $L=\lceil\frac{2\pi}{\epsilon}\rceil$, where $f^l(.)$ has support on only the interval bin
    \begin{equation}
        \Vv_l=[(l-1)\epsilon, (l+1)\epsilon]_{T},\label{eq:vl_def}
    \end{equation}
    and $f^l(\phi)+f^{(l-1)}(\phi)=1$ for all $\phi\in\Vv_l\cap\Vv_{l-1}$.
Assuming access to the {\rm PFE} subroutine, Def.~\ref{def:phase_function_estimation_problem}, the goal is to output an approximation $\tilde{b}_l$ for $l=0,\ldots,L-1$ to the integral
\begin{equation}
    b_l=\int_0^{2 \pi} d\phi A(\phi)f^l(\phi),\label{eq:sf_amplitudes_def}
\end{equation}
which satisfies 
\begin{equation}
\sum_{l=0}^{L-1} |\tilde{b}_l-b_l|\leq \epsilon,\label{eq:QEEP_bound}
\end{equation}
with probability at least $p$.
\label{def:QEEP}
\end{dfn}

Note that the bins $\Vv_l$ have width $2\epsilon$ and overlap on a region of width $\epsilon$ and 
$\sum_{l=0}^{L-1} b_l=1$.

\begin{thm}[QEEP Algorithm~\cite{Somma19Quantum}]\label{thm:QEEP_oracle}
One can solve the QEEP problem in Definition \ref{def:QEEP} with $\{f^l(.)\}$ a set of `bump' functions
\begin{align}
    f^{l}(\phi)=&\frac{2a}{\epsilon}\int_{l\epsilon-\frac{\epsilon}{2}}^{l\epsilon+\frac{\epsilon}{2}}d\phi'\nonumber\\
    &\times\exp\left\{-\left[1-\frac{4}{\epsilon^2}(\phi-\phi')^2\right]^{-1}\right\},
    \label{eq:bump}
\end{align}
with normalization constant $a\approx 2.252$, using {\rm PFE} in Def.~\ref{def:phase_function_estimation_problem} with $k=0, 1,\ldots, K$ with $K=O\left(\epsilon^{-1}\ln^2(\epsilon^{-1})\right)$ and $M=\tilde{O}(|\ln(1-p)|\epsilon^{-4})$ for each $k=1,2\ldots, K$.
The total quantum cost for $U$ is then bounded as $T =O(M K^2)=\tilde{O}(|\ln(1-p)|\epsilon^{-6})$.
\end{thm}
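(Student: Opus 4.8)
The plan is to exploit the observation made already in Definition~\ref{def:phase-function}, that the phase function provides the Fourier coefficients of the spectral function, $g(k)=\int_0^{2\pi}d\phi\, e^{ik\phi}A(\phi)$. First I would expand each bump function in its own Fourier series, $f^l(\phi)=\sum_{k\in\ZZ}c_k^l e^{-ik\phi}$ with $c_k^l=\frac{1}{2\pi}\int_0^{2\pi}d\phi\, f^l(\phi)e^{ik\phi}$, and substitute into Eq.~\eqref{eq:sf_amplitudes_def} to obtain the exact identity
\begin{equation}
    b_l=\sum_{k\in\ZZ}c_k^l\, g(-k).
\end{equation}
Since $g(-k)=g^*(k)$, each $b_l$ is determined entirely by the values $\{g(k)\}_{k\geq 0}$ that PFE samples. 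The structural fact that makes the whole argument work is that the bumps are translates of a single profile, so $c_k^l=c_k^0\,e^{-ikl\epsilon}$ and in particular $|c_k^l|=|c_k^0|$ is independent of $l$; moreover $f^0$ has $L^1$-norm $O(\epsilon)$, so its (smooth, hence absolutely summable) Fourier series obeys $\sum_k|c_k^0|=O(1)$.

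The estimator is the doubly-truncated, empirically-sampled sum $\tilde b_l=\sum_{|k|\leq K}c_k^l\,\tilde g(-k)$, where $\tilde g(-k)=\tilde g^*(k)$ is built from the PFE output, and I would split $\sum_l|\tilde b_l-b_l|$ into a truncation part and a statistical part. For the truncation part, using $|g|\leq 1$,
\begin{equation}
    \sum_{l=0}^{L-1}\Big|b_l-\sum_{|k|\leq K}c_k^l\, g(-k)\Big|\leq L\sum_{|k|>K}|c_k^0|.
\end{equation}
Here the smoothness of the bump enters decisively. The profile $\exp\{-[1-\tfrac{4}{\epsilon^2}(\cdot)^2]^{-1}\}$ is a Gevrey-class-$2$ function of width $\epsilon$, so its Fourier coefficients decay sub-exponentially, $|c_k^0|\leq C\exp(-c\sqrt{\epsilon|k|})$. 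Feeding this into the bound and using $L=O(\epsilon^{-1})$ forces $\exp(-c\sqrt{\epsilon K})\lesssim\epsilon^2$, i.e.\ $\epsilon K\gtrsim\ln^2(\epsilon^{-1})$, which is exactly the stated choice $K=O(\epsilon^{-1}\ln^2(\epsilon^{-1}))$ needed to drive the truncation error below $\epsilon/2$.

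For the statistical part I would run PFE with a common number of repetitions $M$ at each $k=1,\ldots,K$, producing estimates with $|\tilde g(k)-g(k)|\leq\eta$ simultaneously for all $k$, which by a Chernoff bound with a union bound over the $O(K)$ experiments costs $M=O(\eta^{-2}(\ln K+|\ln(1-p)|))$ per $k$ at overall confidence $p$. Then
\begin{equation}
    \sum_{l}\Big|\tilde b_l-\sum_{|k|\leq K}c_k^l\, g(-k)\Big|\leq\eta\sum_{|k|\leq K}\sum_l|c_k^l|=\eta\,L\sum_{|k|\leq K}|c_k^0|=O(\eta/\epsilon),
\end{equation}
so taking $\eta=\Theta(\epsilon^2)$ makes this at most $\epsilon/2$ and yields $M=\tilde O(|\ln(1-p)|\epsilon^{-4})$ after absorbing the $\ln K=O(\ln(\epsilon^{-1}))$ term. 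Adding the two halves gives Eq.~\eqref{eq:QEEP_bound}. The cost then follows from Definition~\ref{def:phase_function_estimation_problem}: estimating $g(k)$ costs $O(Mk)$, so $T=\sum_{k=1}^K O(Mk)=O(MK^2)=\tilde O(|\ln(1-p)|\epsilon^{-6})$, using $K^2=\tilde O(\epsilon^{-2})$.

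The main obstacle is the sub-exponential Fourier decay estimate $|c_k^0|\leq C\exp(-c\sqrt{\epsilon|k|})$; everything else is linear-algebra bookkeeping plus a Chernoff/union bound. I would establish it either by a complex-analytic contour shift (the bump extends analytically off the real axis with its nearest singularities a distance $O(\epsilon)$ away, and shifting the contour by a $k$-dependent optimal amount produces the $\sqrt{\epsilon|k|}$ exponent) or, more elementarily, from the Gevrey derivative bounds $\|\partial_\phi^n f^0\|_1\leq C R^n (n!)^2\epsilon^{-n}$ combined with $|c_k^0|\leq\|\partial_\phi^n f^0\|_1/|k|^n$ and optimisation over $n\approx\sqrt{\epsilon|k|}$.
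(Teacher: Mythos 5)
Your proposal is correct and takes essentially the same route as the paper, which does not prove Theorem~\ref{thm:QEEP_oracle} itself but imports it from Ref.~\cite{Somma19Quantum}, sketching precisely the argument you flesh out: $b_l=\sum_{k\in\ZZ}\tilde{f}^l(k)g(k)$ by Fourier decomposition, with the bump functions chosen exactly because their Fourier series decays quickly, followed by truncation at $K$ and statistical estimation of $g(k)$. Your reconstruction also correctly identifies the two quantitative ingredients behind the stated parameters: the Gevrey-class-2 decay $|c_k^0|\lesssim e^{-c\sqrt{\epsilon |k|}}$, whose inversion against a target truncation error of order $\epsilon$ yields the $\ln^2(\epsilon^{-1})$ factor in $K$, and the per-point accuracy $\eta=\Theta(\epsilon^2)$ from a Chernoff-plus-union bound, which gives $M=\tilde{O}(|\ln(1-p)|\epsilon^{-4})$ and hence $T=O(MK^2)=\tilde{O}(|\ln(1-p)|\epsilon^{-6})$.
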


We note that the approximate indicator functions $f^l(.)$ in Eq.~\eqref{eq:bump} are designed to have a quickly decaying Fourier series, which is required to achieve polynomial-time scaling. We also refer the reader to Ref.~\cite{Roggero20Spectral}, which has extended the result by relaxing the requirement that $f^l(\phi)+f^{(l-1)}(\phi)=1$ on the interval $\Vv_{l}\cap\Vv_{l-1}$. The idea of this QEEP algorithm is as follows. Since $b_l=\sum_j A_j f^l(\phi_j)$, using Eq.~\eqref{eq:spectral_function}, periodically extending $f^l(\phi)$ beyond $[0,2\pi)$ and Fourier decomposing gives $b_l=\sum_j A_j \sum_{k \in \mathbb{Z}} e^{ik \phi_j} \tilde{f}^l(k)=\sum_{k \in \mathbb{Z}} g(k) \tilde{f}^l(k)$. At the same time, the fact that $f^l(.)$ is an indicator function ensures that $b_l \approx \sum_{\phi_j \in \Vv_l} A_l$. Thus knowledge of $g(k)$ and the Fourier coefficients $\tilde{f}^l(k)$ for a range of $k$ allows one to estimate the weights $b_l$.
The requirement to estimate the spectral function to within a $1$-norm $\epsilon$, Eq.~\eqref{eq:QEEP_bound}, is very stringent, hence the scaling of $T$ with error $\epsilon$ is quite costly, $T=O(\epsilon^{-6})$. 
It is possible that one can improve the scaling by re-examining the analysis in \cite{Somma19Quantum}.\\

\subsection{Conservative QEEP Eigenvalue Extraction}

\begin{alg}[Conservative QEEP Eigenvalue Extraction]\label{alg:phase_extraction}
Fix an overlap bound $A$, an error bound $0 < \epsilon < \frac{A}{3}$, and a confidence bound $0<p<1$.
Assume access to a QEEP Algorithm \ref{thm:QEEP_oracle} for a unitary $V$.
The algorithm proceeds as follows:
\begin{enumerate}
    \item Use the QEEP subroutine with error $\epsilon$, Alg.~\ref{thm:QEEP_oracle}, and confidence $p$ to generate an estimate $\tilde{b}_l$ for $b_l$ as defined in Eq.~\eqref{eq:sf_amplitudes_def}.
    \item \label{step:Sconstruct} Construct the set 
    \begin{equation}
        S=\{l\in\{0,\ldots,L-1\}| \tilde{b}_l \geq A/3\}.
    \end{equation}
    \item Find the smallest $l \in \{0,\ldots,L\}$ with $l \not \in S$ and call it $l_{\min}$.
    \item For $l'=l_{min},\ldots, (l_{min}+ L-1)\bmod{L}$:
    \begin{enumerate}
        \item if $l' \in S$ and $(l'- 1)\bmod{L}\in S$, remove $l'$ from~$S$.
    \end{enumerate}
    \item Return the set $\{\tilde{\theta}_l=l\epsilon\}_{l\in S}$ as a set of estimates of eigenphases of $V$.
\end{enumerate}
\end{alg}

Let us now motivate Algorithm \ref{alg:phase_extraction}. One may identify phases with sufficient probability in the output of the QEEP algorithm as the bins $l$ with $b_l > b_{\mathrm{cutoff}}$ with $b_l$ in Eq.~\eqref{eq:sf_amplitudes_def}.
Then to convert this into an estimate of a phase $\theta_j$, for each such bin above cut-off we could output the estimate $\epsilon l$, i.e. in the middle of the corresponding bin $\Vv_l$.

We calculate appropriate values of $b_{\mathrm{cutoff}}$ and the QEEP error $\epsilon$ to guarantee that we output an estimate for each $\theta_j$ with $A_j>A$ with a provable confidence, and to guarantee no estimate in the absence of any $\theta_j$.
Def.~\ref{def:QEEP} states that when there exists such a $\theta_j\in\Vv_{l}\cap\Vv_{l-1}$, we are guaranteed with confidence $p$ that $\tilde{b}_l+\tilde{b}_{l-1}+\epsilon > A$ (as $b_l+b_{l-1} > A$). To guarantee that at least one of $\tilde{b}_l$ or $\tilde{b}_{l-1}$ is larger than $b_{\mathrm{cutoff}}$ with the same confidence, we thus require $b_{\mathrm{cutoff}} \leq (A - \epsilon) / 2$.

Similarly, Def.~\ref{def:QEEP} states that when there exists no $\theta_j\in\Vv_{l}$ with $A_j>0$, we are guaranteed with confidence $p$ that $b_l<\epsilon$.
To prevent a spurious estimate in this case, we require $b_{\mathrm{cutoff}}\geq\epsilon$.
Solving this to maximise $\epsilon$ (which minimizes the cost of the QEEP routine) yields $b_{\mathrm{cutoff}}=\epsilon=A/3$ which is what we use in Alg.~\ref{alg:phase_extraction}.

A further small complication exists in solving the problem posed in Def.~\ref{def:MEEP}: we require that one outputs at most $n_{\phi}$ phases.
This will be satisfied if we can guarantee at most one phase estimate per $\theta_j$ with $A_j>A$, as we know there are $n_{\theta}=n_{\phi}$ such estimates.
As the bins $\Vv_l$ (Def.~\eqref{eq:vl_def}) overlap, a phase $\theta_j$ may participate in up to two neighbouring bins --- corresponding to amplitudes $b_l,b_{l+1}>A/3$.
To ensure that this does not result in two estimates being generated, one could take all contiguous sets $b_{l_1},b_{l_1+1},\ldots,b_{l_2}>A/3$ and prune away every second index $l$.
In doing this we need to respect the periodicity of the $\Vv_l$: $\Vv_{L-1}\cap\Vv_0\neq\emptyset$, so these contiguous sets may wrap around the circle. Pruning every second index when starting from the middle of one of these contiguous sets may result in two neighbouring $l, l+1$  being removed, which is not what we desire.
Instead, in the following pseudocode, after generating the set of all $l$ with sufficient $b_l$, we find the first gap (in $l$) between these regions (corresponding to the first $b_l<A/3$).
We then iterate (from this point $l_{\min}$ to $L-1$ and then from $0$ to $l_{\min}$) over the $b_l$, and remove each $l$ from our set if $b_{l-1}>A/3$ and $l-1$ was not itself removed.

\subsection{Proof of Lemma \ref{lem:phase_extraction_promises}}

Lemma \ref{lem:phase_extraction_promises} relates the performance of Alg.~\ref{alg:phase_extraction} to the performance of the QEEP subroutine.
We now prove this Lemma.
Note that $\{\theta_j\}$ is an ordered list of $n_{\phi}$ phases, while $\{\tilde{\theta}_l\}$ is an ordered list (that we will show contains at most $n_{\phi}$ phases). 

{\em Proof of Lemma \ref{lem:phase_extraction_promises}}
Our proof follows by showing that the output from Alg.~\ref{alg:phase_extraction} satisfies these statements whenever Eq.~\eqref{eq:QEEP_bound} holds.
This yields our confidence bound as Eq.~\eqref{eq:QEEP_bound} holds with probability $p$ by Def.~\ref{def:QEEP}.

To see that statement \ref{stat2} holds when Eq.~\eqref{eq:QEEP_bound} is satisfied, note that if $\Vv_l$ contains no phases $\theta_j$ with $A_j>A$ and Eq.~\eqref{eq:QEEP_bound} is satisfied, $\tilde{b}_l<\epsilon<A/3$, and $l$ will not be added to the set $S$ in Alg.~\ref{alg:phase_extraction}.
This implies that when Eq.~\eqref{eq:QEEP_bound} holds, if $l\in S$ there exists some $\theta_j\in\Vv_l$ with $A_j>A$, in which case $|\theta_j-\tilde{\theta}_l|_T=|\theta_j-\epsilon l|_T\leq 2\epsilon$.

To see that statement \ref{stat3} holds when Eq.~\eqref{eq:QEEP_bound} is satisfied, note that $\theta_j\in \Vv_l\cap\Vv_{(l+1)\bmod{L}}$ for exactly one $l$ (and $\theta_j\notin\Vv_{m}$ for $m\neq l,(l+1)\bmod{L}$).
Then, when Eq.~\eqref{eq:QEEP_bound} holds, $\tilde{b}_l+\tilde{b}_{(l+1)\bmod{L}}>2A/3$, so $\max(\tilde{b}_l, \tilde{b}_{(l+1)\bmod{L}})>A/3$, and either $l$, $(l+1)\bmod{L}$ or $l$ and $(l+1)\bmod{L}$ will be added to the set $S$ during step \ref{step:Sconstruct} of Alg.~\ref{alg:phase_extraction} for each phase $\theta_j$.
Then, step 4 of Alg.~\ref{alg:phase_extraction} will remove $(l+1)\bmod{L}$ if $l$ remains in $S$, so each phase $\theta_j$ can contribute to only one final estimate $\tilde{\theta}_l=l\epsilon$, and the number of estimates is bounded from above by the number of phases.

To see that statement \ref{stat1} holds, we use the point in the previous paragraph that, when Eq.~\eqref{eq:QEEP_bound} is satisfied, each phase $\theta_j$ with $A_j>A$ adds either $l$, $(l+1)\bmod{L}$, or $l$ and $(l+1)\bmod{L}$ to the set $S$ during step 2. of Alg.~\ref{alg:phase_extraction}.
Then in step 4. of Alg.~\ref{alg:phase_extraction}, $l$ is removed from $S$ only if $(l-1)\bmod{L}$ remains in $S$, and $(l+1)\bmod{L}$ is removed from $S$ only if $l$ remains in $S$.
This implies that the distance from $\theta_j$ to an estimate is bounded by
\begin{multline}
    \max_{\theta_j\in\Vv_l\cap\Vv_{(l+1)}}\hspace{0.2cm}\max_{l'=l-1, l,l+1 \mod{L}}|\theta_j-\epsilon l'|_T =\\= \epsilon (l+1)\bmod{L} -\epsilon(l-1))\bmod{L}=2\epsilon,
\end{multline}
as required. \qedsymbol
\section{Proof of Lemma \ref{lem:phase_matching_solution}}\label{app:phase_matching_proof}

Let us first prove the existence of $\kappa_{d+1} \in [2,\kappa_{\rm max}]$ with $\kappa_{\rm max}=3$ that satisfies our conditions (the proof for $k_1$ is similar), for small enough $\epsilon$ in Eq.~\eqref{eq:eps-n}. Note that Eq.~\eqref{eq:eps-n} implies 
\begin{equation}
  \epsilon \leq \frac{\pi}{300} \approx 0.01.
    \label{eq:epsbound}
\end{equation}
Given some pair $\tilde{\phi}^{(d)}_j \neq \tilde{\phi}^{(d)}_l$, $j <  l$, let $\Delta_{j,l} = |\tilde{\phi}^{(d)}_j-\tilde{\phi}^{(d)}_l|$ and let $R_{j,l}$ be a set of $\kappa_{d+1}$ such that  neither Eq.~\eqref{eq:aliasing_lower_bound} nor Eq.~\eqref{eq:aliasing_bound} holds for the chosen phases, that is,
\begin{align}
    R_{j,l} = \bigg\{&
    \kappa_{d+1} \in [2, \kappa_{\max}]:\nonumber\\
    &|\Delta_{j,l}|_T \geq \frac{\pi-2\epsilon(1+\kappa_{d+1})}{k_d \kappa_{d+1}} \nonumber\\
    \land
    &|k_d\kappa_{d+1}\Delta_{j,l}|_T \leq 4\epsilon(1+\kappa_{d+1})
    \bigg\}.
\end{align}

We call $\cup_{j,l}R_{j,l}$ the forbidden region and want to show that we can choose a value for $\kappa_{d+1} \in [2,3]$ outside this forbidden region if $\epsilon$ is sufficiently small. We do this by bounding the size of the forbidden region above and showing that this is smaller than the region $[2,3]$, leaving room to choose $\kappa_{d+1}$.

Note that $R_{j,l}$ is nonempty only if
\begin{align}
    k_d \Delta_{j,l} &\geq k_d |\Delta_{j,l}|_T \notag \\
    &\geq 
        \frac{\pi-2\epsilon(1+\kappa_{d+1})}{\kappa_{d+1}} \notag \\
        &\geq 
    \frac{\pi-2\epsilon(1+\kappa_{\max})}{\kappa_{\rm max}} \notag \\
    &> \frac{73\pi}{225},
\label{eq:appb_Delta_bound}
\end{align}
using Eq.~\eqref{eq:epsbound}.

We may write the set $R_{j,l}$ as
\begin{align}
    R_{j,l}=&\left[\max\left(2,\frac{\pi-2\epsilon}{k_d|\Delta_{j,l}|_T+2\epsilon}\right),\kappa_{\max}\right]\nonumber\\&\cap\bigcup_{n \in \NN} R_{j,l}^{(n)},
\end{align}
where $R_{j,l}^{(n)}$ is the set of $\kappa_{d+1}$ for which 
\begin{align}
    |k_d\kappa_{d+1}\Delta_{j,l}|_T &=\bigg| k_d\kappa_{d+1}\Delta_{j,l} - 2\pi n\bigg| \notag \\
    &\leq 4\epsilon (1 + \kappa_{d+1}),
      \label{def:rjln}
\end{align}
for some $n\in \mathbb{N}$.

Solving this equation for $\kappa_{d+1}$ yields
\begin{align}
    R_{j,l}^{(n)} = &\left[\frac{2\pi n - 4\epsilon}{k_d\Delta_{j,l} + 4\epsilon},\frac{2\pi n + 4\epsilon}{k_d\Delta_{j,l} - 4\epsilon}\right].
\end{align}
The size of each interval $R_{j,l}^{(n)}$ can then be calculated
\begin{equation}
    \left|R_{j,l}^{(n)}\right| = \frac{8\epsilon(2n\pi+ k_{d}\Delta_{j,l})}{k_{d}^2\Delta_{j,l}^2-16\epsilon^2}.
\end{equation}
We can bound
\begin{align}
    \left|R_{j,l}\right|&\leq \left|\bigcup_{n=1}^{n_{\max}}R^{(n)}_{j,l}\right|
    \leq \sum_{n=1}^{n_{\max}}\left|R^{(n)}_{j,l}\right| \notag \\
    &=\frac{8\epsilon}{k_{d}^2\Delta_{j,l}^2-16\epsilon^2}
    \bigg(\pi n_{\max}(n_{\max}+1)\nonumber\\
    &\qquad\qquad\qquad+k_d\Delta_{j,l}n_{\max}\bigg) .
\end{align}
Here, $n_{\max}=n_{\rm max}(j,l)$ is the largest index of a set $R^{(n)}_{j,l}$ in Eq.~\eqref{def:rjln} for which $\kappa_{d+1} \in [2,\kappa_{\max}]$. Since $\kappa_{d+1} \leq 3$, Eq.~\eqref{def:rjln} implies that 
\begin{equation}
    n_{\max}\leq \frac{3(k_d \Delta_{j,l} + 4\epsilon)+4\epsilon}{2\pi},
\end{equation}
Now using Eq.~\eqref{eq:appb_Delta_bound} and Eq.~\eqref{eq:epsbound} gives 
   \begin{align}
       \big|&R_{j,l}\big|
       \leq \frac{1}{1- (4 \epsilon/k_d \Delta_{j,l})^2}\nonumber \notag \\
       &\times\frac{30 \epsilon (864000 \epsilon^2+ 248160 \epsilon \pi+11899 \pi^2)}{5329 \pi^3} \notag \\
       &\leq 
       \frac{3 \epsilon (864000 \epsilon^2+ 248160 \epsilon \pi+11899 \pi^2)}{532 \pi^3} \notag \\
       &\leq 0.24,
       \label{eq:upperR}
   \end{align}
   where the last inequality used Eq.~\eqref{eq:epsbound}.
   As there are $n_{\phi}\geq 2$ phases the length of the total forbidden region $\cup_{j,l}R_{j,l}$ is bounded from above by $\frac{n_{\phi}^2}{2} \left|R_{j,l}\right|$. We want to this interval to be, say, at most $1/4$, so that by choosing $\kappa_{d+1}$ randomly we have a 75$\%$ change of not landing in the forbidden interval. For larger $n_{\phi}$ we thus should use
\begin{equation}
    \epsilon\leq \epsilon_{\mathrm{crit}} = \frac{2\pi}{300 n_{\phi}^2},
\end{equation}
leading to Eq.~\eqref{eq:eps-n}.

We now repeat the above approach for the special case of finding the multiplier in the first round $\kappa_1=k_1$. Consider thus $k_1\in [3n_{\phi},\kappa_{\max}]$ with $\kappa_{\rm max}=3 n_{\phi}+1$.
The key difference here is that there is a stricter lower bound on this multiplier $\kappa_{\rm max}$ so that $\epsilon$ needs to be chosen smaller, depending on $n_{\phi}$, namely we choose 
\begin{equation}\epsilon_0 \leq \epsilon_{\mathrm{crit},0} = \frac{2\pi}{300 n_{\phi}^4},
\label{eq:eps-choice}
\end{equation}
as expressed in Eq.~\eqref{eq:eps-n0}.
   
Given some pair $\tilde{\phi}^{(0)}_j \neq \tilde{\phi}^{(0)}_l$, $j <  l$, and again let $\Delta_{j,l} = |\tilde{\phi}^{(0)}_j-\tilde{\phi}^{(0)}_l|$.
Then, let $R_{j,l}$ be the set of $k_1$ such that neither Eq.~\eqref{eq:aliasing_lower_bound} nor Eq.~\eqref{eq:aliasing_bound} holds for the chosen phases.
That is,
\begin{align}
    R_{j,l} = \bigg\{&
    k_1 \in \left[3n_\phi, \kappa_{\max}\right]:\nonumber\\
    &|\Delta_{j,l}|_T \geq \frac{\pi-2\epsilon_0(1+k_1)}{k_1}\nonumber\\
    \land&
    |k_1\Delta_{j,l}|_T \leq 4\epsilon_0(1+k_1)
    \bigg\}.
\end{align}

We again call $\cup_{j,l}R_{j,l}$ the forbidden region and want to show that we can choose a value for $k_1 \in [3n_\phi,3 n_{\phi}+1]$ outside this forbidden region, assuming that $\epsilon_0$ is small enough. Note that the logic of the first few inequalities in Eq.~\eqref{eq:appb_Delta_bound} still holds in this new calculation, leading to
\begin{align}
    \Delta_{j,l}&\geq  \frac{\pi-2 \epsilon_0(1+\kappa_{\rm max})}{\kappa_{\rm max}} \notag \\
    &\hspace{1cm}\geq \frac{\pi(1-\frac{2+3n_{\phi}}{75n_{
    \phi}^4})}{1+3n_{\phi}},
    \label{eq:delbound}
\end{align}
where the second inequality used Eq.~\eqref{eq:eps-choice} and the value for $\kappa_{\rm max}$. Note that for large $n_{\phi}$ this allows $\Delta_{j,l}$ to decrease like $\sim 1/n_{\phi}$, while previously $\Delta_{j,l}$ was lowerbounded by a constant, Eq.~\eqref{eq:appb_Delta_bound}.

This time, we may write the set $R_{j,l}$ as
\begin{align}
    R_{j,l}=&\left[\max\left(3n_\phi,\frac{\pi-2\epsilon_0}{|\Delta_{j,l}|_T+2\epsilon_0}\right),\kappa_{\max}\right]\nonumber\\&\cap\bigcup_{n \in \NN} R_{j,l}^{(n)},
\end{align}
where $R_{j,l}^{(n)}$ is the set of $k_1$ satisfying
\begin{equation}
    \bigg| k_1\Delta_{j,l} - 2\pi n\bigg| \leq 4\epsilon_0 (1+k_1),
      \label{def:rjln-2}
\end{equation}
for some $n\in \mathbb{N}$.
Solving this equation for $k_1$ yields
\begin{align}
    R_{j,l}^{(n)} = &\left[\frac{2\pi n - 4\epsilon_0}{\Delta_{j,l} + 4\epsilon_0},\frac{2\pi n + 4\epsilon_0}{\Delta_{j,l} - 4\epsilon_0}\right].
\end{align}
with length
\begin{equation}
    \left|R_{j,l}^{(n)}\right| = \frac{8\epsilon_0(2n\pi+ \Delta_{j,l})}{\Delta_{j,l}^2-16\epsilon_0^2}.
\end{equation}
We can then bound
\begin{align}
    \left|R_{j,l}\right|&\leq \left|\bigcup_{n=n_{\min}}^{n_{\max}}R^{(n)}_{j,l}\right| \leq \sum_{n=n_{\min}}^{n_{\max}}\left|R^{(n)}_{j,l}\right| \notag \\
    &=\frac{8\epsilon_0}{\Delta_{j,l}^2-16\epsilon_0^2}\bigg[\pi (n_{\max}^2-n_{\min}^2)\nonumber\\
    &\qquad+
    (\Delta_{j,l}+\pi)
    (n_{\max}-n_{\min})\nonumber\\
    &\qquad+
    2\pi n_{\min}+\Delta_{j,l}\bigg].
\label{eq:R-upper}
\end{align}
Here, $n_{\max}=n_{\rm max}(j,l)$ and $n_{\min}=n_{\min}(j,l)$ are the largest and smallest indices of sets $R^{(n)}_{j,l}$ in Eq.~\eqref{def:rjln-2} for which $k_{1} \in [3n_{\phi},3 n_{\phi}+1]$. Finding the minimal and the maximal value for $n$ in Eq.~\eqref{def:rjln-2} given the bounds on $k_1$ gives
\begin{multline}
  \frac{3n_{\phi}(\Delta_{j,l} - 4\epsilon_0)-4\epsilon_0}{2\pi} \leq n_{\rm min} \leq n_{\max}\\
  \leq \frac{(3n_{\phi}+1)(\Delta_{j,l} + 4\epsilon_0)+4\epsilon_0}{2\pi}.
\end{multline}
As there are $n_{\phi}$ phases, the length of the total forbidden region $\cup_{j,l}R_{j,l}$ is bounded from above by $\frac{n_{\phi}^2}{2} \left|R_{j,l}\right|$.  By plugging the bound for $\epsilon_0$ in Eq.~\eqref{eq:eps-choice} and the bound for $\Delta_{j,l}$ in Eq.~\eqref{eq:delbound} together into Eq.~\eqref{eq:R-upper}, one can verify that
\begin{widetext}
{\scriptsize
\begin{equation}
    |R_{j,l}| \frac{n_{\phi}^2}{2} \leq \frac{-16-168 n_{\phi}-621 n_{\phi}^2+552 n_{\phi}^4 + 14400 n_{\phi}^5 + 44550 n_{\phi}^6 + 40500 n_{\phi}^7 + 73125 n_{\phi}^8 + 
 270000 n_{\phi}^9 + 202500 n_{\phi}^{10}}{450 n_{\phi}^3(-4-9n_{\phi}-100 n_{\phi}^3-150 n_{\phi}^4+1875 n_{\phi}^7)},
\end{equation}}
\end{widetext}
which can be verified to be less than 0.5 for all $n_{\phi}$. Hence a random choice for $k_1$ in the interval $[3 n_{\phi},3 n_{\phi}+1]$ gives at least a 50\% chance to not land in the forbidden region.
\qed

\section{Range of shifted phase estimates}
\label{app:shift}

In this Appendix we prove that when the unitary $U$ is shifted as in step \ref{step:shift} of Alg.~\ref{alg:adaptive}, and as long as the output of the phase extraction subroutine (Alg.~\ref{alg:phase_extraction}) meets the promises given in Lem.~\ref{lem:phase_extraction_promises}, all phase estimates of $U^{k_d}$ will lie in the region for which Lem.~\ref{lem:normscaling} holds.
This allows us to invoke Lem.~\ref{lem:normscaling} as required during Lem.~\ref{lem:phase_matching_checks} and Lem.~\ref{lem:phase_matching_with_failures}. Note that if we were to shift the spectrum such that the middle of the largest gap would sit at 0, we would do $U \rightarrow U e^{-i \zeta}$. However, the `stay-away-from-the-boundary' condition of Lem. \ref{lem:normscaling} is not symmetric, hence we shift by a different amount which also depends on the error $\epsilon_0$ in the phase estimates. 

\begin{lem}
\label{lem:shifting_unitary}
Let $\{\phi_j\}$ be the list of eigenphases of unitary $U$, and let $n_\phi$, $\{\tilde\phi_l^{(0)}\}$, $\zeta$, $d_\zeta$ be as defined in steps \ref{step:order0} and \ref{step:shift} of Alg. \ref{alg:adaptive}. Assume:
\begin{itemize}
    \item (Assumption 1a) For every phase $\phi_j$, there exists an estimate $\tilde{\phi}_l^{(0)}$ such that $|\phi_j-\tilde{\phi}_l^{(0)}|_T\leq 2\epsilon_0$.
    \item (Assumption 1b) For every estimate $\tilde{\phi}_l^{(0)}$, there exists a phase $\phi_j$ such that $|\phi_j-\tilde{\phi}_l^{(0)}|_T \leq 2\epsilon_0$.
    \end{itemize}
Then for all $k \geq 3n_\phi$, for the estimated shifted eigenphases $\{\tilde{\varphi}_l^{(0)}\}$ it holds that
\begin{equation}
     \frac{\pi}{k}+ 16\epsilon_0\leq  \tilde{\varphi}_l^{(0)} \leq \frac{\pi(2\lfloor k\rfloor-1)}{k}-16\epsilon_0.
     \label{eq:estim}
\end{equation}
which implies Eq.~\eqref{eq:normscaling} for $\tilde{\varphi}_l^{(0)}=\phi$. In addition, the eigenphases $\{\varphi_j\}$ of the shifted unitary $Ue^{-i(\zeta+d_\zeta/2-8\epsilon_0)}$ satisfy
\begin{equation}
    \frac{\pi}{k}+14\epsilon_0\leq \varphi_j \leq \frac{\pi(2\lfloor k\rfloor-1)}{k}-14\epsilon_0.
    \label{eq:realphases}
\end{equation}
which again implies Eq.~\eqref{eq:normscaling} for $\varphi_j=\phi$. 
\end{lem}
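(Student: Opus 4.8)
The plan is to treat the shift $U \mapsto U e^{-is}$ with $s = \zeta + \tfrac12 d_\zeta - 8\epsilon_0$ as a rigid translation $x \mapsto x - s \bmod 2\pi$ of the circle, which is an isometry for $|\cdot|_T$, and to track where the estimates land. By the definition of $\zeta$ and $d_\zeta$ in step \ref{step:shift}, the open arc of length $2d_\zeta$ centred at $\zeta$ contains no estimate, so every $\tilde\phi_l^{(0)}$ lies in the complementary closed arc running from $\zeta + d_\zeta$ to $\zeta - d_\zeta + 2\pi$. Translating by $s$ sends the two endpoints of this arc to $d_\zeta/2 + 8\epsilon_0$ and $2\pi - 3d_\zeta/2 + 8\epsilon_0$, so I would obtain
\begin{equation}
\tfrac12 d_\zeta + 8\epsilon_0 \;\leq\; \tilde\varphi_l^{(0)} \;\leq\; 2\pi - \tfrac32 d_\zeta + 8\epsilon_0 .
\end{equation}
The asymmetric choice of $s$ --- placing the gap midpoint at $-d_\zeta/2$ rather than at $0$, plus the extra $8\epsilon_0$ --- is precisely what balances the two boundary constraints below, since the valid window of Lemma~\ref{lem:normscaling} is not symmetric about $\pi$.

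It then remains to check that this containing interval sits inside $[\pi/k + 16\epsilon_0,\; \pi(2\lfloor k\rfloor-1)/k - 16\epsilon_0]$. For the lower boundary I need $\tfrac12 d_\zeta + 8\epsilon_0 \geq \pi/k + 16\epsilon_0$. For the upper boundary, using $\lfloor k\rfloor \geq k-1$ gives $\pi(2\lfloor k\rfloor - 1)/k \geq 2\pi - 3\pi/k$, so it suffices that $2\pi - \tfrac32 d_\zeta + 8\epsilon_0 \leq 2\pi - 3\pi/k - 16\epsilon_0$. Both inequalities collapse to the single condition
\begin{equation}
d_\zeta \;\geq\; \frac{2\pi}{k} + 16\epsilon_0 . \label{eq:plan-key}
\end{equation}
This is the heart of the argument and the step I expect to carry the weight. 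I would establish it by a pigeonhole bound: since the algorithm did not exit at step \ref{step:order0}, there are at most $n_\phi$ estimates on the circle, hence their largest gap has width at least $2\pi/n_\phi$ and so $d_\zeta \geq \pi/n_\phi$. Combining with $k \geq 3n_\phi$ reduces \eqref{eq:plan-key} to $\pi/n_\phi \geq 2\pi/(3n_\phi) + 16\epsilon_0$, i.e. $\epsilon_0 \leq \pi/(48 n_\phi)$, which is implied by the hypothesis $\epsilon_0 \leq 2\pi/(300 n_\phi^4)$ in Eq.~\eqref{eq:eps-n0} for all $n_\phi \geq 1$. This proves Eq.~\eqref{eq:estim}, and since the resulting window is strictly contained in the range \eqref{eq:normscaling_condition}, Lemma~\ref{lem:normscaling} applies to each $\tilde\varphi_l^{(0)}$.

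Finally, for the real shifted phases $\{\varphi_j\}$ I would invoke Assumption 1a: each $\phi_j$ lies within $2\epsilon_0$ (in $|\cdot|_T$) of some estimate $\tilde\phi_l^{(0)}$, and since the translation is an isometry the same holds after shifting, so $\varphi_j = \tilde\varphi_l^{(0)} + \delta \bmod 2\pi$ with $|\delta| \leq 2\epsilon_0$. Because Eq.~\eqref{eq:estim} keeps every $\tilde\varphi_l^{(0)}$ at distance at least $16\epsilon_0 > 2\epsilon_0$ from both boundaries, no wrap-around across $0$ or $2\pi$ can occur, and perturbing by $\delta$ only erodes each buffer by $2\epsilon_0$; this yields Eq.~\eqref{eq:realphases} with the $14\epsilon_0$ buffers and again places $\varphi_j$ inside the range \eqref{eq:normscaling_condition}. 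The only genuinely delicate point throughout is the modular bookkeeping of the first paragraph --- getting the two endpoints $d_\zeta/2 + 8\epsilon_0$ and $2\pi - 3d_\zeta/2 + 8\epsilon_0$ correct and confirming that the asymmetric shift makes the lower and upper constraints coincide, so that the single pigeonhole bound \eqref{eq:plan-key} suffices.
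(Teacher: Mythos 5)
Your proof is correct and takes essentially the same approach as the paper's: the same tracking of the shifted estimates into the interval $[d_\zeta/2 + 8\epsilon_0,\; 2\pi - 3d_\zeta/2 + 8\epsilon_0]$, the same pigeonhole bound $d_\zeta \geq \pi/n_\phi$ combined with $k \geq 3n_\phi$ and $\epsilon_0 \leq \pi/(48 n_\phi)$, and the same $2\epsilon_0$-perturbation step to pass from the estimate bounds (Eq.~\eqref{eq:estim}) to the true shifted phases (Eq.~\eqref{eq:realphases}). The only cosmetic difference is that you keep $d_\zeta$ symbolic and observe that both boundary constraints collapse to the single condition $d_\zeta \geq 2\pi/k + 16\epsilon_0$ before invoking the pigeonhole bound, whereas the paper substitutes $d_\zeta \geq \pi/n_\phi$ into the interval first and then verifies the two boundaries separately.
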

\begin{proof}
Let 
\begin{equation}
    \tilde\varphi^{(0)}_l = \left(\tilde\phi_l^{(0)}-\zeta-\frac{d_\zeta}{2}+8\epsilon_0\right)\textrm{ mod }2\pi.
\end{equation}
Let us first show that
\begin{equation}
\label{eq:shifted_estimates_bound}
    \tilde\varphi^{(0)}_l \in \left[\frac{\pi}{2n_\phi}+8\epsilon_0, 2\pi-\frac{3\pi}{2n_\phi}+8\epsilon_0\right].
\end{equation}
By definition of $\zeta$ (as the midway point in the largest gap) and $d_{\zeta}$ (as half the largest gap) we have
{\small
\begin{equation}
    \left(\tilde\phi_l^{(0)}-\zeta-\frac{d_\zeta}{2}\right)\textrm{ mod }2\pi \in \left[\frac{d_\zeta}{2}, 2\pi-\frac{3d_\zeta}{2}\right].
\end{equation}}
We have $d_\zeta \geq \pi/n_\phi$ (with equality corresponding to $n_\phi$ uniformly distributed estimates). By Eq.~\eqref{eq:eps-n0} it follows that 
\begin{equation}
\label{eq:eps-other-bound}
    \epsilon_0 \leq \frac{\pi}{48n_\phi},
\end{equation} and thus $\epsilon_0 < \frac{d_\zeta}{16}$, leading to Eq.~\eqref{eq:shifted_estimates_bound}. 
By the assumptions the shifted phases $\varphi_j$ lie within $2\epsilon_0$ from the estimates $\tilde\varphi^{(0)}_l$.
Thus for each $\varphi_j$ there exists a $\tilde\varphi^{(0)}_l$ such that
\begin{align}
    \varphi_j - 14\epsilon_0 &\geq \tilde\varphi^{(0)}_l - 16\epsilon_0 \notag \\&
    \geq \frac{\pi}{2n_\phi} - 8\epsilon_0 \notag \\
    &\geq \frac{\pi}{2n_\phi}-\frac{8\pi}{48n_\phi} \notag \\
    &= \frac{\pi}{3n_\phi}
    \geq \frac{\pi}{k}.
\end{align}
and 
\begin{align}
    \varphi_j + 14\epsilon_0 &\leq \tilde\varphi^{(0)}_l + 16\epsilon_0 \notag \\
    &\leq 2\pi-3\left(\frac{\pi}{2n_\phi} - 8\epsilon_0\right)\notag  \\
    &\leq 2\pi - \frac{3\pi}{k} \notag \\
    &\leq \frac{\pi(2\lfloor k\rfloor-1)}{k}.
\end{align}
where we have used Eq.~\eqref{eq:shifted_estimates_bound}, Eq.~\eqref{eq:eps-other-bound}, $k \geq 3 n_{\phi}$ and $\lfloor k \rfloor > k-1$. This implies Eq.~\eqref{eq:realphases} and also Eq.~\eqref{eq:estim}.
\end{proof}

\end{document}